\newcommand{\Section}[1]%
{\section{#1}\setcounter{equation}{0}%
\setcounter{theorem}{0}}
\newtheorem{theorem}{Theorem}
\newtheorem{lemma}[theorem]{Lemma}
\newtheorem{coro}[theorem]{Corollary}
\def\re{\mathbb{R}}
\def\co{\mathbb{C}}
\def\ze{\mathbb{Z}}
\newenvironment{proof}[1]%
{\par\noindent{\em #1:\ }}%
{~\rule{2mm}{2mm}\par\bigskip}
\begin{document}
\newpage\thispagestyle{empty}
{\topskip 2cm
\begin{center}
{\Large\bf Maximum Spontaneous Magnetization and Nambu-Goldstone Mode\\} 
\bigskip\bigskip
{\Large Tohru Koma\footnote{\small \it Department of Physics, Gakushuin University, Mejiro, Toshima-ku, Tokyo 171-8588, JAPAN,
{\small\tt e-mail: tohru.koma@gakushuin.ac.jp}}
\\}
\end{center}
\vfil
\noindent
{\bf Abstract:} We study quantum antiferromagnetic Heisenberg models on a hypercubic lattice. 
We prove the following three theorems without any assumption: 
(i) The spontaneous magnetization which is obtained by applying the infinitesimally weak 
symmetry breaking field is equal to the maximum spontaneous magnetization at zero or non-zero low temperatures. 
(ii) When the spontaneous magnetization is non-vanishing at zero temperature, 
there appears a gapless excitation, Nambu-Goldstone mode, above an infinite-volume pure ground state. 
(iii) When the spontaneous magnetization is non-vanishing at zero or non-zero low temperatures,  
the transverse correlation in the infinite-volume limit exhibits a Nambu-Goldstone-type slow decay.
\par
\noindent
\bigskip
\hrule
\bigskip
\vfil}

\Section{Introduction}
\label{Intro}

For quantum many-body systems, the excitation spectrum for low energy states above the ground state 
has been often computed by using trial wavefunctions within Bijl-Feynman single-mode approximation \cite{Bijl,Feynman,{Stringari}}. 
By relying on the method, Momoi \cite{Momoi} obtained a spin-wave spectrum above a symmetry-breaking ground state 
with a N\'eel order in Heisenberg antiferromagnets. 
He also evaluated \cite{Momoi2} the decay of the transverse spin-spin correlation which is related to 
Nambu-Goldstone mode \cite{Nambu,NJL,Goldstone,GSW}. 
His results agree with the expected ones from Nambu-Goldstone argument for continuous symmetry 
breaking. However, it is well known \cite{KomaTasaki2,Tasaki} that there appear many low-lying eigenstates whose excitation energy is 
very close to the energy of the symmetric ground state of the finite-volume Hamiltonian, and 
that these low-lying eigenstates yield a set of symmetry-breaking ground states in the infinite-volume limit 
by forming linear combinations of the low-lying eigenstates and the symmetric ground state. 
Therefore, in order to obtain the true spectrum of low-energy excitations above an infinite-volume pure ground state, 
we have to distinguish them from the low-lying eigenstates which yield a set of infinite-volume ground states. 

In this paper, we improve Momoi's argument. In consequence, we prove the existence of the Nambu-Goldstone mode 
above an infinite-volume pure ground state in quantum antiferromagnetic Heisenberg models on a hypercubic lattice. 
We also prove that the transverse spin-spin correlation, which is related to 
Nambu-Goldstone mode, exhibits a certain slow decay when the spontaneous magnetization 
exhibits the non-vanishing maximum value at zero or non-zero low temperatures.   
 
In the next section, we present the precise definition of 
the Hamiltonian of the quantum antiferromagnetic Heisenberg models, and describe our main theorems. 
The rest of Sections are devoted to the proofs of the main theorems as follows:  
The maximum spontaneous magnetization at zero and non-zero temperatures is treated 
in Sections~\ref{sec:SMMaxZero} and \ref{Sec:SponMagFiniteT}, respectively. 
The existence of the Nambu-Goldstone mode is proved in Sec.~\ref{sec:NambuGoldstone}. 
An alternative proof of the existence of the mode is given in Sec.~\ref{sec:AltNambuGoldstone}.  
In Sec.~\ref{sec:TransCorr}, we prove that 
the transverse spin-spin correlation exhibits a Nambu-Goldstone-type slow decay. 
Appendices~\ref{App::BogolyIneq}-\ref{appen:chibound} are devoted to technical estimates.  

\Section{Models and Main Results}

\subsection{Quantum Heisenberg antiferromagnets}

In the present paper, we consider quantum Heisenberg antiferromagnets 
which have reflection positivity \cite{DLS,KLS}. 
The extension of our method to anisotropic antiferromagnets is relatively straightforward \cite{KLS2,KuboKishi}. 
More precisely, we can treat the Hamiltonian $H_{0,{\rm p}}^{(\Lambda)}$ of (\ref{H0}) below with an additional Ising term. 

Let $\Gamma$ be a finite subset of the $d$-dimensional hypercubic lattice $\ze^d$, i.e., $\Gamma\subset\mathbb{Z}^d$,  
with $d\ge 1$. For each site $x=(x^{(1)},x^{(2)},\ldots,x^{(d)})\in\Gamma$, 
we associate three component quantum spin operator ${\bf S}_x=(S_x^{(1)},S_x^{(2)},S_x^{(3)})$ 
with magnitude of spin, $S=1/2,1,3/2,2,\ldots$. More precisely, the spin operators, $S_x^{(1)}, S_x^{(2)}, S_x^{(3)}$, 
are $(2S+1)\times(2S+1)$ matrices at the site $x$. They satisfy the commutation relations, 
$$
[S_x^{(1)},S_x^{(2)}]=iS_x^{(3)}, \quad [S_x^{(2)},S_x^{(3)}]=iS_x^{(1)}, 
\quad \mbox{and} \quad [S_x^{(3)},S_x^{(1)}]=iS_x^{(2)},
$$
and $(S_x^{(1)})^2+(S_x^{(2)})^2+(S_x^{(3)})^2=S(S+1)$ for $x\in\Gamma$. 
For the finite lattice $\Gamma$, the whole Hilbert space is given by 
$$
\mathfrak{H}_\Gamma=\bigotimes_{x\in\Gamma} \co^{2S+1}.
$$
More generally, the algebra of observables on $\mathfrak{H}_\Gamma$ is given by 
$$
\mathfrak{A}_\Gamma:=\bigotimes_{x\in\Gamma}M_{2S+1}(\co),
$$
where $M_{2S+1}(\co)$ is the algebra of $(2S+1)\times(2S+1)$ complex matrices. 
When two finite lattices, $\Gamma_1$ and $\Gamma_2$, satisfy $\Gamma_1\subset\Gamma_2$, 
the algebra $\mathfrak{A}_{\Gamma_1}$ is embedded in $\mathfrak{A}_{\Gamma_2}$ by 
the tensor product $\mathfrak{A}_{\Gamma_1}\otimes I_{\Gamma_2\backslash\Gamma_1}\subset 
\mathfrak{A}_{\Gamma_2}$ with the identity $I_{\Gamma_2\backslash\Gamma_1}$. 
The local algebra is given by 
$$
\mathfrak{A}_{\rm loc}=\bigcup_{\Gamma\subset\ze^d:|\Gamma|<\infty}\mathfrak{A}_{\Gamma},
$$
where $|\Gamma|$ is the number of the sites in the finite lattice $\Gamma$. 
The quasi-local algebra is defined by the completion of the local algebra $\mathfrak{A}_{\rm loc}$ 
in the sense of the operator-norm topology. 

Consider a $d$-dimensional finite hypercubic lattice, 
\begin{equation}
\label{Lambda}
\Lambda:=\{-L+1,-L+2,\ldots,-1,0,1,\ldots,L-1,L\}^d\subset\mathbb{Z}^d,
\end{equation}
with a large positive integer $L$ and $d\ge 1$. 
The Hamiltonian $H_{\rm p}^{(\Lambda)}(B)$ of the Heisenberg antiferromagnet on the lattice $\Lambda$ is given by 
\begin{equation}
\label{H(B)}
H_{\rm p}^{(\Lambda)}(B)=H_{0,{\rm p}}^{(\Lambda)}-BO^{(\Lambda)},
\end{equation}
where the first term in the right-hand side is the Hamiltonian of the nearest neighbor spin-spin 
antiferromagnetic interactions, 
\begin{equation}
\label{H0}
H_{0,{\rm p}}^{(\Lambda)}:=\sum_{\{x,y\}\subset\Lambda:\;|x-y|=1}{\bf S}_x\cdot{\bf S}_y,
\end{equation}
and the second term is the potential due to the external magnetic field $B\in\re$ with the order parameter, 
$$
O^{(\Lambda)}:=\sum_{x\in\Lambda}(-1)^{x^{(1)}+x^{(2)}+\cdots+x^{(d)}}S_x^{(3)}.
$$
Here, the subscript ${\rm p}$ of the Hamiltonian $H_{0,{\rm p}}^{(\Lambda)}$ denotes the periodic 
boundary condition.  

\subsection{Zero temperature}
\label{ZeroTempThms}

We first describe our main results for the ground states. 
Let $\Phi_0^{(\Lambda)}(B)$ be a ground-state vector 
of the Hamiltonian $H_{\rm p}^{(\Lambda)}(B)$. We can take the vector $\Phi_0^{(\Lambda)}(B)$ 
to be translationally invariant with period 2 because of the periodic boundary condition. 
The infinite-volume ground state is given by 
\begin{equation}
\label{omegaPhi0B}
\omega_{\Phi_0(B)}(\cdots):={\rm weak}^\ast\mbox{-}\lim_{\Lambda\nearrow\ze^d}
\langle \Phi_0^{(\Lambda)}(B),(\cdots)\Phi_0^{(\Lambda)}(B)\rangle. 
\end{equation}
Here, we can take a suitable sequence of finite lattices $\Lambda$ going to $\ze^d$ 
so that the expectation value converges to a linear functional for the set of 
the quasi-local algebra. Actually, it is well know that, 
for any given uniformly bounded sequence $\{\omega_n\}_n$ of linear functionals on a separable normed linear space, 
there exists a subsequence $\{\omega_{n_k}\}_k$ of the sequence $\{\omega_n\}_n$ such that 
$\omega_{n_k}$ converges to a linear functional $\omega_\infty$ on the space. 
(See, e.g., Sec.~IV.5 in the book~\cite{ReedSimon}.) Throughout the present paper, we use weak$^\ast$-limit 
in this sense. Of course, the resulting linear functional $\omega_\infty$ may be depend on the subsequence.
Therefore, in general, there may appear many infinite-volume ground states $\omega_{\Phi_0(B)}=\omega_{\Phi_0(B)}(\cdots)$, 
depending on the sequence of the finite lattices $\Lambda$.

Write 
\begin{equation}
\label{mPhi0B}
m[\Phi_0(B)]:=\frac{1}{|\Lambda|}\omega_{\Phi_0(B)}(O^{(\Lambda)}). 
\end{equation}
Clearly, this quantity does not depend on 
the even side length $2L$ of the hypercubic lattice $\Lambda$ of (\ref{Lambda}) 
because of the translational invariance, and one has  
\begin{equation}
\label{mPhi0Bexp}
m[\Phi_0(B)]=\lim_{\Lambda\nearrow\ze^d}\frac{1}{|\Lambda|}
\langle \Phi_0^{(\Lambda)}(B),O^{(\Lambda)}\Phi_0^{(\Lambda)}(B)\rangle,
\end{equation}
where we choose the sequence of the finite lattices $\Lambda$ to be the same as in (\ref{omegaPhi0B}). 
Therefore, the spontaneous magnetization is formally given by 
\begin{equation}
\label{ms0}
\lim_{B \searrow 0}m[\Phi_0(B)]=\lim_{B\searrow 0}
\lim_{\Lambda\nearrow\ze^d}\frac{1}{|\Lambda|}
\langle \Phi_0^{(\Lambda)}(B),O^{(\Lambda)}\Phi_0^{(\Lambda)}(B)\rangle
\end{equation}
for the ground state $\Phi_0^{(\Lambda)}(B)$ in the infinite-volume limit.

In order to make the definition of the spontaneous magnetization more precise, 
we recall the notion of infinite-volume ground states. 
Let $\omega$ be a positive linear functional for the quasi-local algebra. Then, if $\omega$ satisfies \cite{BR}
\begin{equation}
\label{defGS}
\lim_{\Lambda\nearrow \ze^d}\omega(a^\ast[H_{0,{\rm p}}^{(\Lambda)},a])\ge 0
\end{equation}
for any $a\in\mathfrak{A}_{\rm loc}$, we say that $\omega$ is an infinite-volume ground state in the case of 
the external magnetic field $B=0$. The physical meaning of the inequality 
is that any local perturbation cost non-negative energy. 
Clearly, the above limit exists for a fixed $a\in\frak{A}_{\rm loc}$, and the condition does not depend on 
the boundary condition of the Hamiltonian. Therefore, instead of the periodic boundary condition, 
one can impose any boundary condition. Correspondingly, there may appear many types of infinite-volume ground states. 

Write 
\begin{equation}
\label{omegaPhi00}
\omega_{\Phi_0(+0)}:={\rm weak}^\ast\mbox{-}\lim_{B\searrow 0}\omega_{\Phi_0(B)},
\end{equation}
where we take a suitable sequence of $B$ going to zero  
so that the expectation value converges to a linear functional for the set of 
the quasi-local algebra, again. Then, one can easily check that the state $\omega_{\Phi_0(+0)}$ fulfills 
the condition (\ref{defGS}) for infinite-volume ground states, 
and hence it is the corresponding infinite-volume ground state.

For an infinite-volume ground state $\omega$, the supremum of the spontaneous magnetization is given by 
$$
\lim_{N\nearrow\infty}\sup_{|\Lambda|\ge N}\frac{1}{|\Lambda|}\omega(O^{(\Lambda)}).
$$
Here, $\Lambda$ is the hypercubic lattice of (\ref{Lambda}) with the even side length $2L$. 
In addition, if $\omega$ is translationally invariant with period 2, then we say that 
$\omega$ is a translationally invariant infinite-volume ground state. 
For the translationally invariant state $\omega$, the spontaneous magnetization is defined by 
$$
\lim_{\Lambda\nearrow\ze^d}\frac{1}{|\Lambda|}\omega(O^{(\Lambda)})
$$ 
with the hypercubic lattice $\Lambda$ of (\ref{Lambda}) with the even side length $2L$. 
Because of the translational invariance of $\omega$, the limit is equal to $\omega(O^{(\Lambda)})/|\Lambda|$.  
Similarly, by relying on the translational invariance 
of the state $\omega_{\Phi_0(+0)}$ of (\ref{omegaPhi00}), we define 
\begin{equation}
\label{ms}
m_{\rm s}:=\lim_{\Lambda\nearrow\ze^d}\frac{1}{|\Lambda|}\omega_{\Phi_0(+0)}(O^{(\Lambda)}).
\end{equation}
This is equal to $\omega_{\Phi_0(+0)}(O^{(\Lambda)})/|\Lambda|$, and is equal to the quantity of (\ref{ms0}) 
if the same sequence as that in (\ref{omegaPhi00}) is taken in the limit $B\searrow 0$.
This quantity $m_{\rm s}$ is nothing but the spontaneous magnetization for the infinite-volume ground state 
$\omega_{\Phi_0(+0)}$. The spontaneous magnetization $m_{\rm s}$ may depend on the two sequences in 
the two weak$^\ast$-limits, $\Lambda\nearrow\ze^d$ and $B\searrow 0$, because the ground state $\omega_{\Phi_0(+0)}$ 
is constructed by these sequences.

\begin{theorem}
\label{thm:sponmag}
The value $m_{\rm s}$ of (\ref{ms}) is independent of the choices of the sequences in the two 
weak$^\ast$-limits, $\Lambda\nearrow\ze^d$ and $B\searrow 0$, in  (\ref{omegaPhi0B}) and (\ref{omegaPhi00}), 
and hence the value $m_{\rm s}$ of the spontaneous magnetization is uniquely determined. 
Further, $m_{\rm s}$ is equal to the supremum of the spontaneous magnetizations  
over all of the infinite-volume ground states.    
\end{theorem}

Since the supremum $m_{\rm s}$ of the spontaneous magnetization is realized 
for the infinite-volume ground state $\omega_{\Phi_0(+0)}$ of (\ref{omegaPhi00}), 
we call $m_{\rm s}$ the maximum spontaneous magnetization. 
Of course, one can expect that $m_{\rm s}$ takes the maximum value over all the infinite-volume 
ground states because the weak external magnetic field in (\ref{omegaPhi00}) forces the whole system to be magnetized 
in the favorable direction. However, as far as we know, this statement had not yet been proved true.
The proof of Theorem~\ref{thm:sponmag} is given in Sec.~\ref{sec:SMMaxZero}. 

We write 
\begin{equation}
\label{Grstate}
\omega_B^{(\Lambda)}(\cdots):=\frac{1}{q^{(\Lambda)}(B)}{\rm Tr}\;[P_0^{(\Lambda)}(B)(\cdots)]
\end{equation}
for the expectation value for the ground state of the Hamiltonian $H_{\rm p}^{(\Lambda)}(B)$, 
where $P_0^{(\Lambda)}(B)$ is the projection onto the sector of the ground states and 
$q^{(\Lambda)}(B)$ is the degeneracy of the ground states, i.e., $q^{(\Lambda)}(B)={\rm Tr}\;P_0^{(\Lambda)}(B)$. 
We also write 
\begin{equation}
\label{omega0}
\omega_0(\cdots):={\rm weak}^\ast\mbox{-}\lim_{B\searrow 0}{\rm weak}^\ast\mbox{-}\lim_{\Lambda \nearrow\ze^d}
\omega_B^{(\Lambda)}(\cdots)
\end{equation}
for the infinite-volume ground state with zero magnetic field $B=0$. 
Here, if necessary, we take suitable sequences with respect to $\Lambda$ 
and $B$ in the two weak$^\ast$ limits so that the expectation value converges to a linear functional in the same way 
as in (\ref{omegaPhi0B}). From Theorem~\ref{thm:sponmag}, 
this state $\omega_0$ also exhibits the maximum spontaneous magnetization $m_{\rm s}$. Therefore, if an infinite-volume 
ground state exhibits a non-vanishing spontaneous magnetization, then the state $\omega_0$ shows 
the non-vanishing maximum spontaneous magnetization $m_{\rm s}$. 

Next, in order to describe our theorem about the Nambu-Goldstone mode, we recall the notion of 
pure states \cite{BR1}: 
A state is called pure whenever it cannot be expressed as a convex combination of other states. 
In order to prove the existence of the corresponding gapless mode above the pure states, we rely on 
the same idea as in the condition (\ref{defGS}) for the infinite-volume ground states. 
Namely, we construct a low energy excitation above a ground state by using a local perturbation $a$. 
However, the state which is obtained by simply exciting the ground state by a local operator may 
contain a component in the sector of the ground states. In order to eliminate the contribution of 
the ground states, we introduce a spectral cutoff function $\hat{g}$ as follows:   
Consider the time evolution \cite{BR} of local 
operator $a$, 
$$ 
\tau_{t,B}^{(\Lambda)}(a):=\exp[iH_{\rm p}^{(\Lambda)}(B)t]a\exp[-iH_{\rm p}^{(\Lambda)}(B)t], 
$$
and 
$$
\tau_{\ast g,B}^{(\Lambda)}(a):=\int_{-\infty}^{+\infty}dt\; g(t)\tau_{t,B}^{(\Lambda)}(a),
$$
where the function $g$ is the Fourier transform of the function $\hat{g}\in C_0^\infty(\re)$. 
We choose the function $\hat{g}$ 
to satisfy ${\rm supp}\;\hat{g}\subseteq(0,\gamma)$ with a positive constant $\gamma$.
Write \cite{BR}
$$
\tau_{\ast g,0}(a):=\lim_{B\searrow 0}\lim_{\Lambda\nearrow\ze^d}\tau_{\ast g,B}^{(\Lambda)}(a),
$$
where the sequences in the double limit are taken to be the same as those in the ground state $\omega_0$ of (\ref{omega0}). 
Consider 
\begin{equation}
\label{ExcitationEnergy}
\lim_{\Lambda\nearrow\ze^d}
\frac{\omega_0([\tau_{*g,0}(a)]^\ast[H_{\rm p}^{(\Lambda)}(0),\tau_{*g,0}(a)])}
{\omega_0([\tau_{*g,0}(a)]^\ast\tau_{*g,0}(a))}.
\end{equation}
Since the energy cutoff function $\hat{g}$ satisfies ${\rm supp}\;\hat{g}\subseteq(0,\gamma)$, 
this quantity gives a pure excitation energy above the ground state $\omega_0$.
Therefore, if a local operator $a$ can be chosen to satisfy that the quantity (\ref{ExcitationEnergy}) 
is smaller than any given small energy, then there exists a gapless excitation above a pure infinite-volume 
ground state.

The existence of the Nambu-Goldstone mode follows from the following theorem:  

\begin{theorem}
\label{thm:GSgapless}
Suppose that the spontaneous magnetization is non-vanishing 
for an infinite-volume ground state. 
Then, there exists an infinite-volume pure ground state such that 
there appears a gapless excitation above the ground state and that the state exhibits 
the non-vanishing maximum spontaneous magnetization $m_{\rm s}$.  
\end{theorem}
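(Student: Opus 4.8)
\medskip\noindent\textbf{Proof strategy.}
The plan is to reduce the statement to the infinitesimal-field ground state $\omega_0$ of (\ref{omega0}), pass to a pure component that still carries the full order, and then obtain gaplessness by a Goldstone-type argument built on the conservation of $\sum_x S_x^{(1)}$. By hypothesis some infinite-volume ground state has non-vanishing spontaneous magnetization, hence non-vanishing maximum spontaneous magnetization, so Theorem~\ref{thm:sponmag} forces $m_{\rm s}>0$ and the translationally invariant (period $2$) infinite-volume ground state $\omega_0$ of (\ref{omega0}) satisfies $\omega_0(O^{(\Lambda)})=m_{\rm s}|\Lambda|$ for every $\Lambda$. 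I would then write $\omega_0=\int\omega_\xi\,d\mu(\xi)$ as an integral of pure, translationally invariant infinite-volume ground states; because $m_{\rm s}$ is, by Theorem~\ref{thm:sponmag}, the \emph{maximum} spontaneous magnetization over all infinite-volume ground states while $\int\big[\omega_\xi(O^{(\Lambda)})/|\Lambda|\big]\,d\mu(\xi)=m_{\rm s}$ with each integrand bounded above by $m_{\rm s}$, one concludes that $\omega_\xi(O^{(\Lambda)})/|\Lambda|=m_{\rm s}$ for all $\Lambda$ and $\mu$-almost every $\xi$. Fixing one such component $\omega:=\omega_\xi$, with GNS triple $(\mathcal H_\omega,\pi_\omega,\Omega_\omega)$, GNS Hamiltonian $H_\omega\ge0$, $H_\omega\Omega_\omega=0$ and irreducible $\pi_\omega$, it then suffices to show that $H_\omega$ has no spectral gap above $\Omega_\omega$.

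The heart of the proof is a contradiction argument. Suppose $H_\omega\ge\gamma\,(1-|\Omega_\omega\rangle\langle\Omega_\omega|)$ for some $\gamma>0$, so that $\Omega_\omega$ is, up to phase, the unique ground vector. The $SU(2)$ invariance of $H_{0,{\rm p}}^{(\Lambda)}$ makes $\sum_x S_x^{(1)}$ conserved, and an elementary computation gives $\big[\sum_x S_x^{(1)},\widetilde O^{(\Lambda)}\big]=i\,O^{(\Lambda)}$ with $\widetilde O^{(\Lambda)}:=\sum_x(-1)^{x^{(1)}+\cdots+x^{(d)}}S_x^{(2)}$. Choosing a smooth profile $\chi$ equal to $1$ on $\{|x|\le R/2\}$ and supported in $\{|x|\le R\}$, I would set $Q_R:=\sum_x\chi(x/R)S_x^{(1)}$, $\widetilde O_R:=\sum_{|x|\le R/2}(-1)^{x^{(1)}+\cdots+x^{(d)}}S_x^{(2)}$, $O_R:=\sum_{|x|\le R/2}(-1)^{x^{(1)}+\cdots+x^{(d)}}S_x^{(3)}$, so that $[\,Q_R,\widetilde O_R\,]=i\,O_R$. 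Two bounds then drive everything: the zero-temperature Bogoliubov-type inequality of Appendix~\ref{App::BogolyIneq}, which follows from $H_\omega\pi_\omega(A)\Omega_\omega=\pi_\omega\big([H_{0,{\rm p}}^{(\Lambda)},A]\big)\Omega_\omega$ and the assumed gap, namely ${\rm Var}_\omega(A)\le\frac{1}{2\gamma}\big\|[\,A,[H_{0,{\rm p}}^{(\Lambda)},A]\,]\big\|$ for self-adjoint local $A$; and the Cauchy--Schwarz estimate $|\omega([\,A,B\,])|\le2\sqrt{{\rm Var}_\omega(A)\,{\rm Var}_\omega(B)}$ for self-adjoint $A,B$. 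I would feed in the sum-rule estimates $\big\|[\,Q_R,[H_{0,{\rm p}}^{(\Lambda)},Q_R]\,]\big\|=O(R^{d-2})$ — small because $Q_R$ is nearly conserved, so this double commutator carries two factors of $O(1/R)$ — and $\big\|[\,\widetilde O_R,[H_{0,{\rm p}}^{(\Lambda)},\widetilde O_R]\,]\big\|=O(R^{d})$, together with the lower bound $|\omega(O_R)|\ge c\,m_{\rm s}R^{d}$ coming from the maximal order of $\omega$ (valid for some $c>0$ and all large $R$), to get
$$
c\,m_{\rm s}R^{d}\;\le\;\big|\omega([\,Q_R,\widetilde O_R\,])\big|\;\le\;2\sqrt{{\rm Var}_\omega(Q_R)\,{\rm Var}_\omega(\widetilde O_R)}\;\le\;\frac{C}{\gamma}\,R^{(d-2)/2}R^{d/2}\;=\;\frac{C}{\gamma}\,R^{d-1},
$$
i.e.\ $m_{\rm s}\le C/(c\gamma R)\to0$ as $R\to\infty$, contradicting $m_{\rm s}>0$. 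Hence $H_\omega$ is gapless and $\omega$ is the required pure infinite-volume ground state, carrying a gapless excitation and exhibiting the non-vanishing maximum spontaneous magnetization $m_{\rm s}$. The variance and sum-rule estimates would be collected in Appendices~\ref{App::BogolyIneq}--\ref{appen:chibound}.

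The hard part will be the point stressed in the Introduction: running this argument \emph{above a genuine pure ground state} rather than above the symmetry-breaking tower of low-lying states. In the GNS representation of the mixture $\omega_0$ the Hamiltonian has that whole tower in its kernel, so the step ``gap $\Rightarrow\Omega_\omega$ is the unique ground vector'' fails for $\omega_0$ itself; one must descend to a pure component and make sure that the low-energy vectors $\pi_\omega\big(Q_R-\omega(Q_R)\big)\Omega_\omega$ are honest excitations of $H_\omega$, not vectors contributing to its ground space. Concretely this requires carrying out the decomposition of $\omega_0$ within the set of infinite-volume ground states and showing it compatible with the macroscopic average $O^{(\Lambda)}/|\Lambda|$, so that the pure components inherit both the maximal spontaneous magnetization and period-$2$ translation invariance; once that is in place the scaling estimate above is essentially mechanical. (Reflection positivity, imposed on the model, is used instead if one wants a bare lower bound on the transverse structure factor, and in the companion finite-temperature statement, where no GNS-gap argument is available.)
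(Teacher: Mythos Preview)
Your argument is correct in substance and takes a genuinely different route from the paper's. The paper's proof (Sec.~\ref{sec:NambuGoldstone}) works in finite volume with a symmetry-breaking field and builds an explicit low-energy trial state: the key input is the Kennedy--Lieb--Shastry--type inequality of Lemma~\ref{lem:BogolyIneq} together with the reflection-positivity susceptibility bound (\ref{chibound}), which controls the Duhamel-type factor $\tilde D_B^{(\Lambda)}(C)\le O(R^d)$; combining this with the double-commutator estimate $O(R^{d-2})$ produces a trial state of energy $\le C/(m_{\rm s}^2R)$ above $\omega_0$, and only then does the paper pass to a pure component. Your approach instead descends to a pure component first and argues by contradiction: the elementary gap--variance inequality ${\rm Var}_\omega(A)\le\frac{1}{2\gamma}\,\omega([A,[H,A]])$ applied to both $Q_R$ and $\widetilde O_R$, plus Cauchy--Schwarz on the commutator, replaces the susceptibility bound entirely. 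This is essentially the Landau--Fernando~Perez--Wreszinski mechanism (cf.\ the Remark after Theorem~\ref{thm:GSgapless}) run on a pure state, and it makes no use of reflection positivity for the gaplessness itself. What the paper's approach buys is an explicit quantitative energy bound $O(1/R)$ on a concrete excitation; what yours buys is simplicity and model-independence for the Goldstone step.

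Two remarks. First, your pointer to ``the zero-temperature Bogoliubov-type inequality of Appendix~\ref{App::BogolyIneq}'' is off: that appendix proves the KLS-type inequality (\ref{BogolyIneq}), which involves the susceptibility $\tilde D_B^{(\Lambda)}(C)$ and is precisely what you are \emph{bypassing}; the inequality you actually use is the straightforward ${\rm Var}_\omega(A)\le\frac{1}{2\gamma}\,\omega([A,[H,A]])$, which needs no appendix. Second, the decomposition step you flag is indeed the delicate point and is handled with comparable brevity in the paper (end of Sec.~\ref{ProofTheoremGSgapless}); to make the lower bound $|\omega(O_R)|\ge c\,m_{\rm s}R^d$ clean for your boxes you want period-$2$ translation invariance of the pure component, so it is worth spelling out that the extremal decomposition of $\omega_0$ within the face of ground states yields pure ground states and that, since $\omega_0(O^{(\Lambda)})/|\Lambda|=m_{\rm s}$ for every $\Lambda$ while the maximum over ground states is $m_{\rm s}$ by Theorem~\ref{thm:sponmag}, almost every component attains $m_{\rm s}$ on each box.
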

\medskip

The proof is given in Sec.~\ref{sec:NambuGoldstone}. 
An alternative proof of the existence of the gapless mode is given in Sec.~\ref{sec:AltNambuGoldstone}. 
\medskip 

\noindent
{\it Remark:} In \cite{Wreszinski,LFPW}, the following statement was proved for generic quantum spin systems: 
A non-vanishing spectral gap above a unique infinite-volume ground state in the domain of 
the infinite-volume Hamiltonian implies 
that there is no spontaneous symmetry breaking. However, the sector of the infinite-volume ground 
states may be degenerate. In fact, as mentioned in Introduction, 
there appear many low-lying eigenstates whose excitation energy is 
very close to the energy of the symmetric ground state of the finite-volume Hamiltonian, and 
these low-lying eigenstates yield a set of symmetry-breaking ground states in the infinite-volume limit 
by forming linear combinations of the low-lying eigenstates and the symmetric ground state \cite{KomaTasaki2,Tasaki}. 
It is not clear why such a contribution of many low-lying states yields a unique ground state. 
In \cite{Momoi}, the contribution was not taken into account, too. 
Wreszinski \cite{Wreszinski2} assumed an ergodic property with respect to 
the time evolution for a unique infinite-volume ground state in the domain of the infinite-volume Hamiltonian, 
and proved that there is no spectral gap above the ground state if a spontaneous symmetry breaking occurs. 

We write 
\begin{equation}
\label{AR0}
\mathcal{A}_R:=\frac{1}{|\Omega_R|}\sum_{x\in\Omega_R}(-1)^{x^{(1)}+\cdots+x^{(d)}}S_x^{(2)},
\end{equation}
where $\Omega_R$ is the hypercubic box with the side length $2R+1$ with positive integer $R$.   
\medskip

By improving Momoi's argument \cite{Momoi2}, we obtain:  

\begin{theorem}
\label{theorem:slowdecayZero}
Suppose that the spontaneous magnetization is non-vanishing 
for an infinite-volume ground state in dimensions $d\ge 2$. Then,  
the transverse correlation function $\omega_0( S_x^{(2)}S_y^{(2)})$ in the infinite-volume limit exhibits 
a Nambu-Goldstone-type slow decay. More precisely, we obtain 
\begin{equation}
\frac{|m_{\rm s}|^2}{R^{d-1}}\le {\rm Const.}\omega_0(\mathcal{A}_R^2).
\end{equation}
This bound rules out the possibility of the rapid decay $o(|x-y|^{-(d-1)})$ for the correlation, 
where $o(\varepsilon)$ denotes a quantity $q(\varepsilon)$ such that $q(\varepsilon)/\varepsilon$ is vanishing 
in the limit $\varepsilon\searrow 0$.
\end{theorem}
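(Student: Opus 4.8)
{Sketch of the proof}
The plan is to argue by contradiction, combining a zero-temperature Bogoliubov-type inequality with the $f$-sum rule, in the spirit of Momoi \cite{Momoi2} but now keeping track of the tower of low-lying eigenstates discussed in Introduction. By Theorem~\ref{thm:sponmag} the hypothesis forces the maximum spontaneous magnetization to be non-vanishing, so $m_{\rm s}\ne0$ and the state $\omega_0$ of (\ref{omega0}) realizes it, $\frac{1}{|\Lambda|}\omega_0(O^{(\Lambda)})=m_{\rm s}$. Since $H_{\rm p}^{(\Lambda)}(B)$ is invariant under rotations about the third axis and, for $B\ne0$, under the combined shift-and-spin-flip $x\mapsto x+e_1,\ S^{(3)}\mapsto -S^{(3)}$, the state $\omega_0$ inherits these symmetries; hence $\omega_0(S_x^{(3)})=\sigma_x\,m_{\rm s}$ with $\sigma_x:=(-1)^{x^{(1)}+\cdots+x^{(d)}}$, $\omega_0(S_x^{(1)})=0$, $\omega_0(S_x^{(1)}S_y^{(1)})=\omega_0(S_x^{(2)}S_y^{(2)})$, and the same relations hold for the finite-volume ground state $\omega_B^{(\Lambda)}$ of (\ref{Grstate}) with $m_{\rm s}$ replaced by $m^{(\Lambda)}(B):=\omega_B^{(\Lambda)}(O^{(\Lambda)})/|\Lambda|$. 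For a large integer $\ell$ let $\Lambda_\ell\subset\ze^d$ be a box of linear size $\ell$ about the origin and put
$$
C_\ell:=\sum_{x\in\Lambda_\ell}S_x^{(1)},\qquad A_\ell:=\sum_{x\in\Lambda_\ell}\sigma_x\,S_x^{(2)};
$$
both are self-adjoint, $C_\ell$ is an approximate generator of the broken rotation, and one computes
$$
[C_\ell,A_\ell]=i\sum_{x\in\Lambda_\ell}\sigma_x S_x^{(3)},\qquad
[[C_\ell,H_{0,{\rm p}}^{(\Lambda)}],C_\ell]=-\sum_{\{x,y\}:\,|x-y|=1,\;|\{x,y\}\cap\Lambda_\ell|=1}\big(S_x^{(2)}S_y^{(2)}+S_x^{(3)}S_y^{(3)}\big),
$$
together with the extra term $[[C_\ell,-BO^{(\Lambda)}],C_\ell]=B\sum_{x\in\Lambda_\ell}\sigma_x S_x^{(3)}$ for the full Hamiltonian; the second commutator involves only the $O(\ell^{d-1})$ bonds across $\partial\Lambda_\ell$.

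The core is a lower bound on the transverse fluctuation $\omega_0(A_\ell^2)$. Working first at finite $\Lambda$ and $B>0$, I would apply the zero-temperature Bogoliubov-type estimate of Appendix~\ref{App::BogolyIneq} to $\omega_B^{(\Lambda)}$ with the pair $(C_\ell,A_\ell)$, using the $f$-sum-rule bounds of Appendices~\ref{App::BogolyIneq}-\ref{appen:chibound}, to bound $\omega_B^{(\Lambda)}(A_\ell^2)$ from below by a multiple of $|\omega_B^{(\Lambda)}([C_\ell,A_\ell])|^2/\omega_B^{(\Lambda)}([[C_\ell,H_{\rm p}^{(\Lambda)}(B)],C_\ell])$. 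By the identities above, the numerator equals $(m^{(\Lambda)}(B)\,|\Lambda_\ell|)^2$, while the denominator is at most $2S^2\cdot(\#\,\mbox{boundary bonds})+B\,m^{(\Lambda)}(B)\,|\Lambda_\ell|=O(\ell^{d-1})+O(B\ell^{d})$. Taking the infinite-volume limit along the sequence that defines $\omega_0$ — legitimate because $C_\ell,A_\ell$ are fixed local operators and $\omega_B^{(\Lambda)}$ converges weak-$*$ on the quasi-local algebra — and then $B\searrow0$, so that the $B$-term drops and $m^{(\Lambda)}(B)\to m_{\rm s}$ by Theorem~\ref{thm:sponmag}, I would obtain
$$
\omega_0(A_\ell^2)\ \ge\ c\,\frac{m_{\rm s}^2\,|\Lambda_\ell|^2}{\ell^{d-1}}\ \ge\ c'\,m_{\rm s}^2\,\ell^{d+1}
$$
for all large $\ell$, with $c,c'>0$.

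On the other hand, if the conclusion fails, i.e. $|\omega_0(S_x^{(2)}S_y^{(2)})|=o(|x-y|^{-(d-1)})$, then expanding the square and using translation invariance of $\omega_0$,
$$
\omega_0(A_\ell^2)=\sum_{x,y\in\Lambda_\ell}\sigma_x\sigma_y\,\omega_0(S_x^{(2)}S_y^{(2)})\ \le\ |\Lambda_\ell|\sum_{|r|\le 2\ell}\big|\omega_0(S_0^{(2)}S_r^{(2)})\big|.
$$
For $d\ge2$ there are $O(R^{d-1})$ sites of $\ze^d$ at distance $R$, so $\sum_{|r|\le2\ell}|\omega_0(S_0^{(2)}S_r^{(2)})|=o(\ell)$ and hence $\omega_0(A_\ell^2)=o(\ell^{d+1})$, contradicting the lower bound. (For $d=1$ the hypothesis $o(|x-y|^{0})$ says only that transverse long-range order is absent, which always holds, so nothing is proved; this is why $d\ge2$ is imposed.)

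The main obstacle is the lower bound. The finite-temperature Bogoliubov inequality is vacuous at $T=0$, and its straightforward zero-temperature analogue only controls the static transverse susceptibility, not the equal-time quantity $\omega_0(A_\ell^2)$ that enters the correlation function; to recover $\omega_0(A_\ell^2)$ one must control how much of the spectral weight of $C_\ell$ and $A_\ell$ above the ground state sits at low energies — precisely the contribution of the low-lying eigenstates that \cite{Momoi2} did not take into account. This is where the $f$-sum rule $\omega_0([[A_\ell,H_{0,{\rm p}}^{(\Lambda)}],A_\ell])=O(\ell^{d})$ and the refined estimates of Appendices~\ref{App::BogolyIneq}-\ref{appen:chibound} are needed, and it is also why $\Lambda_\ell$ must be kept a sharp box (so its boundary energy is $\Theta(\ell^{d-1})$ and not smaller): this is what makes the exponent come out exactly $d-1$ rather than an illegitimately stronger one. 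The remaining points — uniformity of the $O(\cdot)$ constants in $\ell$, and the interchange of the iterated infinite-volume and $B\searrow0$ limits with the $\ell$-fixed finite sums — are routine.
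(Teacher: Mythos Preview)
There is a genuine gap in the core lower bound. The inequality you invoke,
\[
\omega_B^{(\Lambda)}(A_\ell^2)\ \ge\ c\,\frac{|\omega_B^{(\Lambda)}([C_\ell,A_\ell])|^2}{\omega_B^{(\Lambda)}([[C_\ell,H_{\rm p}^{(\Lambda)}(B)],C_\ell])},
\]
is the finite-temperature Bogoliubov inequality with the factor $\beta$ removed, and it is \emph{not} what Appendix~\ref{App::BogolyIneq} provides. Lemma~\ref{lem:BogolyIneq} at $\epsilon=0$ is the Kennedy--Lieb--Shastry inequality
\[
|\omega_B^{(\Lambda)}([C,A])|^2\ \le\ \sqrt{\tilde{D}_B^{(\Lambda)}(C)}\,\sqrt{\omega_B^{(\Lambda)}([[C^\ast,H_{\rm p}^{(\Lambda)}(B)],C])}\;\cdot\;2\,\omega_B^{(\Lambda)}(A^2),
\]
which carries the zero-temperature susceptibility $\tilde{D}_B^{(\Lambda)}(C)$ of (\ref{tildeD}) in addition to the double commutator. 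You never bound this factor; the $f$-sum rule for $A_\ell$ that you mention is irrelevant to it. In the paper the susceptibility is controlled by reflection positivity via the bound (\ref{chibound}), which for the choice $C=H_1'$ with the \emph{smooth} cutoff $f_x$ of (\ref{fx}) gives $\tilde{D}_B^{(\Lambda)}(C)=O(R^d)$, while the smooth cutoff and the $U(1)$ symmetry drive the double commutator down to $O(R^{d-2})$ by (\ref{estsumH1SxSyH1}). The product under the square root is then $R^d\cdot R^{d-2}$, whence the exponent $d-1$ in (\ref{denomibound}), which at $\epsilon=0$ reads $|m_{\rm s}|^2\le\mathcal{K}_0 R^{d-1}\omega_0(\mathcal{A}_R^2)$ and immediately rules out $o(|x-y|^{-(d-1)})$.

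Your insistence on a sharp box is therefore backwards. With a sharp indicator for $f$, the double commutator is only $O(\ell^{d-1})$ while the reflection-positivity bound on the susceptibility is still $O(\ell^d)$, so the KLS inequality yields $\sqrt{\ell^d\cdot\ell^{d-1}}=\ell^{d-1/2}$ and hence only $\omega_0(A_\ell^2)\ge c\,m_{\rm s}^2\,\ell^{d+1/2}$; this excludes merely $o(|x-y|^{-(d-1/2)})$, not $o(|x-y|^{-(d-1)})$. The correct exponent comes precisely from smoothing $C$ over a shell of width $R$ so that the gradient of $f$ is $O(1/R)$, together with the reflection-positivity input (\ref{chibound}) that you do not use.
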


The proof is given in Sec.~\ref{sec:TransCorr}. 

\subsection{Non-zero temperatures}

The thermal expectation value at the inverse temperature $\beta$ is given by 
\begin{equation}
\label{thermalstate}
\langle\cdots\rangle_{B,\beta}^{(\Lambda)}:=\frac{1}{Z_{B,\beta}^{(\Lambda)}}
{\rm Tr}\;(\cdots)e^{-\beta H_{\rm p}^{(\Lambda)}(B)},
\end{equation}
where $Z_{B,\beta}^{(\Lambda)}$ is the partition function. The infinite-volume thermal equilibrium state 
is given by 
\begin{equation}
\label{rhoB}
\rho_B(\cdots)={\rm weak}^\ast\mbox{-}\lim_{\Lambda\nearrow\ze^d}\langle\cdots\rangle_{B,\beta}^{(\Lambda)}.
\end{equation}
Here, if necessary, we take a suitable sequence of the finite lattices $\Lambda$ 
in the weak$^\ast$ limit so that the expectation value converges to a linear functional 
in the same way as in the case of the ground state (\ref{omegaPhi0B}). Similarly, we write 
\begin{equation}
\label{def:rho0}
\rho_0(\cdots):={\rm weak}^\ast\mbox{-}\lim_{B\searrow 0}\rho_B(\cdots).
\end{equation}
Then, the spontaneous magnetization $m_{{\rm s},\beta}$ is given by 
\begin{equation}
\label{msbeta}
m_{{\rm s},\beta}:=\lim_{\Lambda\nearrow\ze^d}\frac{1}{|\Lambda|}\rho_0(O^{(\Lambda)})
\end{equation}
with the hypercubic lattice $\Lambda$ of (\ref{Lambda}) with the even side length $2L$. 
Because of the translational invariance with period 2, the right-hand side dose not depend on 
the size of the lattice $\Lambda$. Clearly, this can be written 
$$
m_{{\rm s},\beta}=\lim_{B\searrow 0}
\lim_{\Lambda\nearrow\ze^d}\frac{1}{|\Lambda|}\langle O^{(\Lambda)}\rangle_{B,\beta}^{(\Lambda)}
$$
when the same sequences as in (\ref{rhoB}) and (\ref{def:rho0}) are taken in the double limit. 
Similarly to the case of the zero temperature, the value of $m_{{\rm s},\beta}$ may depend on 
the sequences in the double limit.
In general, let $\rho$ be a translationally invariant thermal equilibrium state (i.e., Gibbs state) 
which minimizes the free energy per volume \cite{BR}. Then, the spontaneous magnetization is given by 
$$
\frac{1}{|\Lambda|}\rho(O^{(\Lambda)})
$$
with the hypercubic lattice $\Lambda$ of (\ref{Lambda}) with the even side length $2L$.

\begin{theorem}
\label{thm:sponmagfiniteT}
For strictly positive temperatures $\beta^{-1}>0$, the value $m_{{\rm s},\beta}$
of (\ref{msbeta}) is independent of the choices of the sequences in the two weak$^\ast$-limits, 
$\Lambda\nearrow\ze^d$ and $B\searrow 0$, in (\ref{rhoB}) and (\ref{def:rho0}). Namely, 
the value $m_{{\rm s},\beta}$ of the spontaneous magnetization is uniquely determined. 
Further, the spontaneous magnetization $m_{{\rm s},\beta}$ is equal to 
the supremum of the spontaneous magnetizations 
over all of the translationally invariant thermal equilibrium states in the infinite volume.   
\end{theorem}

The proof is given in Sec.~\ref{Sec:SponMagFiniteT}.

\begin{theorem}
\label{theorem:slowdecayNonzero}
Suppose that the spontaneous magnetization is non-vanishing 
for an infinite-volume ground state for strictly positive temperatures $\beta^{-1}>0$ in dimensions $d\ge 3$.  
Then, the transverse correlation function $\rho_0( S_x^{(2)}S_y^{(2)})$ in the infinite-volume limit exhibits 
a Nambu-Goldstone-type slow decay. More precisely, we obtain 
\begin{equation}
\frac{|m_{{\rm s},\beta}|^2}{\beta R^{d-2}}\le{\rm Const.}\rho_0(\mathcal{A}_R^2), 
\end{equation}
where $\mathcal{A}_R$ is given by (\ref{AR0}). This bound rules out the possibility of 
the rapid decay $o(|x-y|^{-(d-2)})$ for the correlation. 
\end{theorem}

The proof is given in Sec.~\ref{sec:TransCorr}. 
\medskip

\noindent
Remark: The exponent $(d-2)$ is expected to be optimal because Kennedy and King \cite{KK} proved that 
the correlation corresponding to the Nambu-Goldstone mode exhibits exactly this power $(d-2)$ in the power-law decay 
in an Abelian Higgs model in Landau gauge in dimensions $d\ge 3$.  
By relying on Bogoliubov inequality, Martin \cite{Martin} discussed 
Nambu-Goldstone-type slow clustering of the transverse correlations 
when a continuous symmetry is broken in generic quantum or classical spin systems for strictly positive temperatures.  
See also a related approach \cite{LFPW} to proving Nambu-Goldstone-type slow clustering of the transverse correlations. 

\Section{Spontaneous Magnetization at Zero Temperature}
\label{sec:SMMaxZero}

In this section, we will prove the statement of Theorem~\ref{thm:sponmag} in the case of the ground states.  
For this purpose, we will use the theorem by Bratteli, Kishimoto and Robinson \cite{BKR} 
and the variational principle \cite{KHvdL} for the ground-state energy. 

Let $\Phi_0^{(\Lambda)}(B)$ be a ground-state vector of the finite-volume Hamiltonian $H_{\rm p}^{(\Lambda)}(B)$. 
Then, the expectation value of the ground-state energy satisfies   
$$
\langle \Phi_0^{(\Lambda)}(B),H_{\rm p}^{(\Lambda)}(B) \Phi_0^{(\Lambda)}(B)\rangle
\le \omega(H_{\rm p}^{(\Lambda)}(B)).
$$
for any state $\omega$. Substituting the right-hand side of (\ref{H(B)}) into this, one has  
$$
\langle \Phi_0^{(\Lambda)}(B),H_{0,{\rm p}}^{(\Lambda)}\Phi_0^{(\Lambda)}(B)\rangle 
-B\langle \Phi_0^{(\Lambda)}(B),O^{(\Lambda)}\Phi_0^{(\Lambda)}(B)\rangle\le 
\omega(H_{0,{\rm p}}^{(\Lambda)})-B\omega(O^{(\Lambda)}). 
$$
Further, this can be rewritten to the relation between two magnetizations as \cite{KHvdL} 
\begin{eqnarray}
\frac{1}{|\Lambda|}\langle \Phi_0^{(\Lambda)}(B),O^{(\Lambda)}\Phi_0^{(\Lambda)}(B)\rangle 
&\ge &\frac{1}{|\Lambda|}\omega(O^{(\Lambda)})+
\frac{1}{B|\Lambda|} [\langle \Phi_0^{(\Lambda)}(B),H_{0,{\rm p}}^{(\Lambda)}\Phi_0^{(\Lambda)}(B)\rangle 
-\omega(H_{0,{\rm p}}^{(\Lambda)})]\nonumber \\ 
&\ge&\frac{1}{|\Lambda|}\omega(O^{(\Lambda)})+
\frac{1}{B|\Lambda|} [E_0^{(\Lambda)}-\omega(H_{0,{\rm p}}^{(\Lambda)})],
\label{magneIneq}
\end{eqnarray}
where $E_0^{(\Lambda)}$ is the ground-state eigenenergy 
of the Hamiltonian, $H_{\rm p}^{(\Lambda)}(0)=H_{0,{\rm p}}^{(\Lambda)}$, with the external magnetic field $B=0$.

As the above state $\omega$, we choose an infinite-volume ground state which exhibits the maximum spontaneous 
magnetization for the Hamiltonian without the external magnetic field. 
Such a state is expected to be realized in an infinite-volume limit for a Hamiltonian 
with a boundary field which induces a spontaneous magnetization 
or a Hamiltonian with an infinitesimally weak symmetry breaking field which is switched off 
after taking the infinite-volume limit \cite{KomaTasaki}.

We define a set of states by 
$$
\mathcal{S}_\Lambda^\omega:=\{\omega'\bigr|\; \omega'|\mathfrak{A}_{\Lambda^c}=\omega|\mathfrak{A}_{\Lambda^c}\}
$$
for the state $\omega$ and a finite lattice $\Lambda$. Here, 
$\omega|\mathfrak{A}_{\Lambda^c}$ is the restriction of the state $\omega$ to the $C^\ast$-subalgebra $\mathfrak{A}_{\Lambda^c}$ 
on the complement $\Lambda^c$ of $\Lambda$. 
In order to estimate the energy difference in the right-hand side of (\ref{magneIneq}), we recall 
the theorem by Bratteli, Kishimoto and Robinson \cite{BKR}. 
We denote by $H_{0,{\rm f}}^{(\Lambda)}$ the Hamiltonian without the external magnetic field and 
with the free boundary condition on the finite lattice $\Lambda$. 
Then, the theorem states 

\begin{theorem}{\bf (Bratteli, Kishimoto and Robinson \cite{BKR})} 
Let $\omega$ be an infinite-volume ground state $\omega$, i.e., $\omega$ satisfies the ground-state 
condition (\ref{defGS}). 
Then, the state $\omega$ satisfies 
\begin{equation}
\label{BKRcharaGS}
\omega(\tilde{H}_0^{(\Lambda)})=\inf_{\omega'\in \mathcal{S}_\Lambda^\omega}\omega'(\tilde{H}_0^{(\Lambda)}),
\end{equation}
where 
$$
\tilde{H}_0^{(\Lambda)}=H_{0,{\rm f}}^{(\Lambda)}+W^{(\Lambda)} 
$$
with the boundary Hamiltonian,  
$$
W^{(\Lambda)}:=\sum_{X:X\cap\Lambda\neq \emptyset,X\cap\Lambda^c\not=\emptyset}h_X.
$$
Here, $h_X$ is the local Hamiltonian on $X\subset\ze^d$.
\end{theorem}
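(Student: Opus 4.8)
The plan is to prove the stated direction --- that an infinite-volume ground state $\omega$ attains the infimum of the local energy --- by passing to the GNS representation of $\omega$ and exploiting that the local algebra $\mathfrak{A}_\Lambda$ is a full matrix algebra. Write $K:=\tilde H_0^{(\Lambda)}$, which is the finite sum of all interaction terms $h_X$ with $X\cap\Lambda\neq\emptyset$ and is therefore a self-adjoint element of $\mathfrak{A}_{\rm loc}$ supported in a bounded neighborhood of $\Lambda$. Since $\omega\in\mathcal{S}_\Lambda^\omega$, the inequality $\inf_{\omega'\in\mathcal{S}_\Lambda^\omega}\omega'(K)\le\omega(K)$ is immediate, so the whole content is the reverse bound $\omega'(K)\ge\omega(K)$ for every $\omega'\in\mathcal{S}_\Lambda^\omega$. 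The first observation is that the ground-state condition (\ref{defGS}) localizes: for $a\in\mathfrak{A}_\Lambda$ every term $h_X$ with $X\cap\Lambda=\emptyset$ commutes with $a$, so $[H_{0,{\rm p}}^{(\Lambda')},a]=[K,a]$ for all large $\Lambda'\supset\Lambda$, and hence $\omega(a^\ast[K,a])\ge0$ for all $a\in\mathfrak{A}_\Lambda$.

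Next I would set up the GNS triple $(\mathcal{H}_\omega,\pi_\omega,\Omega_\omega)$ of $\omega$. The standard consequence of the ground-state condition is the existence of a positive self-adjoint generator $H_\omega\ge0$ on $\mathcal{H}_\omega$ with $H_\omega\Omega_\omega=0$ that implements the Heisenberg dynamics $\tau_t$, i.e. $\pi_\omega(\tau_t(a))=e^{itH_\omega}\pi_\omega(a)e^{-itH_\omega}$. Because $\Lambda$ is finite, $\mathfrak{A}_\Lambda\cong M_{n}(\co)$ with $n=(2S+1)^{|\Lambda|}$ is a full matrix algebra, so the representation $\pi_\omega|_{\mathfrak{A}_\Lambda}$ is a multiple of the identity representation; this yields a factorization $\mathcal{H}_\omega\cong\co^{n}\otimes\mathcal{H}'$ under which $\pi_\omega(a)=a\otimes1$ for $a\in\mathfrak{A}_\Lambda$, $\pi_\omega(\mathfrak{A}_\Lambda)'=1\otimes B(\mathcal{H}')$, and $\pi_\omega(\mathfrak{A}_{\Lambda^c})\subset1\otimes B(\mathcal{H}')$. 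Differentiating the implemented dynamics and using that the generator satisfies $\delta(a)=i[K,a]$ for $a\in\mathfrak{A}_\Lambda$ shows that $C:=H_\omega-\pi_\omega(K)$ commutes with $\pi_\omega(\mathfrak{A}_\Lambda)$, i.e. $C=1\otimes C'$ acts only on the exterior factor $\mathcal{H}'$.

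With this in hand, the energy comparison is short. For a state $\omega'\in\mathcal{S}_\Lambda^\omega$ that is normal with respect to $\pi_\omega$, realized by a unit vector $\psi$ (or, in general, a density matrix on $\mathcal{H}_\omega$), I would write
\[
\omega'(K)=\langle\psi,\pi_\omega(K)\psi\rangle=\langle\psi,H_\omega\psi\rangle-\langle\psi,C\psi\rangle\ge-\langle\psi,C\psi\rangle ,
\]
using $H_\omega\ge0$. Since $C$ depends only on the exterior degrees of freedom and $\omega'$ agrees with $\omega$ on $\mathfrak{A}_{\Lambda^c}$, one has $\langle\psi,C\psi\rangle=\langle\Omega_\omega,C\Omega_\omega\rangle$, whence $\omega'(K)\ge-\langle\Omega_\omega,C\Omega_\omega\rangle=\langle\Omega_\omega,\pi_\omega(K)\Omega_\omega\rangle=\omega(K)$, where the last two equalities used $H_\omega\Omega_\omega=0$. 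To cover an arbitrary $\omega'\in\mathcal{S}_\Lambda^\omega$ I would invoke the fact, which again rests on the finite-dimensionality of $\mathfrak{A}_\Lambda$ together with a noncommutative Radon--Nikodym argument, that any state agreeing with $\omega$ on $\mathfrak{A}_{\Lambda^c}$ is automatically normal with respect to $\pi_\omega$ and is represented by such a density matrix whose exterior marginal coincides with that of $\Omega_\omega$.

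The main obstacle is the step asserting $\langle\psi,C\psi\rangle=\langle\Omega_\omega,C\Omega_\omega\rangle$: although $C$ commutes with $\pi_\omega(\mathfrak{A}_\Lambda)$, one must show that its expectation is actually controlled by the restriction of the state to $\mathfrak{A}_{\Lambda^c}$ --- that is, that $C$ is affiliated to $\pi_\omega(\mathfrak{A}_{\Lambda^c})''$ rather than merely to $1\otimes B(\mathcal{H}')$ --- and one must handle the unboundedness of $H_\omega$ and $C$ (domains, finiteness of the relevant expectations). I would treat this by a finite-volume approximation: for $\Lambda'\supset\Lambda$ the difference $\tilde H_0^{(\Lambda')}-K=\sum_{X\subset\Lambda',\,X\cap\Lambda=\emptyset}h_X$ lies in $\mathfrak{A}_{\Lambda^c}$, so $C$ is a limit of exterior observables and is affiliated to $\pi_\omega(\mathfrak{A}_{\Lambda^c})''$; combined with the normality of $\omega'$ this forces the two expectations of $C$ to agree on the spectral projections of the exterior algebra, where $\omega'=\omega$. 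Making these affiliation and domain statements precise, in the generality allowed by the quasi-local structure, is the technical heart of the argument.
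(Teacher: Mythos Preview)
The paper does not prove this theorem at all; it is quoted verbatim as a result of Bratteli, Kishimoto and Robinson \cite{BKR} and used as a black box in Section~\ref{sec:SMMaxZero}. So there is no ``paper's proof'' to compare against, and the only question is whether your sketch stands on its own.

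Your strategy via the GNS Hamiltonian is natural, and you correctly isolate the crux: once $C:=H_\omega-\pi_\omega(K)$ is known to commute with $\pi_\omega(\mathfrak{A}_\Lambda)$, you must still show that its expectation is fixed by the restriction $\omega'|_{\mathfrak{A}_{\Lambda^c}}$. The proposed fix, however, does not go through. You argue that $C$ is affiliated to $\pi_\omega(\mathfrak{A}_{\Lambda^c})''$ because $\tilde H_0^{(\Lambda')}-K\in\mathfrak{A}_{\Lambda^c}$ and ``$C$ is a limit of exterior observables.'' But $H_\omega$ is \emph{not} a limit of $\pi_\omega(\tilde H_0^{(\Lambda')})$: those operators diverge in norm (and strongly) like $|\Lambda'|$, and the only relation guaranteed by the ground-state condition is at the level of commutators, $[H_\omega,\pi_\omega(a)]=\lim_{\Lambda'}\pi_\omega([\tilde H_0^{(\Lambda')},a])$ for local $a$. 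Even after subtracting constants, strong-resolvent convergence of renormalized finite-volume Hamiltonians to $H_\omega$ is an additional hypothesis, not a consequence of (\ref{defGS}). Without it, $C$ lands only in $\pi_\omega(\mathfrak{A}_\Lambda)'=1\otimes B(\mathcal H')$, which in general strictly contains $\pi_\omega(\mathfrak{A}_{\Lambda^c})''$; agreement of $\omega'$ and $\omega$ on the latter does not force equality of the $\mathcal H'$-marginals of the density matrices, hence not equality of the expectations of $C$. The normality claim for $\omega'$ is in fact correct (it follows from $\omega'_{ii}\le\omega|_{\mathfrak{A}_{\Lambda^c}}$ via Radon--Nikod\'ym for each diagonal block), but that by itself is not enough.

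The original BKR argument bypasses the GNS Hamiltonian entirely: it exploits the matrix-unit structure of $\mathfrak{A}_\Lambda\cong M_n(\co)$ to express every $\omega'\in\mathcal S_\Lambda^\omega$ directly in terms of $\omega$ and finitely many elements of $\mathfrak{A}_\Lambda$, so that the inequality $\omega'(K)\ge\omega(K)$ reduces to finitely many applications of $\omega(a^\ast[K,a])\ge0$ with $a\in\mathfrak{A}_\Lambda$. If you wish to complete a self-contained proof, that route is both shorter and avoids the unbounded-operator and affiliation issues altogether.
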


As a trial state, we take 
\begin{equation}
\label{tildeomega}
\tilde{\omega}(\cdots):=\omega\bigr|\mathfrak{A}_{\Lambda^c}(\cdots)
\otimes \langle \Phi_0^{(\Lambda)}(0),(\cdots)\Phi_0^{(\Lambda)}(0)\rangle
\end{equation}
by using the finite-volume ground state $\Phi_0^{(\Lambda)}(0)$ of the Hamiltonian 
$H_{\rm p}^{(\Lambda)}(0)=H_{0,{\rm p}}^{(\Lambda)}$ with zero external magnetic field $B=0$. 

Then, from (\ref{BKRcharaGS}), the following bound holds:  
\begin{equation}
\label{BKRbound}
\tilde{\omega}(\tilde{H}_0^{(\Lambda)})\ge \omega(\tilde{H}_0^{(\Lambda)})
\end{equation}
because the state $\omega$ is an infinite-volume ground state. 
We write 
$$
\delta H_0^{(\Lambda)}:=\tilde{H}_0^{(\Lambda)}-H_{0,{\rm p}}^{(\Lambda)}. 
$$
Substituting this and the trial state (\ref{tildeomega}) into the bound (\ref{BKRbound}), one has 
$$
E_0^{(\Lambda)}+\tilde{\omega}(\delta{H}_0^{(\Lambda)})\ge \omega({H}_{0,{\rm p}}^{(\Lambda)})
+\omega(\delta{H}_0^{(\Lambda)}).
$$
Since $\Vert \delta{H}_0^{(\Lambda)}\Vert\le {\rm Const.}L^{d-1}$, this yields 
$$
\omega({H}_{0,{\rm p}}^{(\Lambda)})-E_0^{(\Lambda)}\le {\rm Const.}L^{d-1}.
$$
Substituting this into the right-hand side of (\ref{magneIneq}) 
and using the expression (\ref{ms}) of $m_{\rm s}$, we obtain 
\begin{equation}
\label{m}
m_{\rm s}=\lim_{B\searrow 0}\lim_{\Lambda\nearrow\ze^d}
\frac{1}{|\Lambda|}\langle \Phi_0^{(\Lambda)}(B),O^{(\Lambda)}\Phi_0^{(\Lambda)}(B)\rangle 
\ge \lim_{\Lambda\nearrow\ze^d}\frac{1}{|\Lambda|}\omega(O^{(\Lambda)}).
\end{equation}
Here, if necessary, we take a suitable sequence of finite lattices $\Lambda$ going to $\ze^d$ 
so that the right-hand side yields the maximum magnetization for the infinite-volume ground state $\omega$, 
and we choose the sequence of the limit $B\searrow 0$ so that the limit converges to some value $m_{\rm s}$ 
in the set of all of the accumulation points.  
Further, consider a state, 
$$
\omega_{\Phi_0(+0)}:={\rm weak}^\ast\mbox{-}\lim_{B\searrow 0}\omega_{\Phi_0(B)}, 
$$ 
where we take the sequence of the weak$^\ast$-limit $B\searrow 0$ so that 
$$
m_{\rm s}=\lim_{B\searrow 0}\frac{1}{|\Lambda|}\omega_{\Phi_0(B)}(O^{(\Lambda)})
=\frac{1}{|\Lambda|}\omega_{\Phi_0(+0)}(O^{(\Lambda)})
$$
with the same value $m_{\rm s}$ as in above {from} (\ref{mPhi0B}), (\ref{ms0}), (\ref{omegaPhi00}) and (\ref{ms}). 
As we explained in Sec.~\ref{ZeroTempThms}, the state $\omega_{\Phi_0(+0)}$ is an infinite-volume ground state 
in the case of the external magnetic field $B=0$. 
Since the right-hand side of (\ref{m}) gives the maximum spontaneous magnetization 
over all of the infinite-volume ground states from the assumption, we have 
$$
m_{\rm s}\ge \lim_{\Lambda\nearrow\ze^d}\frac{1}{|\Lambda|}\omega(O^{(\Lambda)})
\ge \lim_{\Lambda\nearrow\ze^d}\frac{1}{|\Lambda|}\omega_{\Phi_0(+0)}(O^{(\Lambda)})=m_{\rm s}.
$$
This implies that the above value $m_{\rm s}$ is always equal to 
the maximum magnetization for the state $\omega$, irrespective of the sequence of the limit $B\searrow 0$. 
In particular, when the state $\omega$ is translationally invariant with period 2, 
the quantity $\omega(O^{(\Lambda)})/|\Lambda|$ does not depend on the size of the lattice $\Lambda$. 
Therefore, we do not need to choose the above suitable sequence of finite lattices $\Lambda$ 
for the inequality (\ref{m}). 
This implies that $m_{\rm s}$ is equal to the maximum spontaneous magnetization over all of 
the translationally invariant infinite-volume ground states, irrespective of the sequences of 
finite lattices $\Lambda$ going to $\ze^d$ in the weak${^\ast}$-limit 
for the state $\omega_{\Phi_0(B)}$ of (\ref{omegaPhi0B}).  
Moreover, if we choose the appropriate sequence of finite lattices $\Lambda$ going to $\ze^d$ 
so that the right-hand side of the above inequality (\ref{m}) yields the maximum magnetization 
for the infinite-volume ground state $\omega$, similarly we have that 
$m_{\rm s}$ is equal to the maximum spontaneous magnetization over all of 
the infinite-volume ground states which are not necessarily translationally invariant. 
This completes the proof of Theorem~\ref{thm:sponmag}. 

\Section{Spontaneous Magnetization at Non-zero Temperatures}
\label{Sec:SponMagFiniteT}

In this section, we will prove the statement of Theorem~\ref{thm:sponmagfiniteT}. 
Namely, we prove that the spontaneous magnetization $m_{{\rm s},\beta}$ of (\ref{msbeta}) is equal to 
the maximum spontaneous magnetization over all of the translationally invariant thermal equilibrium states 
for strictly positive temperatures $\beta^{-1}>0$. 
For this purpose, we will use the fact that 
thermal equilibrium states (i.e., Gibbs states) minimizes the free energy per volume \cite{BR} 
for translationally invariant systems.   

Consider a Hamiltonian $H(B)$ which is written 
as a sum of the translates of the local Hamiltonian $h_x(B):=h_x-Bo_x$, where $B\ge 0$ is an external magnetic field 
and $o_x$ is the local order parameter. Namely, the Hamiltonian $H(B)$ is translationally invariant and formally written as 
$$
H(B)=\sum_x (h_x-Bo_x).
$$

Let $\rho$ be a translationally invariant infinite-volume state. We write 
$$
e_B(\rho)=\rho((h_x-Bo_x))\quad \mbox{and} \quad e_0(\rho)=\rho(h_x)
$$
for the expectation value of the energy per volume with $B>0$ and $B=0$, respectively. 
For the state $\rho$ and the subalgebra $\mathfrak{A}_\Lambda$ on the finite lattice $\Lambda$, 
the density matrix $\sigma_\Lambda$ is uniquely determined by \cite{BR}
$$
\rho(a)={\rm Tr}_{\mathfrak{H}_\Lambda}(\sigma_\Lambda a)
$$
for all $a\in\mathfrak{A}_\Lambda$. Then, the entropy per volume in the infinite volume is given by \cite{BR} 
$$
s(\rho):=- \lim_{\Lambda\nearrow\ze^d}\frac{1}{|\Lambda|}{\rm Tr}_{\mathfrak{H}_\Lambda}(\sigma_\Lambda
\log \sigma_\Lambda).
$$  
This limit is known to exist \cite{BR}. 

Let $\rho_B$ be an infinite-volume Gibbs state \cite{BR} at the inverse temperature $\beta$ and 
for the external magnetic field $B$, i.e., 
the state $\rho_B$ minimizes the free energy or equivalently maximizes $s(\rho_B)-\beta e_B(\rho_B)$. 
Therefore, the following inequality holds: 
$$
s(\rho_B)-\beta e_B(\rho_B)\ge s(\rho)-\beta e_B(\rho)=s(\rho)-\beta e_0(\rho)+\beta B\rho(o_x)
$$
for any translationally invariant infinite-volume state $\rho$. 
We choose $\rho$ to be a Gibbs state for the Hamiltonian $H(0)$ with the zero external field $B=0$. 
Therefore, one has 
$$
s(\rho)-\beta e_0(\rho)\ge s(\rho_B)-\beta e_0(\rho_B).
$$
Combining these two inequalities, one obtains 
$$
\rho_B(o_x)\ge \rho(o_x).
$$
In particular, when the state $\rho$ yields the maximum magnetization $m_{{\rm max},\beta}$, we have 
$$
m_{{\rm s},\beta}= \lim_{B\searrow 0}\rho_B(o_x)\ge m_{{\rm max},\beta},
$$
where we take the sequence of the limit $B\searrow 0$ so that the limit converges to some value $m_{{\rm s},\beta}$ 
in the set of all of the accumulation points. 
Consider 
$$
\rho_0(\cdots)={\rm weak}^\ast\mbox{-}\lim_{B\searrow 0}\rho_B(\cdots),
$$
where we choose the sequence of the limit $B\searrow 0$ so that $\rho_0(o_x)=m_{{\rm s},\beta}$, i.e., 
the state shows the same spontaneous magnetization as in above. Then, one can easily check that 
the state $\rho_0$ is a translationally invariant thermal equilibrium states.  
Therefore, we have 
$$
m_{{\rm s},\beta}\ge m_{{\rm max},\beta}\ge m_{{\rm s},\beta}. 
$$
This implies that the spontaneous magnetization $m_{{\rm s},\beta}$ is equal to the maximum magnetization 
for strictly positive temperatures $\beta^{-1}>0$, too. 
Thus, the statement of Theorem~\ref{thm:sponmagfiniteT} has been proved.

\Section{A Trial State for Low-Energy Excitations}
\label{sec:NambuGoldstone}

In this section, we give a proof of Theorem~\ref{thm:GSgapless}. 
In order to prove the existence of a gapless excitation above the sector of the ground state, 
we will basically use the variational principle with respect to energy. Our key idea is to choose 
the trial state to be a special form (\ref{trialvarphi}) below that eliminates the contributions of 
the undesired low-lying eigenstates, which yield a set of symmetry-breaking ground states in the infinite-volume limit 
by forming linear combinations of the low-lying eigenstates and the symmetric ground state, as mentioned in 
Introduction.

We denote by $E_0^{(\Lambda)}(B)$ the eigenenergy of the ground state of the Hamiltonian $H_{\rm p}^{(\Lambda)}(B)$ 
in the external magnetic field with the strength $B$. 
We write 
$$
\mathcal{H}_{\rm p}^{(\Lambda)}(B):=H_{\rm p}^{(\Lambda)}(B)-E_0^{(\Lambda)}(B).
$$
For an operator $\mathcal{A}\in\mathfrak{A}_\Lambda$, we introduce a trial state as 
\begin{equation}
\label{trialvarphi}
\varphi_{B,\epsilon,\mathcal{A}}^{(\Lambda)}(\cdots):=
\frac{\omega_B^{(\Lambda)}(\mathcal{A}^\ast [\mathcal{H}_{\rm p}^{(\Lambda)}(B)]^{\epsilon/2} 
(\cdots)[\mathcal{H}_{\rm p}^{(\Lambda)}(B)]^{\epsilon/2}\mathcal{A})}
{\omega_B^{(\Lambda)}(\mathcal{A}^\ast [\mathcal{H}_{\rm p}^{(\Lambda)}(B)]^\epsilon \mathcal{A})},
\end{equation}
where the ground state $\omega_B^{(\Lambda)}$ is given by (\ref{Grstate}), and 
$\epsilon$ is a positive small parameter. Here, we stress the following again: 
If an excitation energy appeared in (\ref{trialvarphi}) 
is very close to the energy $E_0^{(\Lambda)}(B)$ of the ground state, then the contribution becomes very small 
due to the factor $[\mathcal{H}_{\rm p}^{(\Lambda)}(B)]^{\epsilon/2}$. Thus, we can eliminate the contributions of 
the undesired low-lying eigenstates. But, for the purpose of giving the proof of Theorem~\ref{thm:GSgapless} 
in a mathematically rigorous way, 
we slightly modify the cutoff $[\mathcal{H}_{\rm p}^{(\Lambda)}(B)]^{\epsilon/2}$ 
in Sec.~\ref{ProofTheoremGSgapless} below. 

The energy expectation with respect to the state (\ref{trialvarphi}) is given by 
\begin{equation}
\label{lowenergy}
\varphi_{B,\epsilon,\mathcal{A}}^{(\Lambda)}(\mathcal{H}_{\rm p}^{(\Lambda)}(B))
=\frac{\omega_B^{(\Lambda)}(\mathcal{A}^\ast [\mathcal{H}_{\rm p}^{(\Lambda)}(B)]^{1+\epsilon}\mathcal{A})}
{\omega_B^{(\Lambda)}(\mathcal{A}^\ast [\mathcal{H}_{\rm p}^{(\Lambda)}(B)]^\epsilon \mathcal{A})}. 
\end{equation}
Let $R$ be a large positive integer, and define  
\begin{equation}
\label{OmegaR}
\Omega_R:=\{x\in \ze^d\; |\; |x|_\infty\le R\}\subset\ze^d, 
\end{equation}
where 
$$
|x|_\infty:=\max_{1\le i\le d}\{|x^{(i)}|\}\quad \mbox{for \ } x=(x^{(1)},x^{(2)},\ldots,x^{(d)})\in \ze^d. 
$$
We choose the local operator $\mathcal{A}\in\mathfrak{A}_{\rm loc}$ as 
\begin{equation}
\label{AR}
\mathcal{A}=\mathcal{A}_R:=\frac{1}{|\Omega_R|}\sum_{x\in\Omega_R}(-1)^{x^{(1)}+\cdots+x^{(d)}}S_x^{(2)}. 
\end{equation}
Clearly, ${\cal A}_R\in\mathfrak{A}_{\Omega_R}$.

\subsection{Estimate of the denominator of the right-hand side in (\ref{lowenergy})}

In order to estimate the denominator of the right-hand side in (\ref{lowenergy}), 
we use the following Kennedy-Lieb-Shastry type inequality \cite{KLS}:

\begin{lemma}
\label{lem:BogolyIneq}
Let $A, C$ be operators on $\Lambda$, and let $\epsilon$ be a positive small parameter. Then, the following bound is valid: 
\begin{eqnarray}
\label{BogolyIneq}
|\omega_B^{(\Lambda)}([C,A])|^2
&\le& \sqrt{\tilde{D}_B^{(\Lambda)}(C)}
\sqrt{\kappa(\epsilon)\;\omega_B^{(\Lambda)}(\{C,C^\ast\})+\omega_B^{(\Lambda)}([[C^\ast,H_{\rm p}^{(\Lambda)}(B)],C])}\nonumber \\
&\times& \Bigl\{\omega_B^{(\Lambda)}(A[\mathcal{H}_{\rm p}^{(\Lambda)}(B)]^\epsilon A^\ast)
+\omega_B^{(\Lambda)}(A^\ast [\mathcal{H}_{\rm p}^{(\Lambda)}(B)]^\epsilon A)\Bigr\},
\end{eqnarray}
where  
\begin{equation}
\tilde{D}_B^{(\Lambda)}(C):=
\omega_B^{(\Lambda)}(CP_{\rm ex}^{(\Lambda)}(B)[\mathcal{H}_{\rm p}^{(\Lambda)}(B)]^{-1}C^\ast) 
+\omega_B^{(\Lambda)}(C^\ast P_{\rm ex}^{(\Lambda)}(B)[\mathcal{H}_{\rm p}^{(\Lambda)}(B)]^{-1}C)
\label{tildeD}
\end{equation}
with
\begin{equation} 
P_{\rm ex}^{(\Lambda)}(B):=1-P_0^{(\Lambda)}(B),
\end{equation}
and $\kappa(\epsilon)$ is a positive function of the parameter $\epsilon$ such that $\kappa(\epsilon)\rightarrow 0$ as 
$\epsilon\rightarrow 0$. Here, $P_0^{(\Lambda)}(B)$ is the projection onto the sector of the ground state of 
the Hamiltonian $H_{\rm p}^{(\Lambda)}(B)$. 
\end{lemma}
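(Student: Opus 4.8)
The plan is to pass to finite--dimensional linear algebra in an eigenbasis of $H_{\rm p}^{(\Lambda)}(B)$ and then combine two Cauchy--Schwarz inequalities with one interpolation estimate. Throughout write $\mathcal{H}:=\mathcal{H}_{\rm p}^{(\Lambda)}(B)\ge 0$, $P_0:=P_0^{(\Lambda)}(B)$, $P_{\rm ex}:=1-P_0$, $q:=q^{(\Lambda)}(B)={\rm Tr}\,P_0$, so that $\omega_B^{(\Lambda)}(\,\cdot\,)=q^{-1}{\rm Tr}\,[P_0(\,\cdot\,)]$. Since $\Lambda$ is finite, $\mathcal{H}$ has strictly positive spectrum on ${\rm ran}\,P_{\rm ex}$, so all real powers $\mathcal{H}^{s}$ (taken to vanish on ${\rm ran}\,P_0$ for $s>0$ and supported on ${\rm ran}\,P_{\rm ex}$ for $s<0$) are bounded. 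I will assume $0<\epsilon<1/2$; everything below is an identity in $\Lambda$ and $B$ of finite matrices.

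\textbf{Step 1 (reduction to off-diagonal blocks).} Inserting $1=P_0+P_{\rm ex}$ on both sides of the commutator in $\omega_B^{(\Lambda)}([C,A])$, the ground-to-ground part drops out: a two-line cyclicity argument gives ${\rm Tr}\,[P_0CP_0A]={\rm Tr}\,[P_0AP_0C]$, hence $q\,\omega_B^{(\Lambda)}([C,A])={\rm Tr}\,[P_0CP_{\rm ex}A]-{\rm Tr}\,[P_0AP_{\rm ex}C]$. Introducing the blocks $X:=P_{\rm ex}CP_0$, $\bar X:=P_{\rm ex}C^\ast P_0$, $Y:=P_{\rm ex}AP_0$, $\bar Y:=P_{\rm ex}A^\ast P_0$ and using cyclicity once more, each trace becomes a Hilbert--Schmidt pairing, so $q\,\omega_B^{(\Lambda)}([C,A])=\langle\bar X,Y\rangle_{\rm HS}-\langle\bar Y,X\rangle_{\rm HS}$ with $\langle M,N\rangle_{\rm HS}={\rm Tr}\,[M^\ast N]$. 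In the same way I record, with $\mathcal{H}$ inserted between the appropriate projections: $q\,\tilde D_B^{(\Lambda)}(C)=\|\mathcal{H}^{-1/2}\bar X\|_{\rm HS}^2+\|\mathcal{H}^{-1/2}X\|_{\rm HS}^2$ (cf. (\ref{tildeD})); $q\,\omega_B^{(\Lambda)}([[C^\ast,\mathcal{H}],C])=\|\mathcal{H}^{1/2}\bar X\|_{\rm HS}^2+\|\mathcal{H}^{1/2}X\|_{\rm HS}^2$, where the two ``non-symmetric'' terms of the double commutator vanish because $\mathcal{H}P_0=0$, and $[[C^\ast,H],C]=[[C^\ast,\mathcal{H}],C]$; $q\,\omega_B^{(\Lambda)}(\{C,C^\ast\})\ge\|\bar X\|_{\rm HS}^2+\|X\|_{\rm HS}^2$, the discarded ground-to-ground contribution being a sum of squares; and $q\bigl[\omega_B^{(\Lambda)}(A\mathcal{H}^\epsilon A^\ast)+\omega_B^{(\Lambda)}(A^\ast\mathcal{H}^\epsilon A)\bigr]=\|\mathcal{H}^{\epsilon/2}\bar Y\|_{\rm HS}^2+\|\mathcal{H}^{\epsilon/2}Y\|_{\rm HS}^2$.

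\textbf{Step 2 (two Cauchy--Schwarz steps).} Splitting the weight symmetrically, $\langle\bar X,Y\rangle_{\rm HS}=\langle\mathcal{H}^{-\epsilon/2}\bar X,\mathcal{H}^{\epsilon/2}Y\rangle_{\rm HS}$ and likewise $\langle\bar Y,X\rangle_{\rm HS}=\langle\mathcal{H}^{\epsilon/2}\bar Y,\mathcal{H}^{-\epsilon/2}X\rangle_{\rm HS}$, I apply Cauchy--Schwarz to each pairing and then the elementary bound $ac+bd\le\sqrt{a^2+b^2}\sqrt{c^2+d^2}$, pairing $\bar X$ with $Y$ and $X$ with $\bar Y$, to obtain $q\,|\omega_B^{(\Lambda)}([C,A])|\le\bigl(\|\mathcal{H}^{-\epsilon/2}\bar X\|_{\rm HS}^2+\|\mathcal{H}^{-\epsilon/2}X\|_{\rm HS}^2\bigr)^{1/2}\bigl(\|\mathcal{H}^{\epsilon/2}\bar Y\|_{\rm HS}^2+\|\mathcal{H}^{\epsilon/2}Y\|_{\rm HS}^2\bigr)^{1/2}$. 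By Step 1 the second factor equals $\sqrt{q}\,[\omega_B^{(\Lambda)}(A\mathcal{H}^\epsilon A^\ast)+\omega_B^{(\Lambda)}(A^\ast\mathcal{H}^\epsilon A)]^{1/2}$, so squaring gives
\[
|\omega_B^{(\Lambda)}([C,A])|^2\le\frac1q\bigl(\|\mathcal{H}^{-\epsilon/2}\bar X\|_{\rm HS}^2+\|\mathcal{H}^{-\epsilon/2}X\|_{\rm HS}^2\bigr)\bigl[\omega_B^{(\Lambda)}(A\mathcal{H}^\epsilon A^\ast)+\omega_B^{(\Lambda)}(A^\ast\mathcal{H}^\epsilon A)\bigr].
\]
The bracket already matches the curly-bracket factor of (\ref{BogolyIneq}); it remains to show that $\frac1q(\|\mathcal{H}^{-\epsilon/2}\bar X\|_{\rm HS}^2+\|\mathcal{H}^{-\epsilon/2}X\|_{\rm HS}^2)\le\sqrt{\tilde D_B^{(\Lambda)}(C)}\sqrt{\kappa(\epsilon)\,\omega_B^{(\Lambda)}(\{C,C^\ast\})+\omega_B^{(\Lambda)}([[C^\ast,H],C])}$. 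To this end, let $\nu$ be the positive measure on $(0,\infty)$ with $\int f\,d\nu:={\rm Tr}\,[\bar X^\ast f(\mathcal{H})\bar X]+{\rm Tr}\,[X^\ast f(\mathcal{H})X]$; then by Step 1 the moments $\int\mu^{-1}d\nu$, $\int\mu\,d\nu$, $\int d\nu$ equal $q\,\tilde D_B^{(\Lambda)}(C)$, $q\,\omega_B^{(\Lambda)}([[C^\ast,H],C])$ and a lower bound for $q\,\omega_B^{(\Lambda)}(\{C,C^\ast\})$, while the quantity to be bounded is $\frac1q\int\mu^{-\epsilon}d\nu$. The key pointwise estimate is $\mu^{-\epsilon}\le\mu^{-1/2}(\mu+\kappa(\epsilon))^{1/2}$ for all $\mu>0$, with $\kappa(\epsilon):=\sup_{\mu>0}(\mu^{1-2\epsilon}-\mu)$, which is just the rearrangement $\mu^{-2\epsilon}\le1+\kappa(\epsilon)\mu^{-1}$; a short calculus computation gives $\kappa(\epsilon)=2\epsilon\,(1-2\epsilon)^{(1-2\epsilon)/(2\epsilon)}$, which is finite and tends to $0$ as $\epsilon\to0$, as required. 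Inserting this bound and then Cauchy--Schwarz under the integral, $\int\mu^{-\epsilon}d\nu\le(\int\mu^{-1}d\nu)^{1/2}(\int\mu\,d\nu+\kappa(\epsilon)\int d\nu)^{1/2}$; dividing by $q$ and using the three identifications yields the desired estimate, and combining with the displayed inequality proves (\ref{BogolyIneq}).

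\textbf{Anticipated main obstacle.} The projector bookkeeping of Step 1 and the two Cauchy--Schwarz steps are routine; the crux is the interpolation. A naive interpolation between the weights $\mu^{-1}$ and $\mu^{+1}$ via log-convexity produces the exponents $1\pm\epsilon$ rather than the two square roots in (\ref{BogolyIneq}), so one must identify the correct split $\mu^{-\epsilon}\le\mu^{-1/2}\cdot(\mu+\kappa(\epsilon))^{1/2}$ and verify that the additive correction --- $\kappa(\epsilon)$ times the total mass of $\nu$ --- is both finite and $o(1)$ as $\epsilon\to0$; this is precisely what the extra term $\kappa(\epsilon)\,\omega_B^{(\Lambda)}(\{C,C^\ast\})$ in the statement is there to absorb. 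One must also be careful that every inequality points the right way, in particular that $\omega_B^{(\Lambda)}(\{C,C^\ast\})$ is bounded \emph{below} by $q^{-1}\int d\nu$, which is the side on which it enters.
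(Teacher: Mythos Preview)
Your proof is correct and follows essentially the same route as the paper's proof in Appendix~A: the same reduction of $\omega_B^{(\Lambda)}([C,A])$ to off-diagonal blocks via $P_{\rm ex}$, the same first Cauchy--Schwarz split $\mathcal{H}^{-\epsilon/2}\cdot\mathcal{H}^{\epsilon/2}$, the same second Cauchy--Schwarz split $\mathcal{H}^{-1/2}\cdot\mathcal{H}^{1/2-\epsilon}$, and the same scalar inequality $t^{1-2\epsilon}\le\kappa(\epsilon)+t$. The only differences are cosmetic: you package the block computations in Hilbert--Schmidt language and collect the $C$-terms into a single spectral measure $\nu$, whereas the paper treats each term separately in the state $\omega_B^{(\Lambda)}$ and only combines them at the end via $2ab\le a^2+b^2$; you also compute $\kappa(\epsilon)=2\epsilon(1-2\epsilon)^{(1-2\epsilon)/(2\epsilon)}$ explicitly, which the paper leaves implicit.
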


The proof is given in Appendix~\ref{App::BogolyIneq}. The inequality (\ref{BogolyIneq}) is a slight extension of the 
Kennedy-Lieb-Shastry inequality \cite{KLS}. In fact, when the parameter $\epsilon$ is zero, it is nothing but 
the Kennedy-Lieb-Shastry inequality \cite{Momoi2}. 
\bigskip

First, in order to estimate $\tilde{D}_B^{(\Lambda)}(C)$ in the right-hand side of the inequality (\ref{BogolyIneq}), 
we use the inequality (\ref{chibound}) below which is obtained 
by relying on the reflection positivity of the model \cite{KLS}.

\begin{lemma}
Let $f=f_x$ for $x\in\Lambda$ be a real-valued function on the lattice $\Lambda$. Then, the following bound is valid: 
\begin{equation}
\label{chibound}
\omega_B^{(\Lambda)}(H_1'P_{\rm ex}^{(\Lambda)}(B)[\mathcal{H}_{\rm p}^{(\Lambda)}(B)]^{-1}H_1')
\le \sum_{\{x,y\}\subset\Lambda:|x-y|=1}\frac{1}{2}(f_x+f_y)^2,
\end{equation}
where 
\begin{equation}
\label{H1'}
H_1'=\sum_{\{x,y\}\subset\Lambda:|x-y|=1}\left[S_x^{(1)}+S_y^{(1)}\right](f_x+f_y).
\end{equation} 
\end{lemma}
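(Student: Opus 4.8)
The plan is to reduce (\ref{chibound}) to a Gaussian-domination lower bound on a perturbed ground-state energy, in the spirit of Dyson--Lieb--Simon and Kennedy--Lieb--Shastry \cite{DLS,KLS}. For $\lambda\in\re$ write $E_0^{(\Lambda)}(\lambda;B)$ for the lowest eigenvalue of $H_{\rm p}^{(\Lambda)}(B)-\lambda H_1'$. By the definition (\ref{Grstate}), the left-hand side of (\ref{chibound}) equals $q^{(\Lambda)}(B)^{-1}{\rm Tr}\,[P_0^{(\Lambda)}(B)KP_0^{(\Lambda)}(B)]$ with $K:=H_1'P_{\rm ex}^{(\Lambda)}(B)[\mathcal{H}_{\rm p}^{(\Lambda)}(B)]^{-1}H_1'\ge 0$; since $P_0^{(\Lambda)}(B)KP_0^{(\Lambda)}(B)$ is positive semidefinite on the ground-state sector, the left-hand side of (\ref{chibound}) is the average of its eigenvalues and hence is at most $\mu_{\max}:=\max{\rm spec}\,\big(P_0^{(\Lambda)}(B)KP_0^{(\Lambda)}(B)\big)$. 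Thus I would reduce everything to the single inequality
\begin{equation}
\label{eq:planGD}
E_0^{(\Lambda)}(\lambda;B)\ \ge\ E_0^{(\Lambda)}(B)-\frac{\lambda^2}{2}\sum_{\{x,y\}\subset\Lambda:|x-y|=1}(f_x+f_y)^2\qquad(\lambda\in\re),
\end{equation}
and derive from it that $\mu_{\max}\le\frac{1}{2}\sum_{\{x,y\}\subset\Lambda:|x-y|=1}(f_x+f_y)^2$.

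Granting (\ref{eq:planGD}), the remaining steps are elementary variational estimates. First, (\ref{eq:planGD}) forces $P_0^{(\Lambda)}(B)H_1'P_0^{(\Lambda)}(B)=0$: if there were a unit vector $\psi_0$ in the ground-state sector with $\langle\psi_0,H_1'\psi_0\rangle=e\neq 0$, using $\psi_0$ as a trial vector would give $E_0^{(\Lambda)}(\lambda;B)\le E_0^{(\Lambda)}(B)-\lambda e$, contradicting (\ref{eq:planGD}) for $\lambda$ of small absolute value and suitable sign. (This vanishing also follows directly: by (\ref{H1'}), $H_1'=\sum_{x\in\Lambda}\big(\sum_{y:|y-x|=1}(f_x+f_y)\big)S_x^{(1)}$ changes the total third spin component by $\pm1$, whereas the ground-state sector of $H_{\rm p}^{(\Lambda)}(B)$ lies in the eigenspace of vanishing total third spin component.) Next I would choose a unit vector $\psi_0$ in the ground-state sector with $\langle\psi_0,K\psi_0\rangle=\mu_{\max}$, set $\phi:=P_{\rm ex}^{(\Lambda)}(B)[\mathcal{H}_{\rm p}^{(\Lambda)}(B)]^{-1}H_1'\psi_0$, and use the trial vector $\psi_\lambda:=\psi_0+\lambda\phi$. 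Using $\langle\psi_0,\phi\rangle=0$, $\langle\phi,\mathcal{H}_{\rm p}^{(\Lambda)}(B)\phi\rangle=\mu_{\max}$, $\langle\psi_0,H_1'\psi_0\rangle=0$ and $\langle\psi_0,H_1'\phi\rangle=\mu_{\max}$, a direct computation gives
$$
\frac{\langle\psi_\lambda,(H_{\rm p}^{(\Lambda)}(B)-\lambda H_1')\psi_\lambda\rangle}{\langle\psi_\lambda,\psi_\lambda\rangle}=E_0^{(\Lambda)}(B)-\mu_{\max}\lambda^2+O(\lambda^3),
$$
so that $E_0^{(\Lambda)}(\lambda;B)\le E_0^{(\Lambda)}(B)-\mu_{\max}\lambda^2+O(\lambda^3)$. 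Combining this with (\ref{eq:planGD}), dividing by $\lambda^2$ and letting $\lambda\to 0$ gives $\mu_{\max}\le\frac{1}{2}\sum_{\{x,y\}\subset\Lambda:|x-y|=1}(f_x+f_y)^2$, which is what is needed.

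The hard part will be (\ref{eq:planGD}), which is a ground-state Gaussian-domination bound. I would first prove its finite-temperature version,
$$
{\rm Tr}\,\exp\!\big(-\beta\,(H_{\rm p}^{(\Lambda)}(B)-\lambda H_1')\big)\ \le\ \exp\!\left(\frac{\beta\lambda^2}{2}\sum_{\{x,y\}\subset\Lambda:|x-y|=1}(f_x+f_y)^2\right){\rm Tr}\,\exp\!\big(-\beta\,H_{\rm p}^{(\Lambda)}(B)\big),
$$
and then recover (\ref{eq:planGD}) by applying $-\beta^{-1}\log(\cdot)$ to both sides and letting $\beta\to\infty$, which is harmless on the finite-dimensional space $\mathfrak{H}_\Lambda$. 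For the finite-$\beta$ bound I would use that, for the first spin component, $\sum_{\{x,y\}:|x-y|=1}S_x^{(1)}S_y^{(1)}=\sum_{\{x,y\}:|x-y|=1}\frac{1}{2}(S_x^{(1)}+S_y^{(1)})^2-d\sum_{x\in\Lambda}(S_x^{(1)})^2$, so that completing the square absorbs $-\lambda H_1'$ into $\sum_{\{x,y\}:|x-y|=1}\frac{1}{2}\big(S_x^{(1)}+S_y^{(1)}-\lambda(f_x+f_y)\big)^2$ at the cost of exactly the Gaussian constant displayed above; then I would invoke the reflection positivity of the model together with the chessboard estimate of Kennedy--Lieb--Shastry \cite{KLS}, by which the partition function of the bond-field-shifted Hamiltonian is dominated by its value at vanishing bond field. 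The delicate point --- and the main obstacle --- is to verify carefully that the model, with the staggered field $-BO^{(\Lambda)}$ and all three spin components present, fits into the reflection-positive framework of \cite{DLS,KLS}, and that the bond operator produced by the completed square is precisely the one, $S_x^{(1)}+S_y^{(1)}$, to which $H_1'$ couples; once that bookkeeping is settled, the estimate is the standard Dyson--Lieb--Simon one.
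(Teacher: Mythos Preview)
Your proposal is correct and rests on the same core inequality as the paper, but you extract the susceptibility bound from it differently. Your (\ref{eq:planGD}) is exactly the paper's bound (\ref{E0bound}), rewritten: the paper's perturbed Hamiltonian is $H_{\rm p}^{(\Lambda)}(B,\lambda f)=H_{\rm p}^{(\Lambda)}(B)+\lambda H_1'+\frac{\lambda^2}{2}\sum_{\{x,y\}}(f_x+f_y)^2$, so its ground-state inequality $E_0^{(\Lambda)}(B,\lambda f)\ge E_0^{(\Lambda)}(B)$ is your (\ref{eq:planGD}) after moving the scalar to the right-hand side. The paper proves this directly at zero temperature by the Kennedy--Lieb--Shastry reflection-positivity argument (Appendix~\ref{RP}), whereas you propose the finite-$\beta$ Gaussian-domination route of \cite{DLS,KLS} followed by $\beta\to\infty$; both are valid, and the finite-$\beta$ detour has the mild advantage that one never has to worry about ground-state degeneracy during the reflection-positivity step. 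From the energy inequality, the paper reads off (\ref{chibound}) via Kato's analytic perturbation theory (contour-integral resolvent expansion in Appendix~\ref{appen:chibound}), obtaining $\omega_B^{(\Lambda)}(H_1')=0$ and the second-order bound in one stroke; your trial-state computation with $\psi_\lambda=\psi_0+\lambda\phi$ is more elementary and yields the slightly sharper intermediate statement that even the largest eigenvalue of $P_0^{(\Lambda)}(B)\,H_1'P_{\rm ex}^{(\Lambda)}(B)[\mathcal{H}_{\rm p}^{(\Lambda)}(B)]^{-1}H_1'\,P_0^{(\Lambda)}(B)$ obeys the bound, from which the averaged quantity in (\ref{chibound}) follows. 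One small caveat: your parenthetical claim that the ground-state sector lies in the zero eigenspace of total $S^{(3)}$ is not obvious for general $B$, but you do not need it, since your first-order variational argument already forces $P_0^{(\Lambda)}(B)H_1'P_0^{(\Lambda)}(B)=0$ (for a self-adjoint operator, vanishing of all expectation values on the sector implies vanishing of the compressed operator by polarization).
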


\begin{proof}{Proof}
Consider a Hamiltonian, 
\begin{eqnarray}
\label{HamBf}
H_{\rm p}^{(\Lambda)}(B,f)&:=&\sum_{\{x,y\}\subset\Lambda:|x-y|=1}\left[S_x^{(2)}S_y^{(2)}+S_x^{(3)}S_y^{(3)}\right]
-BO^{(\Lambda)}\nonumber \\
&+&\frac{1}{2}\sum_{\{x,y\}\subset\Lambda:|x-y|=1}\left[(S_x^{(1)}+S_y^{(1)}+f_x+f_y)^2-(S_x^{(1)})^2-(S_y^{(1)})^2\right].
\end{eqnarray}
Clearly, this is written 
\begin{equation}
H_{\rm p}^{(\Lambda)}(B,f)=H_{\rm p}^{(\Lambda)}(B)
+\sum_{\{x,y\}\subset\Lambda:|x-y|=1}\left[S_x^{(1)}+S_y^{(1)}\right](f_x+f_y)
+\sum_{\{x,y\}\subset\Lambda:|x-y|=1}\frac{1}{2}(f_x+f_y)^2,
\end{equation}
where the first term in the right-hand side is given by (\ref{H(B)}). 
Applying the method of the reflection positivity yields that the ground-state energy $E_0^{(\Lambda)}(B,f)$ of 
the Hamiltonian $H_{\rm p}^{(\Lambda)}(B,f)$ satisfies \cite{KLS} 
\begin{equation}
\label{E0bound}
E_0^{(\Lambda)}(B,f)\ge E_0^{(\Lambda)}(B,0)=E_0^{(\Lambda)}(B)\ \ \mbox{for any \ } f.
\end{equation}
Namely, the minimum value of the ground-state energy is given by $f=0$. 
The proof is given in Appendix~\ref{RP}. From this inequality, one has  
the bound (\ref{chibound}). The proof is given in Appendix~\ref{appen:chibound}.
\end{proof} 

For the inequality (\ref{BogolyIneq}), we choose $A=\mathcal{A}_R$ of (\ref{AR}) and $C=H_1'$ of (\ref{H1'}) 
with the function $f_x$ which is given by 
\begin{equation}
\label{fx}
f_x=\cases{1, & $x\in\Omega_{R+1}$;\cr
1-[|x|_\infty-(R+1)]/R, & $x\in\Omega_{2R}\backslash\Omega_{R+1}$;\cr
0, & otherwise,}
\end{equation}
where $\Omega_R$ is given by (\ref{OmegaR}) with the positive integer $R$. Then, one has 
\begin{equation}
\label{estD}
\tilde{D}_B^{(\Lambda)}(C)=\tilde{D}_B^{(\Lambda)}(H_1')\le {\rm Const.}R^d
\end{equation}
{from} the definition (\ref{tildeD}) of $\tilde{D}_B^{(\Lambda)}$ and the inequality (\ref{chibound}).

The commutator $[C,A]=[H_1',\mathcal{A}_R]$ in the left-hand side of (\ref{BogolyIneq}) is calculated as 
\begin{eqnarray}
\label{CommuCA}
[C,A]=[H_1',\mathcal{A}_R]&=&\frac{1}{|\Omega_R|}\sum_{x\in\Omega_R}(-1)^{x^{(1)}+\cdots+x^{(d)}}[H_1',S_x^{(2)}]\nonumber\\
&=&\frac{2d}{|\Omega_R|}\sum_{x\in\Omega_R}(-1)^{x^{(1)}+\cdots+x^{(d)}}iS_x^{(3)}
=\frac{2di}{|\Omega_R|}O^{(\Omega_R)}.
\end{eqnarray}
Therefore, one has 
\begin{equation}
\label{omegaBLambda[C,A]}
\omega_B^{(\Lambda)}([C,A])=\frac{2di}{|\Omega_R|}\omega_B^{(\Lambda)}(O^{(\Omega_R)}).
\end{equation}
This expectation value is nothing but the staggered magnetization which is nonvanishing \cite{KHvdL,KomaTasaki} for 
taking the infinite-volume limit $\Lambda\nearrow\ze^d$ first and then the zero field limit $B\searrow 0$. 
We write 
\begin{equation}
\label{msLambdaB}
m_{\rm s}^{(\Lambda)}(B):=\frac{1}{|\Omega_R|}\omega_B^{(\Lambda)}(O^{(\Omega_R)}).
\end{equation}
By using the lattice shift and the rotation by $\pi$ about the 2 axis in the spin space, one can easily 
show 
\begin{equation}
(-1)^{x^{(1)}+\cdots+x^{(d)}}\omega_B^{(\Lambda)}(S_x^{(3)})
=(-1)^{y^{(1)}+\cdots+y^{(d)}}\omega_B^{(\Lambda)}(S_y^{(3)})
\end{equation}
for even $(x^{(1)}+\cdots+x^{(d)})$ and odd $(y^{(1)}+\cdots+y^{(d)})$. 
We recall that the infinite-volume ground state $\omega_0$ of (\ref{omega0}) exhibits 
the maximum spontaneous magnetization $m_{\rm s}$. From these observations, we obtain 
\begin{equation}
m_{\rm s}=\frac{1}{|\Omega_R|}\omega_0(O^{(\Omega_R)}).
\end{equation}
Namely, 
\begin{equation}
\label{mslimitzero}
m_{\rm s}=\lim_{B\searrow 0}\lim_{\Lambda\nearrow \ze^d}m_{\rm s}^{(\Lambda)}(B)
\end{equation}
if the same sequences as in (\ref{omega0}) are taken in the double limit. 

Further, one obtains 
\begin{eqnarray}
\label{estH1SxSyH1}
[[C^\ast,H_{\rm p}^{(\Lambda)}(B)],C]&=&[[C^\ast,H_{0,{\rm p}}^{(\Lambda)}],C]
-B[[C^\ast,O^{(\Lambda)}],C] \nonumber \\
&=&\sum_{\{x,y\}\subset\Omega_{2R+1}:|x-y|=1}[[H_1',{\bf S}_x\cdot{\bf S}_y],H_1']-B[[H_1',O^{(\Omega_{2R+1})}],H_1']
\nonumber \\ 
\end{eqnarray}
for the double commutator in the right-hand side of (\ref{BogolyIneq}). 
The sum in the right-hand side can be estimated by using the U(1) symmetry of the Hamiltonian 
(rotation around the $1$-axis in the spin space). This symmetry reduces the order of the quantity 
{from} ${\cal O}(R^d)$ to ${\cal O}(R^{d-2})$ as follows: 

\begin{lemma}
\begin{equation}
\label{estsumH1SxSyH1}
\sum_{\{x,y\}\subset\Omega_{2R+1}:|x-y|=1}\Vert[[H_1',{\bf S}_x\cdot{\bf S}_y],H_1']\Vert
\le {\rm Const.} R^{d-2}. 
\end{equation}
\end{lemma}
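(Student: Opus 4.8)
The plan is to evaluate the double commutator $[[H_1',{\bf S}_x\cdot{\bf S}_y],H_1']$ in closed form by exploiting \emph{twice} the invariance of the Heisenberg bond ${\bf S}_x\cdot{\bf S}_y$ under rotations about the $1$-axis in spin space (the U(1) symmetry noted in the text), and then to sum the resulting bound over the $O(R^d)$ nearest-neighbor bonds contained in $\Omega_{2R+1}$. First I would rewrite the operator (\ref{H1'}) as a weighted single-site sum, $H_1'=\sum_{z}g_z S_z^{(1)}$ with $g_z:=2d f_z+\sum_{w:|w-z|=1}f_w$.

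For a fixed bond $\{x,y\}$ with $|x-y|=1$, every term $S_z^{(1)}$ with $z\notin\{x,y\}$ commutes with ${\bf S}_x\cdot{\bf S}_y$, so $[H_1',{\bf S}_x\cdot{\bf S}_y]=[g_xS_x^{(1)}+g_yS_y^{(1)},{\bf S}_x\cdot{\bf S}_y]$. The key identity is $[S_x^{(1)}+S_y^{(1)},{\bf S}_x\cdot{\bf S}_y]=0$; writing $g_xS_x^{(1)}+g_yS_y^{(1)}=g_y(S_x^{(1)}+S_y^{(1)})+(g_x-g_y)S_x^{(1)}$ annihilates the first piece and leaves
$$
[H_1',{\bf S}_x\cdot{\bf S}_y]=(g_x-g_y)\,[S_x^{(1)},{\bf S}_x\cdot{\bf S}_y]=i(g_x-g_y)\bigl(S_x^{(3)}S_y^{(2)}-S_x^{(2)}S_y^{(3)}\bigr),
$$
so the $O(1)$ weights $g_x,g_y$ enter only through their discrete gradient $g_x-g_y$. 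Since $S_x^{(3)}S_y^{(2)}-S_x^{(2)}S_y^{(3)}$ is, up to a sign, the $1$-component of ${\bf S}_x\times{\bf S}_y$, it too commutes with $S_x^{(1)}+S_y^{(1)}$, so exactly the same manipulation applied to the outer commutator with $H_1'$ produces a second factor $g_x-g_y$ and the closed form
$$
[[H_1',{\bf S}_x\cdot{\bf S}_y],H_1']=-(g_x-g_y)^2\bigl(S_x^{(2)}S_y^{(2)}+S_x^{(3)}S_y^{(3)}\bigr),
$$
whence $\Vert[[H_1',{\bf S}_x\cdot{\bf S}_y],H_1']\Vert\le{\rm Const.}\,(g_x-g_y)^2$, the constant depending only on the magnitude $S$ of the spin.

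It then remains to bound $\sum_{\{x,y\}\subset\Omega_{2R+1}:|x-y|=1}(g_x-g_y)^2$. From the explicit choice (\ref{fx}) the function $f$ is Lipschitz of constant $1/R$ in the graph metric of $\ze^d$, i.e.\ $|f_x-f_y|\le 1/R$ whenever $|x-y|=1$; since $g_z$ is a fixed finite-range local average of $f$, it inherits $|g_x-g_y|\le{\rm Const.}/R$ for every nearest-neighbor bond. As $\Omega_{2R+1}$ contains at most ${\rm Const.}\,R^d$ bonds, this gives
$$
\sum_{\{x,y\}\subset\Omega_{2R+1}:|x-y|=1}\Vert[[H_1',{\bf S}_x\cdot{\bf S}_y],H_1']\Vert\le{\rm Const.}\,R^d\cdot R^{-2}={\rm Const.}\,R^{d-2},
$$
which is (\ref{estsumH1SxSyH1}).

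The step that carries the whole argument is the exact double-commutator computation: one has to observe that both ${\bf S}_x\cdot{\bf S}_y$ and $({\bf S}_x\times{\bf S}_y)^{(1)}$ are invariant under the diagonal U(1) action on the bond $\{x,y\}$, so each commutator against $H_1'$ can only see the gradient $g_x-g_y$ of the weights and never the weights themselves. Without this observation each summand is merely $O(1)$ and the bound is only $O(R^d)$; the two U(1) cancellations are precisely what supply the two powers of $1/R$. The remaining ingredients—rewriting $H_1'$, the spin commutation relations, and the Lipschitz-plus-volume count—are routine.
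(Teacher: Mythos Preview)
Your argument is correct and follows essentially the same route as the paper: both proofs exploit the commutation of $S_x^{(1)}+S_y^{(1)}$ with ${\bf S}_x\cdot{\bf S}_y$ (and then with the resulting cross-product component) to convert each of the two commutators against $H_1'$ into a discrete-gradient factor of order $1/R$, after which the $O(R^d)$ bond count gives $O(R^{d-2})$. The only difference is cosmetic: the paper subtracts a common scalar multiple of $S_x^{(1)}+S_y^{(1)}$ and bounds the remainder $\Delta S_x^{(1)}+\Delta S_y^{(1)}$ in norm, whereas you subtract $g_y(S_x^{(1)}+S_y^{(1)})$ and push the computation all the way to the explicit closed form $-(g_x-g_y)^2(S_x^{(2)}S_y^{(2)}+S_x^{(3)}S_y^{(3)})$, which is a slightly cleaner endpoint but not a different idea.
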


\begin{proof}{Proof} {From} the definition (\ref{H1'}) of $H_1'$, one has 
\begin{equation}
\label{doublecommuH1SS}
[[H_1',{\bf S}_x\cdot{\bf S}_y],H_1']
=4[[\tilde{S}_x^{(1)}+\tilde{S}_y^{(1)},{\bf S}_x\cdot{\bf S}_y],\tilde{S}_x^{(1)}+\tilde{S}_y^{(1)}],
\end{equation}
where 
$$
\tilde{S}_x^{(1)}:=\sum_{y':|x-y'|=1}S_x^{(1)}(f_x+f_{y'}).
$$
Note that 
$$
\tilde{S}_x^{(1)}-4df_xS_x^{(1)}=\sum_{y':|x-y'|=1}S_x^{(1)}(f_{y'}-f_x)=:\Delta S_x^{(1)}
$$
and 
$$
\tilde{S}_y^{(1)}-4df_xS_y^{(1)}=\sum_{x':|x'-y|=1}S_y^{(1)}(f_y-f_x+f_{x'}-f_x)=:\Delta S_y^{(1)}
$$
for $y$ satisfying $|x-y|=1$. Substituting these into the right-hand side of the above double 
commutator (\ref{doublecommuH1SS}), one has 
\begin{eqnarray*}
[[H_1',{\bf S}_x\cdot{\bf S}_y],H_1']
&=&16d f_x[[\Delta S_x^{(1)}+\Delta S_y^{(1)},{\bf S}_x\cdot{\bf S}_y],S_x^{(1)}+S_y^{(1)}]\nonumber\\
&+&4[[\Delta S_x^{(1)}+\Delta S_y^{(1)},{\bf S}_x\cdot{\bf S}_y],\Delta S_x^{(1)}+\Delta S_y^{(1)}].
\end{eqnarray*} 
Here, one can easily show that the first term in the right-hand side is vanishing. 
{From} the definitions of the function $f_x$, $\Delta S_x^{(1)}$ and $\Delta S_y^{(1)}$,  
the norms of these operators can be estimated as 
$$
\Vert \Delta S_x^{(1)}\Vert \le {\rm Const.}\frac{1}{R} \quad \mbox{and} \quad 
\Vert \Delta S_y^{(1)}\Vert \le {\rm Const.}\frac{1}{R}.
$$
Therefore, 
$$
\Vert[[H_1',{\bf S}_x\cdot{\bf S}_y],H_1']\Vert\le{\rm Const.}\frac{1}{R^2}.
$$
Consequently, one has the bound (\ref{estsumH1SxSyH1}). 
\end{proof}

{From} (\ref{estsumH1SxSyH1}) and (\ref{estH1SxSyH1}), one has 
\begin{equation}
\label{estsumH1SxSyH1p}
|\omega_B^{(\Lambda)}([[C^\ast,H_{\rm p}^{(\Lambda)}(B)],C])|
\le {\rm Const.}R^{d-2}+{\rm Const.}|B|R^d
\end{equation}
with $C=H_1'$. 

Now let us estimate the denominator of the expectation value (\ref{lowenergy}) of the excitation energy. 
Combining (\ref{BogolyIneq}), (\ref{estD}), (\ref{omegaBLambda[C,A]}), (\ref{msLambdaB}) 
and (\ref{estsumH1SxSyH1p}), we obtain 
\begin{equation}
\label{denomibound}
|m_{\rm s}^{(\Lambda)}(B)|^2\le\mathcal{K}_0R^{d-1}\Bigl[1+\mathcal{K}_1\kappa(\epsilon)R^{d+2}
+\mathcal{K}_2|B|R^2\Bigr]^{1/2}\times\omega_B^{(\Lambda)}(\mathcal{A}_R
[\mathcal{H}_{\rm p}^{(\Lambda)}(B)]^\epsilon\mathcal{A}_R),
\end{equation}
where $\mathcal{K}_0$, $\mathcal{K}_1$ and $\mathcal{K}_2$ are a positive constant. 

\subsection{Estimate of the numerator in the right-hand side of (\ref{lowenergy})}

Next let us estimate the numerator of (\ref{lowenergy}).
For the Hamiltonian $\mathcal{H}_{\rm p}^{(\Lambda)}(B)$, we denote by $P(E',+\infty)$ the spectral projection 
onto the energies which are larger than $E'>0$. We also write $P[0,E'):=1-P(E',+\infty)$. 
Note that
\begin{eqnarray*}
& &\omega_B^{(\Lambda)}(\mathcal{A}_R[\mathcal{H}_{\rm p}^{(\Lambda)}(B)]^{1+\epsilon}\mathcal{A}_R)\\
&=&\omega_B^{(\Lambda)}(\mathcal{A}_RP[0,E')[\mathcal{H}_{\rm p}^{(\Lambda)}(B)]^{1+\epsilon}\mathcal{A}_R)
+\omega_B^{(\Lambda)}(\mathcal{A}_RP(E',+\infty)[\mathcal{H}_{\rm p}^{(\Lambda)}(B)]^{1+\epsilon}\mathcal{A}_R)\\
&\le&\omega_B^{(\Lambda)}(\mathcal{A}_RP[0,E')\mathcal{H}_{\rm p}^{(\Lambda)}(B)\mathcal{A}_R)
\times(E')^{\epsilon}+\omega_B^{(\Lambda)}(\mathcal{A}_RP(E',+\infty)[\mathcal{H}_{\rm p}^{(\Lambda)}(B)]^3\mathcal{A}_R)
\times (E')^{\epsilon-2}\\
&\le&\omega_B^{(\Lambda)}(\mathcal{A}_R\mathcal{H}_{\rm p}^{(\Lambda)}(B)\mathcal{A}_R)
\times(E')^{\epsilon}
+\omega_B^{(\Lambda)}(\mathcal{A}_R[\mathcal{H}_{\rm p}^{(\Lambda)}(B)]^3\mathcal{A}_R)\times (E')^{\epsilon-2}.
\end{eqnarray*}
The first term in the right-hand side in the last line can be estimated as 
\begin{equation}
\label{omegaBAHAexpvalue}
\omega_B^{(\Lambda)}(\mathcal{A}_R\mathcal{H}_{\rm p}^{(\Lambda)}(B)\mathcal{A}_R)
=\frac{1}{2}\omega_B^{(\Lambda)}([\mathcal{A}_R,[H_{\rm p}^{(\Lambda)}(B),\mathcal{A}_R]])
\le\frac{{\cal K}_3}{R^d}. 
\end{equation}
with the positive constant ${\cal K}_3$. Similarly, the second term is evaluated as 
\begin{eqnarray}
\label{omegaARPEPH3AR}
\omega_B^{(\Lambda)}(\mathcal{A}_R[\mathcal{H}_{\rm p}^{(\Lambda)}(B)]^3\mathcal{A}_R)
&\le&\omega_B^{(\Lambda)}(\mathcal{A}_R[\mathcal{H}_{\rm p}^{(\Lambda)}(B)]^3\mathcal{A}_R)\nonumber\\
&=&\omega_B^{(\Lambda)}([\mathcal{A}_R,H_{\rm p}^{(\Lambda)}(B)]\mathcal{H}_{\rm p}^{(\Lambda)}(B)
[H_{\rm p}^{(\Lambda)}(B),\mathcal{A}_R])\nonumber\\
&=&\omega_B^{(\Lambda)}(\mathcal{B}_R\mathcal{H}_{\rm p}^{(\Lambda)}(B)\mathcal{B}_R)\nonumber\\
&=&\frac{1}{2}\omega_B^{(\Lambda)}([\mathcal{B}_R,[H_{\rm p}^{(\Lambda)}(B),\mathcal{B}_R]])
\le\frac{{\cal K}_4}{R^d}
\end{eqnarray}
with the positive constant ${\cal K}_4$, 
where we have written $\mathcal{B}_R:=i[H_{\rm p}^{(\Lambda)}(B),\mathcal{A}_R]$, and used the assumption that 
the interactions of the Hamiltonian $H_{\rm p}^{(\Lambda)}(B)$ are of finite range.  
{From} these observations, we obtain 
\begin{equation}
\label{finalnumebound}
\omega_B^{(\Lambda)}(\mathcal{A}_R[\mathcal{H}_{\rm p}^{(\Lambda)}(B)]^{1+\epsilon}\mathcal{A}_R)
\le \frac{1}{R^d}\left[\mathcal{K}_3(E')^\epsilon+\mathcal{K}_4(E')^{\epsilon-2}\right]
\le \frac{{\cal K}_3+{\cal K}_4}{R^d},
\end{equation}
where we have chosen $E'=1$. 

\subsection{Proof of Theorem~\ref{thm:GSgapless}}
\label{ProofTheoremGSgapless}

For the parameter, $\epsilon$, we choose $\epsilon$ which satisfies 
$$
\kappa(\epsilon)R^{d+2}\le 1 
$$
for a fixed $R$. Then, the inequality (\ref{denomibound}) is written 
\begin{equation}
\label{finaldenomibound}
|m_{\rm s}^{(\Lambda)}(B)|^2\le\mathcal{K}_0R^{d-1}\Bigl[1+\mathcal{K}_1
+\mathcal{K}_2|B|R^2\Bigr]^{1/2}\times\omega_B^{(\Lambda)}(\mathcal{A}_R
[\mathcal{H}_{\rm p}^{(\Lambda)}(B)]^\epsilon\mathcal{A}_R).
\end{equation}
{From} this and (\ref{finalnumebound}), we can obtain the desired bound 
$$
\varphi_{B,\epsilon,\mathcal{A}}^{(\Lambda)}(\mathcal{H}_{\rm p}^{(\Lambda)}(B))
\le \frac{\mathcal{K}_0({\cal K}_3+{\cal K}_4)}{R|m_{\rm s}^{(\Lambda)}(B)|^2}
\Bigl[1+\mathcal{K}_1+\mathcal{K}_2|B|R^2\Bigr]^{1/2} 
$$
for the expectation value of (\ref{lowenergy}). In the double limit $B\searrow 0$ and 
$\Lambda\nearrow\ze^d$, the excitation energy is bounded by ${\rm Const.}/R$ because 
the spontaneous magnetization is non-vanishing. 
Although this is the desired result, the cutoff 
$[\mathcal{H}_{\rm p}^{(\Lambda)}(B)]^{\epsilon/2}$ in the expression (\ref{lowenergy}) 
is slightly singular at the zero energy. 
Therefore, we approximate the function $(\cdots)^{\epsilon/2}$ with an infinitely differentiable function on $\re$ 
with compact support, i.e., a function in $C_0^\infty(\re)$ . 

To begin with, we extend the function $s^{\epsilon/2}$ for $s\ge 0$ to that for $s<0$ as follows: 
$$
\eta(s):=\cases{s^{\epsilon/2} & for $s\ge 0$; \cr 
                0 & for $s<0$. \cr}
$$
Next, we introduce $\hat{g}_1\in C^\infty(\re)$ which satisfies the conditions, 
$$
\hat{g}_1(s)=\cases{1 & for $s\le \gamma_1$; \cr 
                0 & for $s\ge \gamma_2$, \cr}
$$
and $0\le \hat{g}_1(s)\le 1$, where $\gamma_1$ and $\gamma_2$ satisfy $0<\gamma_1<\gamma_2<\infty$.  
Clearly, $[\eta(s)]^2$ can be decomposed into two parts, 
\begin{equation}
\label{eta2decompo}
[\eta(s)]^2=[\eta(s)]^2[\hat{g}_1(s)]^2+[\eta(s)]^2\{1-[\hat{g}_1(s)]^2\}.
\end{equation}
Then, the function $\hat{g}_1$ can be chosen so that the second term satisfies 
\begin{equation}
\label{eta22bound}
[\eta(s)]^2\{1-[\hat{g}_1(s)]^2\}\le{\cal M}_1 s \quad \mbox{for \ } s\ge 0 
\end{equation}
with a small positive parameter ${\cal M}_1$. 
Then, one has 
\begin{equation}
\label{newnumebound} 
\omega_B^{(\Lambda)}(\mathcal{A}_R\mathcal{H}_{\rm p}^{(\Lambda)}(B)
[\eta(\mathcal{H}_{\rm p}^{(\Lambda)}(B))]^2[\hat{g}_1(\mathcal{H}_{\rm p}^{(\Lambda)}(B))]^2\mathcal{A}_R)\le 
\omega_B^{(\Lambda)}(\mathcal{A}_R[\mathcal{H}_{\rm p}^{(\Lambda)}(B)]^{1+\epsilon}\mathcal{A}_R)
\end{equation}
for the left-hand side of (\ref{finalnumebound}). 
Further, we approximate the function $\eta\hat{g}_1$ with a function $\hat{g}\in C_0^\infty(\re)$ such that  
the function $\hat{g}$ satisfies the conditions, $\eta(s)\hat{g}_1(s)\ge \hat{g}(s)\ge 0$ and 
\begin{equation}
\label{hath}
\hat{h}(s):=[\eta(s)\hat{g}_1(s)]^2-[\hat{g}(s)]^2\le\frac{{\cal M}_2}{R^{d-1}} 
\end{equation}
with a small positive parameter ${\cal M}_2$. By using the first condition, we have 
\begin{equation}
\omega_B^{(\Lambda)}(\mathcal{A}_R\hat{g}(\mathcal{H}_{\rm p}^{(\Lambda)}(B))
\mathcal{H}_{\rm p}^{(\Lambda)}(B)\hat{g}(\mathcal{H}_{\rm p}^{(\Lambda)}(B))\mathcal{A}_R)
\le \omega_B^{(\Lambda)}(\mathcal{A}_R[\mathcal{H}_{\rm p}^{(\Lambda)}(B)]^{1+\epsilon}\mathcal{A}_R).
\end{equation}
{from} the above inequality (\ref{newnumebound}). Combining this with (\ref{finalnumebound}), we obtain 
\begin{equation}
\label{newfinalnumebound}
\omega_B^{(\Lambda)}(\mathcal{A}_R\hat{g}(\mathcal{H}_{\rm p}^{(\Lambda)}(B))
\mathcal{H}_{\rm p}^{(\Lambda)}(B)\hat{g}(\mathcal{H}_{\rm p}^{(\Lambda)}(B))\mathcal{A}_R)
\le \frac{1}{R^d}(\mathcal{K}_3+{\mathcal{K}_4}).
\end{equation}

Next, we estimate the quantity $\omega_B^{(\Lambda)}(\mathcal{A}_R
[\mathcal{H}_{\rm p}^{(\Lambda)}(B)]^\epsilon\mathcal{A}_R)$ in the right-hand side of (\ref{finaldenomibound}). 
{From} (\ref{eta2decompo}) and (\ref{eta22bound}), one has 
\begin{eqnarray*}
\omega_B^{(\Lambda)}(\mathcal{A}_R
[\mathcal{H}_{\rm p}^{(\Lambda)}(B)]^\epsilon\mathcal{A}_R)
&\le& \omega_B^{(\Lambda)}(\mathcal{A}_R[\eta(\mathcal{H}_{\rm p}^{(\Lambda)}(B))]^2
[\hat{g}_1(\mathcal{H}_{\rm p}^{(\Lambda)}(B))]^2\mathcal{A}_R)\\
&+&{\cal M}_1\omega_B^{(\Lambda)}(\mathcal{A}_R
\mathcal{H}_{\rm p}^{(\Lambda)}(B)\mathcal{A}_R)\\
&\le& \omega_B^{(\Lambda)}(\mathcal{A}_R[\eta(\mathcal{H}_{\rm p}^{(\Lambda)}(B))]^2
[\hat{g}_1(\mathcal{H}_{\rm p}^{(\Lambda)}(B))]^2\mathcal{A}_R)+\frac{{\cal M}_1{\cal K}_3}{R^d},
\end{eqnarray*}
where we have used the inequality (\ref{omegaBAHAexpvalue}). 
Further, by using (\ref{hath}), the first term in the right-hand side is evaluated as 
\begin{eqnarray}
\omega_B^{(\Lambda)}(\mathcal{A}_R[\eta(\mathcal{H}_{\rm p}^{(\Lambda)}(B))]^2
[\hat{g}_1(\mathcal{H}_{\rm p}^{(\Lambda)}(B))]^2\mathcal{A}_R)
&=&\omega_B^{(\Lambda)}(\mathcal{A}_R[\hat{g}(\mathcal{H}_{\rm p}^{(\Lambda)}(B))]^2\mathcal{A}_R)\nonumber\\
&+&\omega_B^{(\Lambda)}(\mathcal{A}_R\hat{h}(\mathcal{H}_{\rm p}^{(\Lambda)}(B))\mathcal{A}_R)\nonumber \\
&\le&\omega_B^{(\Lambda)}(\mathcal{A}_R[\hat{g}(\mathcal{H}_{\rm p}^{(\Lambda)}(B))]^2\mathcal{A}_R)
+\frac{{\cal M}_2{\cal K}_3}{R^{d-1}}.\nonumber\\
\end{eqnarray}
Combining this with the above inequality, we obtain 
\begin{equation}
\omega_B^{(\Lambda)}(\mathcal{A}_R
[\mathcal{H}_{\rm p}^{(\Lambda)}(B)]^\epsilon\mathcal{A}_R)
\le \omega_B^{(\Lambda)}(\mathcal{A}_R[\hat{g}(\mathcal{H}_{\rm p}^{(\Lambda)}(B))]^2\mathcal{A}_R)
+\frac{{\cal M}_2{\cal K}_3}{R^{d-1}}+\frac{{\cal M}_1{\cal K}_3}{R^d}.
\end{equation}
Therefore, from (\ref{finaldenomibound}), we have 
\begin{equation}
\label{newfinaldenomibound}
\frac{|m_{\rm s}^{(\Lambda)}(B)|^2}{\mathcal{K}_0R^{d-1}\Bigl[1+\mathcal{K}_1
+\mathcal{K}_2|B|R^2\Bigr]^{1/2}}-\frac{{\cal M}_2{\cal K}_3}{R^{d-1}}-\frac{{\cal M}_1{\cal K}_3}{R^d}\le
\omega_B^{(\Lambda)}(\mathcal{A}_R[\hat{g}(\mathcal{H}_{\rm p}^{(\Lambda)}(B))]^2\mathcal{A}_R).
\end{equation}

In order to express the right-hand side of (\ref{newfinaldenomibound}) and the left-hand side of 
(\ref{newfinalnumebound}) in terms of a quasi-local operator, we introduce the time evolution \cite{BR} of local 
operator $a$, 
$$ 
\tau_{t,B}^{(\Lambda)}(a):=\exp[iH_{\rm p}^{(\Lambda)}(B)t]a\exp[-iH_{\rm p}^{(\Lambda)}(B)t], 
$$
and 
$$
\tau_{\ast g,B}^{(\Lambda)}(a):=\int_{-\infty}^{+\infty}dt\; g(t)\tau_{t,B}^{(\Lambda)}(a),
$$
where the function $g$ is the Fourier transform of the function $\hat{g}$. 
Then, the left-hand side of (\ref{newfinalnumebound}) is written \cite{AL,Koma} 
$$
\omega_B^{(\Lambda)}(\mathcal{A}_R\hat{g}(\mathcal{H}_{\rm p}^{(\Lambda)}(B))
\mathcal{H}_{\rm p}^{(\Lambda)}(B)\hat{g}(\mathcal{H}_{\rm p}^{(\Lambda)}(B))\mathcal{A}_R)
=\omega_B^{(\Lambda)}([\tau_{\ast g,B}^{(\Lambda)}(\mathcal{A}_R)]^\ast
[H_{\rm p}^{(\Lambda)}(B),\tau_{\ast g,B}^{(\Lambda)}(\mathcal{A}_R)]).
$$
Similarly, the right-hand side of (\ref{newfinaldenomibound}) is written $$
\omega_B^{(\Lambda)}(\mathcal{A}_R[\hat{g}(\mathcal{H}_{\rm p}^{(\Lambda)}(B))]^2\mathcal{A}_R)
=\omega_B^{(\Lambda)}([\tau_{\ast g,B}^{(\Lambda)}(\mathcal{A}_R)]^\ast\tau_{\ast g,B}^{(\Lambda)}(\mathcal{A}_R)).
$$
Moreover, since the operator $\tau_{\ast g,B}^{(\Lambda)}(a)$ converges to 
$\tau_{* g, 0}(a)$ for $a\in\mathfrak{A}_{\rm loc}$ 
in the double limit \cite{BR} in the definition (\ref{omega0}) of the state $\omega_0$, we have 
$$
\lim_{\Lambda\nearrow\ze^d}\omega_0([\tau_{*g,0}({\cal A}_R)]^\ast[H_{\rm p}^{(\Lambda)}(0),\tau_{*g,0}({\cal A}_R)])
\le \frac{{\cal K}_3+{\cal K}_4}{R^d}
$$
and 
$$
\frac{m_{\rm s}^2}{{\cal K}_0(1+{\cal K}_1)R^{d-1}}
-\frac{{\cal M}_2{\cal K}_3}{R^{d-1}}-\frac{{\cal M}_1{\cal K}_3}{R^d}
\le \omega_0([\tau_{*g,0}({\cal A}_R)]^\ast\tau_{*g,0}({\cal A}_R))
$$
{from} (\ref{newfinalnumebound}) and (\ref{newfinaldenomibound}), where $\omega_0$ is given by (\ref{omega0}).  
Since we can choose the parameters, ${\cal M}_1$ and ${\cal M}_2$, to be small, the latter bound can be written 
$$
\frac{m_{\rm s}^2}{{\cal K}_0'(1+{\cal K}_1)R^{d-1}}
\le \omega_0([\tau_{*g,0}({\cal A}_R)]^\ast\tau_{*g,0}({\cal A}_R))
$$
with some positive constant ${\cal K}_0'$. In consequence, we obtain 
$$
\lim_{\Lambda\nearrow\ze^d}
\frac{\omega_0([\tau_{*g,0}({\cal A}_R)]^\ast[H_{\rm p}^{(\Lambda)}(0),\tau_{*g,0}({\cal A}_R)])}
{\omega_0([\tau_{*g,0}({\cal A}_R)]^\ast\tau_{*g,0}({\cal A}_R))}
\le \frac{{\cal K}_0'(1+{\cal K}_1)({\cal K}_3+{\cal K}_4)}{m_{\rm s}^2R}.
$$
We recall that \cite{BR} the support of the energy cutoff function $\hat{g}$ satisfies 
${\rm supp}\;\hat{g}\subseteq(0,\gamma_2)$
by the definition, and that we can take the parameter $R$ to be any large positive integer.  
The resulting inequality implies that the excited energy spectrum is gapless. 

On the other hand, as we remarked below Theorem~\ref{thm:sponmag}, the state $\omega_0$ exhibits 
the non-vanishing maximum spontaneous magnetization from the assumption of Theorem~\ref{thm:GSgapless} 
that an infinite-volume ground state exhibits a non-vanishing spontaneous magnetization. 
In order to prove the statement of Theorem~\ref{thm:GSgapless}, let us consider the pure-state 
decomposition \cite{BR1} of the state $\omega_0$. Since the state $\omega_0$ shows the maximum spontaneous magnetization, 
almost all the pure states in the decomposition must show the maximum spontaneous magnetization. 
Besides, the state $\omega_0$ shows a gapless excitation as we showed above. 
These imply that there exists an infinite-volume pure ground state which exhibits 
both of a gapless excitation and the maximum spontaneous magnetization.   
Thus, the statement of Theorem~\ref{thm:GSgapless} has been proved.

\Section{An Alternative Proof of the Existence of Gapless Excitations} 
\label{sec:AltNambuGoldstone}

In this section, we give an alternative proof of the existence of a gapless excitation 
above an infinite-volume pure ground state. For this purpose, we assume 
the existence of a nonvanishing uniform spectral gap above all the infinite-volume ground states, 
and deduce a contradiction.  
We write $\Delta E$ for the spectral gap. We will estimate the quantity, $\omega_B^{(\Lambda)}(\mathcal{A}_R
[\mathcal{H}_{\rm p}^{(\Lambda)}(B)]^\epsilon\mathcal{A}_R)$, in the right-hand side of (\ref{finaldenomibound}).

To begin with, we decompose the interval $[0,\infty)$ into three parts, $[0,\epsilon')$, $[\epsilon',\Delta E-\epsilon'')$, 
and $[\Delta E-\epsilon'',\infty)$, where the two positive parameters, $\epsilon'$ and $\epsilon''$, satisfy 
$0<\epsilon'<\Delta E-\epsilon''$, and write $P[0,\epsilon'), P[\epsilon',\Delta E-\epsilon'')$, and 
$P[\Delta E-\epsilon'',\infty)$ for the corresponding spectral projections 
for the Hamiltonian $\mathcal{H}_{\rm p}^{(\Lambda)}(B)$. 
Note that 
\begin{eqnarray*}
\omega_B^{(\Lambda)}(\mathcal{A}_R
[\mathcal{H}_{\rm p}^{(\Lambda)}(B)]^\epsilon\mathcal{A}_R)&=&
\omega_B^{(\Lambda)}(\mathcal{A}_RP[0,\epsilon')
[\mathcal{H}_{\rm p}^{(\Lambda)}(B)]^\epsilon\mathcal{A}_R)\\
&+&\omega_B^{(\Lambda)}(\mathcal{A}_R[1-P[0,\epsilon')]
[\mathcal{H}_{\rm p}^{(\Lambda)}(B)]^\epsilon\mathcal{A}_R).
\end{eqnarray*}
The first term in the right-hand side can be estimated as 
$$
\omega_B^{(\Lambda)}(\mathcal{A}_RP[0,\epsilon')
[\mathcal{H}_{\rm p}^{(\Lambda)}(B)]^\epsilon\mathcal{A}_R)\le 
(\epsilon')^\epsilon\times \omega_B^{(\Lambda)}(\mathcal{A}_RP[0,\epsilon')
\mathcal{A}_R)\le{\rm Const.}(\epsilon')^\epsilon. 
$$
Therefore, this is vanishing in the limit $\epsilon'\rightarrow 0$ after taking the double limit, 
$B\searrow 0$ and $\Lambda\nearrow\ze^d$. 
The second term can be further decomposed into two parts,  
\begin{eqnarray}
\label{decomomegaARP0AR}
\omega_B^{(\Lambda)}(\mathcal{A}_R[1-P[0,\epsilon')]
[\mathcal{H}_{\rm p}^{(\Lambda)}(B)]^\epsilon\mathcal{A}_R)&=&
\omega_B^{(\Lambda)}(\mathcal{A}_RP[\epsilon',\Delta E-\epsilon'')
[\mathcal{H}_{\rm p}^{(\Lambda)}(B)]^\epsilon\mathcal{A}_R)\nonumber \\
&+&\omega_B^{(\Lambda)}(\mathcal{A}_RP[\Delta E-\epsilon'',\infty)
[\mathcal{H}_{\rm p}^{(\Lambda)}(B)]^\epsilon\mathcal{A}_R).
\end{eqnarray}
The first and second terms in the right-hand side are evaluated as 
\begin{equation}
\label{gapAssumption}
\omega_B^{(\Lambda)}(\mathcal{A}_RP[\epsilon',\Delta E-\epsilon'')
[\mathcal{H}_{\rm p}^{(\Lambda)}(B)]^\epsilon\mathcal{A}_R)
\le (\Delta E)^\epsilon\times \omega_B^{(\Lambda)}(\mathcal{A}_RP[\epsilon',\Delta E-\epsilon'')
\mathcal{A}_R),
\end{equation}
and 
\begin{eqnarray}
\label{decomomegaARPEpAR}
\omega_B^{(\Lambda)}(\mathcal{A}_RP[\Delta E-\epsilon'',\infty)
[\mathcal{H}_{\rm p}^{(\Lambda)}(B)]^\epsilon\mathcal{A}_R)&\le& 
\frac{1}{(\Delta E-\epsilon'')^{1-\epsilon}}\times \omega_B^{(\Lambda)}(\mathcal{A}_R\mathcal{H}_{\rm p}^{(\Lambda)}(B)
\mathcal{A}_R) \nonumber\\
&\le& \frac{{\cal K}_3}{(\Delta E-\epsilon'')^{1-\epsilon}R^d}, 
\end{eqnarray}
where we have used the bound (\ref{omegaBAHAexpvalue}). 
The quantity in the right-hand side of (\ref{gapAssumption}) is vanishing as 
$$
\lim_{B\searrow 0}\lim_{\Lambda \nearrow\ze^d}\omega_B^{(\Lambda)}(\mathcal{A}_RP[\epsilon',\Delta E-\epsilon'')
\mathcal{A}_R)=0 
$$
{from} the assumption on the spectral gap. Actually, one can prove this statement in the same way as in 
Sec.~\ref{ProofTheoremGSgapless}. 
 
In consequence, the nonvanishing contribution is only the quantity in (\ref{decomomegaARPEpAR}). Therefore, we have   
\begin{equation}
\label{limitsomegaARHepsAR}
\lim_{B\searrow 0}\lim_{\Lambda\nearrow \ze^d}\omega_B^{(\Lambda)}(\mathcal{A}_R
[\mathcal{H}_{\rm p}^{(\Lambda)}(B)]^\epsilon\mathcal{A}_R)
\le\frac{{\cal K}_3}{(\Delta E-\epsilon'')^{1-\epsilon}R^d}. 
\end{equation}
Combining this and (\ref{finaldenomibound}), we obtain 
\begin{eqnarray*}
|m_{\rm s}|^2&=&\lim_{B\searrow 0}\lim_{\Lambda\nearrow \ze^d}|m_{\rm s}^{(\Lambda)}(B)|^2\\
&\le& \mathcal{K}_0R^{d-1}(1+{\cal K}_1)\lim_{B\searrow 0}\lim_{\Lambda\nearrow \ze^d}
\omega_B^{(\Lambda)}(\mathcal{A}_R[\mathcal{H}_{\rm p}^{(\Lambda)}(B)]^\epsilon\mathcal{A}_R)
\le\frac{\mathcal{K}_0(1+{\cal K}_1){\cal K}_3}{(\Delta E-\epsilon'')^{1-\epsilon}R}.
\end{eqnarray*}
Since the spontaneous magnetization $m_{\rm s}$ is strictly positive, this is a contradiction for a sufficiently large $R$. 

This proof is much simpler than that in the preceding section. 
The proof of the preceding Sec.~\ref{ProofTheoremGSgapless} guarantees that the contribution that the pure infinite-volume 
ground states show both of the maximum spontaneous magnetization and a gapless excitation  
is nonvanishing in the sense of the measure of the pure state decomposition for the infinite-volume ground state $\omega_0$. 
However, in the proof of the present section, the set of the pure infinite-volume 
ground states showing both of the maximum spontaneous magnetization and a gapless excitation 
is allowed to consist of a single point in the pure state decomposition. 
For example, such a single point may be given by the limit point of the sequence of the ground states, $\omega_n$, 
$n=1,2,\ldots$, whose excitation energy gap above $\omega_n$ is given by $\Delta E_n>0$ which satisfies 
$\Delta E_n\rightarrow 0$ as $n\rightarrow \infty$.

\Section{Slowly-Decaying Transverse Correlations}
\label{sec:TransCorr}

The effect of continuous symmetry breaking is reflected in the emergence of the slowly-decaying transverse correlations. 
In this section, we will prove this statement for the present model at both of zero and non-zero temperatures. 
Namely, we will give proofs of Theorems~\ref{theorem:slowdecayZero} and \ref{theorem:slowdecayNonzero}. 

\subsection{Non-zero temperatures}

Consider first the transverse correlation function, $\langle S_x^{(2)}S_y^{(2)}\rangle_{B,\beta}^{(\Lambda)}$, 
for strictly positive temperatures $\beta^{-1}>0$ in dimensions $d\ge 3$. 
Here, the expectation $\langle\cdots\rangle_{B,\beta}^{(\Lambda)}$ 
is given by (\ref{thermalstate}).  
Since continuous symmetry breaking does not occur for strictly positive temperatures in dimensions $d\le 2$, 
we will consider only the case with $d\ge 3$ for strictly positive temperatures $\beta^{-1}>0$. 

For the purpose of the present section, we first prove Lemma~\ref{lemma:vanishTransCorr} below 
which states that the transverse correlation is vanishing in the large distance limit $|x-y|\rightarrow\infty$, 
and then we estimate the speed of the decay of the transverse correlations by using 
Bogoliubov inequality (\ref{Bogoineq}) below \cite{DLS}. 

\begin{lemma}
\label{lemma:vanishTransCorr}
Let $\rho_0$ be the state given by (\ref{def:rho0}) with (\ref{rhoB}). Then, we have
\begin{equation}
\label{vanishTransCorr}
\lim_{|x-y|\rightarrow \infty}\rho_0(S_x^{(2)}S_y^{(2)})= 0.
\end{equation}
\end{lemma}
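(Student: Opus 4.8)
The plan is to transfer the decay from the symmetry-breaking states $\rho_B$, $B>0$, to the limit state $\rho_0$, using reflection positivity to control the Fourier transform of the transverse correlation and a free-energy argument to exclude transverse N\'eel order. First I would record that each $H_{\rm p}^{(\Lambda)}(B)$ commutes with the total spin $\sum_{x}S_x^{(3)}$, so every $\rho_B$, and hence $\rho_0$, is invariant under the $U(1)$ group of rotations about the $3$-axis. This yields $\rho_0(S_x^{(2)})=0$ and $\rho_0(S_x^{+}S_y^{+})=\rho_0(S_x^{-}S_y^{-})=0$, where $S^{\pm}:=S^{(1)}\pm iS^{(2)}$, whence $\rho_0(S_x^{(2)}S_y^{(2)})=\rho_0(S_x^{(1)}S_y^{(1)})=\frac12{\rm Re}\,\rho_0(S_x^{+}S_y^{-})$. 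By translational invariance this correlation depends on $x-y$ only (up to a sublattice parity index forced by the period-$2$ symmetry, which I will suppress), and it is of positive type; so there is a finite positive measure $\mu_B$ with $\rho_B(S_0^{(2)}S_x^{(2)})=\int e^{ip\cdot x}\,d\mu_B(p)$ over the Brillouin zone, and likewise $\mu_0$ for $\rho_0$.

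Next I would invoke the reflection-positivity (infrared) bound of \cite{DLS,KLS} for the transverse component, valid in the presence of the staggered $3$-direction field and with a constant uniform in $\Lambda$ and in $B\ge0$: the transverse structure factor $|\Lambda|^{-1}\langle\hat S^{(2)}_p\hat S^{(2)}_{-p}\rangle_{B,\beta}^{(\Lambda)}$ is bounded by $C/\sum_{i=1}^{d}(1+\cos p_i)$ for $p$ away from $p_\ast:=(\pi,\dots,\pi)$ (the Duhamel-to-ordinary step being the standard one). The right-hand side behaves like $|p-p_\ast|^{-2}$ near $p_\ast$ and is finite elsewhere on $[-\pi,\pi]^d$, hence it is integrable precisely when $d\ge3$. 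Passing to the infinite volume, this says that $\mu_B$ can carry an atom only at $p_\ast$ and that its absolutely continuous part has density bounded by the fixed $L^1$ function $\Phi(p):=C/\sum_i(1+\cos p_i)$.

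The crux — and the step I expect to be the main obstacle — is to rule out the atom of $\mu_B$ at $p_\ast$ for $B>0$, i.e.\ to show that $\rho_B$ has no transverse N\'eel order. For this I would use that $\rho_B$ minimizes the free energy per volume \cite{BR} and decompose it into ergodic components $\omega$, each of which is again a translationally invariant Gibbs state. A positive atom at $p_\ast$ would force a positive-measure set of $\omega$ to have a nonzero staggered transverse magnetization, so that their staggered magnetization vector $\vec n(\omega)\in\re^3$ has a non-vanishing component orthogonal to the $3$-axis and hence $|\vec n(\omega)|>\vec n(\omega)_3$. Applying the global (site-independent) spin rotation $U$ that aligns $\vec n(\omega)$ with the $+3$-axis produces a translationally invariant state $\omega':=\omega\circ{\rm Ad}_U$ with the same entropy and the same $H_{0,{\rm p}}^{(\Lambda)}$-energy per volume as $\omega$, but with $\omega'(O^{(\Lambda)})/|\Lambda|=|\vec n(\omega)|>\vec n(\omega)_3=\omega(O^{(\Lambda)})/|\Lambda|$; since $B>0$, the free energy of $\omega'$ with respect to $H_{\rm p}^{(\Lambda)}(B)$ is then strictly smaller than that of $\omega$, contradicting minimality. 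Hence for $B>0$ the measure $\mu_B$ is absolutely continuous with density $\le\Phi$.

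Finally I would let $B\searrow0$. By (\ref{def:rho0}), $\rho_B(S_0^{(2)}S_x^{(2)})\to\rho_0(S_0^{(2)}S_x^{(2)})$ for every $x$, and the total masses $\mu_B([-\pi,\pi]^d)=\rho_B((S_0^{(2)})^2)\le S(S+1)$ are uniformly bounded, so $\mu_B\to\mu_0$ weakly$^\ast$. The portmanteau theorem together with the uniform bound $\mu_B\le\Phi\,dp$ then gives $\mu_0\le\Phi\,dp$ away from $p_\ast$ and $\mu_0(\{p_\ast\})\le\int_{|p-p_\ast|<\varepsilon}\Phi\to0$, so $\mu_0$ is absolutely continuous with density $\le\Phi\in L^1$. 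The Riemann--Lebesgue lemma now yields $\rho_0(S_x^{(2)}S_y^{(2)})\to0$ as $|x-y|\to\infty$, which is (\ref{vanishTransCorr}); this is precisely the input needed so that, in the next step, the Bogoliubov inequality can produce the matching lower bound on the rate of decay.
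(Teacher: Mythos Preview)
Your argument is correct and uses the same two principal ingredients as the paper --- the reflection-positivity infrared bound and an ergodic-decomposition-plus-rotation argument --- but you assemble them in a different order. The paper first passes to $\rho_0$ and invokes Lemma~\ref{lemma:LROtranscorr} (proved from the same infrared bound) to show that $\rho_0(S_x^{(2)}S_y^{(2)})\to v_0(-1)^{x-y}$, then kills $v_0$ by applying the ergodic decomposition directly to $\rho_0$ and using Theorem~\ref{thm:sponmagfiniteT}: since $\rho_0$ realizes the \emph{maximum} $3$-direction spontaneous magnetization, almost every ergodic component does as well and hence carries no transverse staggered magnetization, so $v_0=\lim_R\rho_0(\mathcal{A}_R^2)=0$. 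You instead kill the atom at $p_\ast$ for each $\rho_B$ with $B>0$ by the free-energy rotation argument (which is the contrapositive of the paper's maximum-magnetization step, run at positive field), and then carry the absence of the atom through the weak$^\ast$ limit $B\searrow 0$ via the uniform $L^1$ domination by $\Phi$. Your route is a bit more self-contained in that it does not invoke Theorem~\ref{thm:sponmagfiniteT}, at the price of the extra portmanteau/limiting step; the paper's route is more modular and reuses results already established. Both lead to the same Riemann--Lebesgue conclusion.
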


In order to prove Lemma~\ref{lemma:vanishTransCorr}, we use the following lemma \cite{FSS,DLS}:  
\begin{lemma} 
\label{lemma:LROtranscorr}
\begin{equation}
\label{LROtranscorr}
\lim_{|x-y|\rightarrow \infty}\rho_0(S_x^{(2)}S_y^{(2)})
= v_0(-1)^{x^{(1)}-y^{(1)}+\cdots+x^{(d)}-y^{(d)}}
\end{equation}
with some constant $v_0$. 
\end{lemma}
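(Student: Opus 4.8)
The plan is to represent $\rho_0(S_x^{(2)}S_y^{(2)})$ as the Fourier transform of a positive measure on the Brillouin zone and then to use reflection positivity, in the form of the infrared bound, to show that this measure is an atom at the antiferromagnetic wavevector $\mathbf{q}^\ast:=(\pi,\dots,\pi)$ together with an absolutely continuous part with $L^1$ density; the Riemann--Lebesgue lemma then yields (\ref{LROtranscorr}), with $v_0$ the mass of the atom.

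First I would record that the transverse two-point function is translation invariant. For a torus $\Lambda$ of even side length, the state $\langle\,\cdot\,\rangle_{B,\beta}^{(\Lambda)}$ is invariant under the composition of a unit lattice translation with the spin rotation by $\pi$ about the $2$-axis: that rotation preserves $H_{0,{\rm p}}^{(\Lambda)}$ and sends $O^{(\Lambda)}\mapsto-O^{(\Lambda)}$, while the translation also sends $O^{(\Lambda)}\mapsto-O^{(\Lambda)}$, so the composition commutes with $H_{\rm p}^{(\Lambda)}(B)$; since the rotation fixes every $S_x^{(2)}$, the correlation $\langle S_x^{(2)}S_y^{(2)}\rangle_{B,\beta}^{(\Lambda)}$ depends only on $x-y$, and this survives the weak-$\ast$ limits defining $\rho_B$ and then $\rho_0$. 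Writing $G(z):=\rho_0(S_x^{(2)}S_{x+z}^{(2)})$ and, at finite volume, $\hat{G}_\Lambda(\mathbf{k}):=|\Lambda|^{-1}\langle A^\ast A\rangle_{B,\beta}^{(\Lambda)}\ge0$ with $A:=\sum_x e^{i\mathbf{k}\cdot x}S_x^{(2)}$, one has $G_\Lambda(z)=|\Lambda|^{-1}\sum_{\mathbf{k}}e^{i\mathbf{k}\cdot z}\hat{G}_\Lambda(\mathbf{k})$, and the positive measures $|\Lambda|^{-1}\sum_{\mathbf{k}}\hat{G}_\Lambda(\mathbf{k})\,\delta_{\mathbf{k}}$ have uniformly bounded total mass $G_\Lambda(0)\le S^2$; passing to a subsequence (consistent with the one defining $\rho_0$) they converge weakly to a finite positive measure $\mu$ on $[-\pi,\pi]^d$ with $G(z)=\int e^{i\mathbf{k}\cdot z}\,d\mu(\mathbf{k})$, and $\mu$ is uniquely determined by $G$.

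The decisive step is the infrared bound: reflection positivity (Gaussian domination) together with the Falk--Bruch / Kennedy--Lieb--Shastry convexity inequalities give, at finite volume and finite $B$, a pointwise bound $\hat{G}_\Lambda(\mathbf{k})\le C/\sqrt{E(\mathbf{k}-\mathbf{q}^\ast)}$ for $\mathbf{k}\ne\mathbf{q}^\ast$, where $E(\mathbf{q})=\sum_{i=1}^d(1-\cos q_i)$ vanishes on the torus only at $\mathbf{q}=\mathbf{0}$ (this is \cite{KLS} at zero temperature and \cite{DLS,FSS} at positive temperature). In the dimensions covered by Theorems~\ref{theorem:slowdecayZero} and \ref{theorem:slowdecayNonzero} this upper bound is integrable near $\mathbf{q}^\ast$, so these bounds pass to the limit to show that the restriction of $\mu$ to $[-\pi,\pi]^d\setminus\{\mathbf{q}^\ast\}$ is dominated by a fixed $L^1$ function, hence absolutely continuous and free of atoms other than possibly the one at $\mathbf{q}^\ast$. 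Setting $v_0:=\mu(\{\mathbf{q}^\ast\})\ge0$ we get $G(z)=v_0(-1)^{z^{(1)}+\cdots+z^{(d)}}+\int e^{i\mathbf{k}\cdot z}g(\mathbf{k})\,d\mathbf{k}$ with $g\in L^1$, and the Riemann--Lebesgue lemma makes the integral vanish as $|z|\to\infty$, which is (\ref{LROtranscorr}).

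The main obstacle is exactly this infrared-bound step, i.e.\ turning reflection positivity into the quantitative statement that the spectral measure away from $\mathbf{q}^\ast$ is absolutely continuous with integrable density; all the analytic content---Gaussian domination, the Falk--Bruch convexity inequality relating the thermal and Duhamel two-point functions, and the double-commutator bound feeding into it---lives there, and these facts I would invoke from \cite{KLS,DLS,FSS} rather than reprove. The remaining work is routine: verifying that the finite-volume bound is uniform enough to survive $\Lambda\nearrow\ze^d$ and $B\searrow0$ (domination by a single $L^1$ function suffices), and identifying the sign factor $e^{i\mathbf{q}^\ast\cdot z}=(-1)^{z^{(1)}+\cdots+z^{(d)}}$.
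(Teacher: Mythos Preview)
Your approach is essentially the same as the paper's: represent the transverse correlation by a positive spectral measure (the paper invokes Bochner's theorem directly, you pass through weak limits of the finite-volume measures, which amounts to the same thing), control the measure away from $\mathbf{q}^\ast$ by the reflection-positivity infrared bound combined with the Falk--Bruch inequality and the double-commutator estimate, and then apply Riemann--Lebesgue. One small imprecision: at positive temperature the DLS bound is $(\beta\mathcal{E}_p')^{-1}+\sqrt{2S^2\mathcal{E}_p/\mathcal{E}_p'}$ rather than $C/\sqrt{E(\mathbf{k}-\mathbf{q}^\ast)}$, so the singularity at $\mathbf{q}^\ast$ is $1/\mathcal{E}'$ rather than $1/\sqrt{\mathcal{E}'}$; this does not affect your argument since you already restrict to $d\ge3$ where both are integrable.
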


The proof is given in Appendix~\ref{Appendix:prooflem:LROtranscorr}.
\bigskip 

\begin{proof}{Proof of Lemma~\ref{lemma:vanishTransCorr}}
In order to prove the statement of Lemma~\ref{lemma:vanishTransCorr} by relying on (\ref{LROtranscorr}), 
we use the ergodic decomposition \cite{BR1} of the translationally invariant 
equilibrium state $\rho_0$, 
$$
\rho_0(\cdots)=\int d\nu(\lambda)\rho_{0,\lambda}(\cdots), 
$$
where $\rho_{0,\lambda}$ is an ergodic state, and $\nu$ is the probability measure on the set of the extremal points. 

Consider first the magnetization in the second direction which is given by 
$$
\rho_0(\mathcal{A}_R)=\int d\nu(\lambda)\rho_{0,\lambda}(\mathcal{A}_R), 
$$
where the operator $\mathcal{A}_R$ is given by (\ref{AR}). Since the state $\rho_0$ yields 
the maximum magnetization in the third direction as proved in Sec.~\ref{Sec:SponMagFiniteT}, 
almost all the states $\rho_{0,\lambda}$ of the integrand also yield the maximum magnetization in the same direction. 
This implies that almost all the transverse magnetizations $\rho_{0,\lambda}(\mathcal{A}_R)$ are vanishing. 
Actually, this can be shown as follows: 
If the transverse magnetization is non-vanishing, then the magnetization vector may be written 
${\bf m}:=(0,m^{(2)},m^{(3)})$ with $m^{(2)}\ne 0$. Since the third component $m^{(3)}$ takes the value $m_{\rm s}$ of 
the maximum spontaneous magnetization, the magnitude of the magnetization vector ${\bf m}$ becomes 
$\sqrt{(m^{(2)})^2+(m_{\rm s})^2}>|m_{\rm s}|$. This contradicts with the fact that $m_{\rm s}$ takes 
the supremum over all the magnetizations. 

Next, consider the long-range order which is given by 
$$
\rho_0(\mathcal{A}_R^2)=\int d\nu(\lambda)\rho_{0,\lambda}(\mathcal{A}_R^2). 
$$
Since the expectation value $\rho_{0,\lambda}(\mathcal{A}_R)$ is vanishing,  
the ergodicity of the states $\rho_{0,\lambda}$ yields that  
almost all the expectation values $\rho_{0,\lambda}(\mathcal{A}_R^2)$ must be vanishing in 
the limit $R\nearrow \infty$.  
Combining this with the above result (\ref{LROtranscorr}), we obtain 
$$
v_0=\lim_{R\nearrow\infty}\rho_0(\mathcal{A}_R^2)=0. 
$$
Consequently, we have (\ref{vanishTransCorr}).   
\end{proof}

Next, in order to estimate the speed of the decay of the transverse correlations, 
we use Bogoliubov inequality \cite{DLS},  
\begin{equation}
\label{Bogoineq}
|\langle[C,A]\rangle_{B,\beta}^{(\Lambda)}|^2\le \frac{\beta}{2}
\langle[C,[H_{\rm p}^{(\Lambda)}(B),C^\ast]]\rangle_{B,\beta}^{(\Lambda)}\langle\{A,A^\ast\}\rangle_{B,\beta}^{(\Lambda)}, 
\end{equation}
for operators $A$ and $C$. Set $A=\mathcal{A}_R$ and 
$$
C=\sum_x S_x^{(1)} f_x, 
$$
where $f_x$ is given by (\ref{fx}). Then, one has 
$$
|m_{{\rm s},\beta}^{(\Lambda)}(B)|^2\le \beta \mathcal{K}_5R^{d-2}
[1+\mathcal{K}_6|B|R^2]\langle \mathcal{A}_R^2\rangle_{B,\beta}^{(\Lambda)}
$$
in the same way as the derivation of the inequality (\ref{denomibound}), where we have written 
$$
m_{{\rm s},\beta}^{(\Lambda)}(B):=\frac{1}{|\Omega_R|}\langle O^{(\Omega_R)}\rangle_{B,\beta}^{(\Lambda)}, 
$$
and $\mathcal{K}_5$ and $\mathcal{K}_6$ are positive constants. 
Further, in the same way as the derivation of (\ref{mslimitzero}), we obtain 
$$
m_{{\rm s},\beta}=\frac{1}{|\Omega_R|}\rho_0(O^{(\Omega_R})=
\lim_{B\searrow 0}\lim_{\Lambda\nearrow\ze^d}m_{{\rm s},\beta}^{(\Lambda)}(B),
$$
where the sequences in the double limit are taken to be the same as in those for the state $\rho_0$ of (\ref{def:rho0}). 
Therefore, in the same double limit, one has 
\begin{equation}
\label{SponMagBoundFiniteT}
|m_{{\rm s},\beta}|^2\le \mathcal{K}_5{\beta}R^{d-2}\rho_0(\mathcal{A}_R^2). 
\end{equation}
When the spontaneous magnetization $m_{{\rm s},\beta}$ in the left-hand side is non-vanishing, 
this bound rules out the possibility of rapid decay $o(|x-y|^{-(d-2)})$ for the correlation. 
Thus, the transverse correlation $\rho_0(S_x^{(2)}S_y^{(2)})$ exhibits slow decay. 

We have proved $\rho_0(\mathcal{A}_R^2)\rightarrow 0$ as $R\nearrow \infty$, 
irrespective of the dimension $d$ of the lattice, 
only {from} the argument of the maximum spontaneous magnetization. Therefore, we can obtain a slightly stronger result than 
Hohenberg-Mermin-Wagner theorem \cite{Hohenberg,MW} from the inequality (\ref{SponMagBoundFiniteT}) as:  

\begin{coro}
In dimensions $d\le 2$ for strictly positive temperatures $\beta^{-1}>0$, the maximum spontaneous magnetization is vanishing 
in translationally invariant thermal equilibrium states. 
\end{coro}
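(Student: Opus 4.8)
The plan is to read the Corollary directly off the Bogoliubov bound (\ref{SponMagBoundFiniteT}) together with the vanishing of the long-range order already established in the proof of Lemma~\ref{lemma:vanishTransCorr}. The crucial observation is that the factor $R^{d-2}$ in (\ref{SponMagBoundFiniteT}) does not grow in low dimensions: for any integer $R\ge 1$ one has $R^{d-2}\le 1$ when $d=1,2$. Hence, for $d\le 2$ and any strictly positive temperature $\beta^{-1}>0$, the inequality (\ref{SponMagBoundFiniteT}) gives
\[
|m_{{\rm s},\beta}|^2\le \mathcal{K}_5\,\beta\,\rho_0(\mathcal{A}_R^2)
\]
for every positive integer $R$, with $\mathcal{A}_R$ as in (\ref{AR}).

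Next I would invoke the key fact isolated in the proof of Lemma~\ref{lemma:vanishTransCorr}, namely that $\rho_0(\mathcal{A}_R^2)\to 0$ as $R\nearrow\infty$. That argument uses only the ergodic decomposition of the translationally invariant equilibrium state $\rho_0$ and the property, established in Sec.~\ref{Sec:SponMagFiniteT}, that $\rho_0$ carries the maximum spontaneous magnetization in the third direction; in particular it is valid irrespective of the lattice dimension. Letting $R\nearrow\infty$ in the displayed inequality therefore forces $m_{{\rm s},\beta}=0$. Finally, by Theorem~\ref{thm:sponmagfiniteT} the quantity $m_{{\rm s},\beta}$ coincides with the maximum spontaneous magnetization over all translationally invariant thermal equilibrium states in the infinite volume, so this maximum vanishes, which is the assertion of the Corollary.

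The only point requiring care, and the step I would expect to be the main (mild) obstacle, is ruling out circularity: the conclusion ``the maximum spontaneous magnetization is zero'' must not be used in deriving $\rho_0(\mathcal{A}_R^2)\to 0$. It is not, since one only uses that $\rho_0$ \emph{achieves} the maximum spontaneous magnetization (Theorem~\ref{thm:sponmagfiniteT} and Sec.~\ref{Sec:SponMagFiniteT}), which says nothing about whether that maximal value is zero; this yields $\rho_0(\mathcal{A}_R^2)\to 0$, then (\ref{SponMagBoundFiniteT}) yields $m_{{\rm s},\beta}=0$, and only then does one conclude, again via Theorem~\ref{thm:sponmagfiniteT}, that the maximum is zero. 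One should also note that the restriction $d\le 2$ is precisely what keeps $R^{d-2}$ bounded; for $d\ge 3$ the factor $R^{d-2}$ diverges and the argument gives no information, consistent with the occurrence of symmetry breaking in high dimensions.
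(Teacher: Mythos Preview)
Your proposal is correct and follows exactly the paper's own argument: use the inequality (\ref{SponMagBoundFiniteT}), note that $R^{d-2}\le 1$ for $d\le 2$, invoke the dimension-independent fact $\rho_0(\mathcal{A}_R^2)\to 0$ extracted from the ergodic-decomposition/maximum-magnetization step in the proof of Lemma~\ref{lemma:vanishTransCorr}, and conclude via Theorem~\ref{thm:sponmagfiniteT}. Your added remarks on non-circularity and on the role of the exponent $d-2$ are accurate and make explicit what the paper leaves implicit.
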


\subsection{Zero temperature}

Next, consider the case of the ground state in dimensions $d\ge 2$. 
In the zero temperature limit $\beta\nearrow\infty$, the thermal equilibrium state $
\langle\cdots\rangle_{B,\beta}^{(\Lambda)}$ of (\ref{thermalstate}) coincides with 
the ground state $\omega_B^{(\Lambda)}$ of (\ref{Grstate}). 

In order to clarify the difference between the zero and non-zero temperature cases, 
we recall first the quantity, 
$$
g_p^{(\Lambda)}(B,\beta):=\frac{1}{2}\left[\langle \hat{S}_p^{(2)}\hat{S}_{-p}^{(2)}\rangle_{B,\beta}^{(\Lambda)}+
\langle \hat{S}_{-p}^{(2)}\hat{S}_{p}^{(2)}\rangle_{B,\beta}^{(\Lambda)}\right],
$$
which is given by (\ref{gp}) in Appendix~\ref{Appendix:prooflem:LROtranscorr}. 
Here, 
$$
\hat{S}_p^{(2)}:=|\Lambda|^{-1/2}\sum_{x\in\Lambda}e^{-ipx}S_x^{(2)}
$$
with the wavevector $p$. 
{From} the inequalities (\ref{bpbound}) and (\ref{gpbound}) in Appendix~\ref{Appendix:prooflem:LROtranscorr}, 
one has \cite{JNFP} 
\begin{equation}
g_p^{(\Lambda)}(B)\le \frac{1}{2}\sqrt{\frac{c_p^{(\Lambda)}(B)}{2\mathcal{E}_p'}}
\end{equation}
in the zero temperature limit $\beta\nearrow\infty$, where we have written 
$$
g_p^{(\Lambda)}(B):=\lim_{\beta\nearrow\infty}g_p^{(\Lambda)}(B,\beta)
$$
and 
$$
c_p^{(\Lambda)}(B):=\lim_{\beta\nearrow\infty}c_p^{(\Lambda)}(B,\beta). 
$$
Here, the quantity $c_p^{(\Lambda)}(B,\beta)$ in the right-hand side 
is given by (\ref{cp}) in Appendix~\ref{Appendix:prooflem:LROtranscorr}. 
Since one has 
$$
c_p^{(\Lambda)}(B)\le 4S^2\mathcal{E}_p+{\rm Const.}|B| 
$$
{from} (\ref{cpbound}) in Appendix~\ref{Appendix:prooflem:LROtranscorr}, 
the following bound is valid: 
$$
g_p^{(\Lambda)}(B)\le
\sqrt{\frac{S^2\mathcal{E}_p+{\rm Const.}|B|}{2\mathcal{E}_p'}}.
$$
This implies 
\begin{equation}
\label{IRBomega0}
\omega_B^{(\Lambda)}(\hat{S}_p^{(2)}\hat{S}_{-p}^{(2)})\le \sqrt{\frac{2S^2\mathcal{E}_p+{\rm Const.}|B|}{\mathcal{E}_p'}}
\end{equation}
{from} the definition of the function $g_p^{(\Lambda)}(B,\beta)$ in the same way as 
in Appendix~\ref{Appendix:prooflem:LROtranscorr}. 
In the case of non-zero temperatures, the corresponding bound is given by (\ref{SpCorrbound}). 
In the present case of zero temperature, the first term in the right-hand side of (\ref{SpCorrbound}) 
is absent. Therefore, the right-hand side of (\ref{IRBomega0}) is integrable with respect to the wavevector $p$ in two or 
higher dimensions except for the singularity at $p=(\pi,\ldots,\pi)$. 
The same argument as in the case for strictly positive temperatures yields 
\begin{equation}
\lim_{|x-y|\rightarrow \infty}\omega_0(S_x^{(2)}S_y^{(2)})= 0,
\end{equation}
where the infinite-volume ground state $\omega_0$ is given by (\ref{omega0}). 
Thus, the transverse correlation decays in the large distance limit. 

Further, in order to estimate the speed of the decay, we recall the inequality (\ref{denomibound}),  
which holds also for $\epsilon=0$. As a result, the spontaneous magnetization $m_{\rm s}$ satisfies  
\begin{equation}
|m_{\rm s}|^2\le\mathcal{K}_0R^{d-1}\omega_0(\mathcal{A}_R^2).
\end{equation}
When the spontaneous magnetization $m_{\rm s}$ in the left-hand side is non-vanishing, 
this bound rules out the possibility of rapid decay $o(|x-y|^{-(d-1)})$ for the correlation.
Remarkably, the exponent $(d-1)$ is different from $(d-2)$ in the case for strictly positive temperatures. 

Further, the quantity $\omega_0(\mathcal{A}_R^2)$ is vanishing in the limit $R\nearrow\infty$ 
because the same argument about the maximum spontaneous magnetization 
as in the case for strictly positive temperatures holds.  
In consequence, we obtain a result which is slightly stronger than Shastry theorem \cite{Shastry} 
in one dimension at zero temperature as: 

\begin{coro}
In one dimension, the maximum spontaneous magnetization is vanishing 
in the sector of the infinite-volume ground states. 
\end{coro}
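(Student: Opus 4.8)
The plan is to obtain the corollary by specializing to $d=1$ the key estimate already proved in this section, and then invoking the vanishing of the transverse long-range order. First I would recall that putting $\epsilon=0$ in (\ref{denomibound}) and passing to the double limit $\Lambda\nearrow\ze^d$, $B\searrow 0$ yields
$$
|m_{\rm s}|^2\le \mathcal{K}_0\,R^{d-1}\,\omega_0(\mathcal{A}_R^2)
$$
for every positive integer $R$, with $\mathcal{K}_0$ independent of $R$ and with $\mathcal{A}_R$, $\omega_0$ as in (\ref{AR}) and (\ref{omega0}). In one dimension the geometric prefactor collapses, $R^{d-1}=R^0=1$, so this becomes
$$
|m_{\rm s}|^2\le \mathcal{K}_0\,\omega_0(\mathcal{A}_R^2),\qquad R=1,2,\dots,
$$
uniformly in $R$.

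Next I would show $\omega_0(\mathcal{A}_R^2)\to 0$ as $R\nearrow\infty$ by exactly the dimension-free argument used in the proof of Lemma~\ref{lemma:vanishTransCorr}. The state $\omega_0$ is translationally invariant with period $2$, being a weak$^\ast$ limit of the period-$2$ invariant finite-volume ground states $\omega_B^{(\Lambda)}$, and by the remark following Theorem~\ref{thm:sponmag} it realizes the maximum spontaneous magnetization in the $3$-direction. Decomposing $\omega_0$ into ergodic components $\omega_{0,\lambda}$, almost every $\omega_{0,\lambda}$ still realizes that maximal staggered $3$-magnetization; since $H_{0,{\rm p}}^{(\Lambda)}$ is fully rotation invariant in spin space, a nonzero staggered $2$-magnetization $\lim_R\omega_{0,\lambda}(\mathcal{A}_R)$ for such a component would, after a small rotation in the $2$--$3$ plane, produce an infinite-volume ground state with strictly larger staggered $3$-magnetization, contradicting the maximality asserted by Theorem~\ref{thm:sponmag}; hence $\omega_{0,\lambda}(\mathcal{A}_R)\to 0$ for a.e.\ $\lambda$. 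Ergodicity then upgrades this to $\omega_{0,\lambda}(\mathcal{A}_R^2)\to 0$ for a.e.\ $\lambda$, and since $\|\mathcal{A}_R^2\|\le S^2$, dominated convergence gives $\omega_0(\mathcal{A}_R^2)\to 0$. Letting $R\nearrow\infty$ in the uniform bound above forces $m_{\rm s}=0$, and since $m_{\rm s}$ equals the maximum spontaneous magnetization over all infinite-volume ground states (Theorem~\ref{thm:sponmag}), that maximum vanishes in one dimension.

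The corollary is thus essentially a bookkeeping consequence of the inequalities already in hand, and I do not anticipate a genuine obstacle. The one point that must be handled with care is that the vanishing of $\omega_0(\mathcal{A}_R^2)$ has to be deduced purely from the maximum-magnetization and ergodicity argument and \emph{not} from the infrared bound (\ref{IRBomega0}), whose right-hand side fails to be integrable precisely when $d=1$; and, on the other side, that the constant $\mathcal{K}_0$ in the $\epsilon=0$ form of (\ref{denomibound}) carries no hidden $R$-dependence, so that the estimate $|m_{\rm s}|^2\le \mathcal{K}_0\,\omega_0(\mathcal{A}_R^2)$ is genuinely uniform in $R$ and the limit $R\nearrow\infty$ may be taken.
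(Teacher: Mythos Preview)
Your proposal is correct and follows essentially the same route as the paper: specialize the $\epsilon=0$ case of (\ref{denomibound}) to $d=1$, then kill $\omega_0(\mathcal{A}_R^2)$ via the maximum-magnetization plus ergodic-decomposition argument rather than the infrared bound. Your explicit caveat that (\ref{IRBomega0}) is unavailable in $d=1$ and that $\mathcal{K}_0$ is $R$-independent is exactly the point the paper is making when it says the vanishing of $\omega_0(\mathcal{A}_R^2)$ holds ``irrespective of the dimension $d$ \ldots\ only from the argument of the maximum spontaneous magnetization.''
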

\medskip

\noindent
Remark: The method by Shastry was applied to a one-dimensional spin-orbital model \cite{TTI}. 

\appendix
 
\Section{Proof of Lemma~\ref{lem:BogolyIneq}}
\label{App::BogolyIneq}

In this appendix, we give a proof of the inequality (\ref{BogolyIneq}) in Lemma~\ref{lem:BogolyIneq}.  

By the cyclic property of the trace and the definition (\ref{Grstate}) of the ground state $\omega_B^{(\Lambda)}$, 
the quantity in the left-hand side of (\ref{BogolyIneq}) can be written as  
\begin{eqnarray*}
\omega_B^{(\Lambda)}([C,A])&=&\omega_B^{(\Lambda)}(CA)-\omega_B^{(\Lambda)}(AC)\\
&=&\omega_B^{(\Lambda)}(CP_{\rm ex}^{(\Lambda)}(B)A)-\omega_B^{(\Lambda)}(AP_{\rm ex}^{(\Lambda)}(B)C). 
\end{eqnarray*}
The first term in the right-hand side can be estimated as 
\begin{eqnarray*}
|\omega_B^{(\Lambda)}(CP_{\rm ex}^{(\Lambda)}(B)A)|^2
&=&|\omega_B^{(\Lambda)}(C[\mathcal{H}_{\rm p}^{(\Lambda)}(B)]^{-\epsilon/2}P_{\rm ex}^{(\Lambda)}(B)
[\mathcal{H}_{\rm p}^{(\Lambda)}(B)]^{\epsilon/2}A)|^2\\
&\le& 
\omega_B^{(\Lambda)}(CP_{\rm ex}^{(\Lambda)}(B)[\mathcal{H}_{\rm p}^{(\Lambda)}(B)]^{-\epsilon}C^\ast)
\; \omega_B^{(\Lambda)}(A^\ast P_{\rm ex}^{(\Lambda)}(B)[\mathcal{H}_{\rm p}^{(\Lambda)}(B)]^\epsilon A),
\end{eqnarray*}
where we have used the positivity of 
$\mathcal{H}_{\rm p}^{(\Lambda)}(B)=H_{\rm p}^{(\Lambda)}(B)-E_0^{(\Lambda)}(B)$ and Schwarz inequality, 
and $\epsilon$ is a small positive parameter. 
Since the second term can be handled in the same way, we have 
\begin{eqnarray*}
|\omega_B^{(\Lambda)}([C,A])|
&\le&
\left[\omega_B^{(\Lambda)}(CP_{\rm ex}^{(\Lambda)}(B)[\mathcal{H}_{\rm p}^{(\Lambda)}(B)]^{-\epsilon}C^\ast)\right]^{1/2}
\left[\omega_B^{(\Lambda)}(A^\ast P_{\rm ex}^{(\Lambda)}(B)[\mathcal{H}_{\rm p}^{(\Lambda)}(B)]^\epsilon A)\right]^{1/2}\\
&+&
\left[\omega_B^{(\Lambda)}(C^\ast P_{\rm ex}^{(\Lambda)}(B)[\mathcal{H}_{\rm p}^{(\Lambda)}(B)]^{-\epsilon}C)\right]^{1/2}
\left[\omega_B^{(\Lambda)}(AP_{\rm ex}^{(\Lambda)}(B)[\mathcal{H}_{\rm p}^{(\Lambda)}(B)]^\epsilon A^\ast)\right]^{1/2}.
\end{eqnarray*}
Further, by using the inequality $2ab\le(a^2+b^2)$ for $a,b>0$, we obtain 
\begin{eqnarray}
\label{omega[C,A]2bound}
|\omega_B^{(\Lambda)}([C,A])|^2
&\le& \Bigl\{\omega_B^{(\Lambda)}(CP_{\rm ex}^{(\Lambda)}(B)[\mathcal{H}_{\rm p}^{(\Lambda)}(B)]^{-\epsilon}C^\ast)
+\omega_B^{(\Lambda)}(C^\ast P_{\rm ex}^{(\Lambda)}(B)[\mathcal{H}_{\rm p}^{(\Lambda)}(B)]^{-\epsilon}C)\Bigr\}
\nonumber \\
&\times&
\Bigl\{\omega_B^{(\Lambda)}(AP_{\rm ex}^{(\Lambda)}(B)[\mathcal{H}_{\rm p}^{(\Lambda)}(B)]^\epsilon A^\ast)
+\omega_B^{(\Lambda)}(A^\ast P_{\rm ex}^{(\Lambda)}(B)[\mathcal{H}_{\rm p}^{(\Lambda)}(B)]^\epsilon A)\Bigr\}.\nonumber\\
& &
\end{eqnarray}

Let us consider the quantities in the right-hand side of (\ref{omega[C,A]2bound}) 
which include the operator $C$. By using Schwarz inequality, we have 
\begin{eqnarray*}
& &\omega_B^{(\Lambda)}(CP_{\rm ex}^{(\Lambda)}(B)[\mathcal{H}_{\rm p}^{(\Lambda)}(B)]^{-\epsilon}C^\ast)\\
&=&\omega_B^{(\Lambda)}(CP_{\rm ex}^{(\Lambda)}(B)[\mathcal{H}_{\rm p}^{(\Lambda)}(B)]^{-1/2}
[\mathcal{H}_{\rm p}^{(\Lambda)}(B)]^{1/2-\epsilon}C^\ast)\\
&\le&\sqrt{\omega_B^{(\Lambda)}(CP_{\rm ex}^{(\Lambda)}(B)[\mathcal{H}_{\rm p}^{(\Lambda)}(B)]^{-1}C^\ast)} 
\sqrt{\omega_B^{(\Lambda)}(C[\mathcal{H}_{\rm p}^{(\Lambda)}(B)]^{1-2\epsilon}C^\ast)}.
\end{eqnarray*}
Therefore, 
\begin{eqnarray*}
& &\omega_B^{(\Lambda)}(CP_{\rm ex}^{(\Lambda)}(B)[\mathcal{H}_{\rm p}^{(\Lambda)}(B)]^{-\epsilon}C^\ast)
+\omega_B^{(\Lambda)}(C^\ast P_{\rm ex}^{(\Lambda)}(B)[\mathcal{H}_{\rm p}^{(\Lambda)}(B)]^{-\epsilon}C)\\
&\le &\sqrt{\omega_B^{(\Lambda)}(CP_{\rm ex}^{(\Lambda)}(B)[\mathcal{H}_{\rm p}^{(\Lambda)}(B)]^{-1}C^\ast)} 
\sqrt{\omega_B^{(\Lambda)}(C[\mathcal{H}_{\rm p}^{(\Lambda)}(B)]^{1-2\epsilon}C^\ast)}\\
&+&\sqrt{\omega_B^{(\Lambda)}(C^\ast P_{\rm ex}^{(\Lambda)}(B)[\mathcal{H}_{\rm p}^{(\Lambda)}(B)]^{-1}C)} 
\sqrt{\omega_B^{(\Lambda)}(C^\ast[\mathcal{H}_{\rm p}^{(\Lambda)}(B)]^{1-2\epsilon}C)}. 
\end{eqnarray*}
Further, by using $2ab\le(a^2+b^2)$ for $a,b>0$, we obtain
\begin{eqnarray*}
& &\omega_B^{(\Lambda)}(CP_{\rm ex}^{(\Lambda)}(B)[\mathcal{H}_{\rm p}^{(\Lambda)}(B)]^{-\epsilon}C^\ast)
+\omega_B^{(\Lambda)}(C^\ast P_{\rm ex}^{(\Lambda)}(B)[\mathcal{H}_{\rm p}^{(\Lambda)}(B)]^{-\epsilon}C)\\
&\le&\sqrt{\tilde{D}_B^{(\Lambda)}(C)}
\Bigl\{\omega_B^{(\Lambda)}(C[\mathcal{H}_{\rm p}^{(\Lambda)}(B)]^{1-2\epsilon}C^\ast)
+\omega_B^{(\Lambda)}(C^\ast[\mathcal{H}_{\rm p}^{(\Lambda)}(B)]^{1-2\epsilon}C)\Bigr\}^{1/2}\\
&\le&\sqrt{\tilde{D}_B^{(\Lambda)}(C)}
\Bigl\{\kappa(\epsilon)\;\omega_B^{(\Lambda)}(\{C,C^\ast\})+\omega_B^{(\Lambda)}([[C^\ast,H_{\rm p}^{(\Lambda)}(B)],C])\Bigr\}^{1/2},
\end{eqnarray*}
where $\tilde{D}_B^{(\Lambda)}(C)$ is given by (\ref{tildeD}), and for deriving the last inequality, we have used 
the following inequality:  
$$
t^{1-2\epsilon}\le \kappa(\epsilon)+t \quad \mbox{for \ } t\ge 0.
$$
Here, $\kappa(\epsilon)$ is a positive function of the parameter $\epsilon>0$ such that $\kappa(\epsilon)\rightarrow 0$ as 
$\epsilon\rightarrow 0$. 
Substituting this bound into the right-hand side of (\ref{omega[C,A]2bound}), we obtain the desired bound (\ref{BogolyIneq}). 

\Section{Proof of the Inequality (\ref{E0bound})}
\label{RP}

We give a proof of the bound (\ref{E0bound}), following Kennedy, Lieb and Shastry \cite{KLS}. 
To begin with, we remark the following: 
In their proof of the upper bound for susceptibility (\ref{chibound}) which can be derived from (\ref{E0bound}), 
they used the uniqueness \cite{LM} of the ground state of 
the finite-volume Hamiltonian. However, in a situation where a magnetic field is applied, 
the uniqueness of the ground state does not necessarily hold \cite{LM}. 
Therefore, we do not assume the uniqueness of the ground state for deriving (\ref{chibound}) 
in Appendix~\ref{appen:chibound}. 
 
To begin with, we recall the Hamiltonian (\ref{HamBf}) as 
\begin{eqnarray}
H_{\rm p}^{(\Lambda)}(B,f)&:=&\sum_{\{x,y\}\subset\Lambda:|x-y|=1}\left[S_x^{(2)}S_y^{(2)}+S_x^{(3)}S_y^{(3)}\right]
-B\sum_{x\in\Lambda} (-1)^{x^{(1)}+\cdots+x^{(d)}}S_x^{(3)}\nonumber \\
&+&\frac{1}{2}\sum_{\{x,y\}\subset\Lambda:|x-y|=1}\left[(S_x^{(1)}+S_y^{(1)}+f_x+f_y)^2-(S_x^{(1)})^2-(S_y^{(1)})^2\right].
\end{eqnarray}
By using the unitary transformation which is rotation by $\pi$ about the $2$ axis in the spin space 
at site $x$ for all the sites $x$ with odd $(x^{(1)}+\cdots+x^{(d)})$, the Hamiltonian is transformed as   
\begin{eqnarray}
\label{Hamrecall}
\tilde{H}_{\rm p}^{(\Lambda)}(B,\tilde{f})&:=&\sum_{\{x,y\}\subset\Lambda:|x-y|=1}
\left[S_x^{(2)}S_y^{(2)}-S_x^{(3)}S_y^{(3)}\right]
-B\sum_{x\in\Lambda} S_x^{(3)}\nonumber \\
&+&\frac{1}{2}\sum_{\{x,y\}\subset\Lambda:|x-y|=1}\left[(S_x^{(1)}-S_y^{(1)}-\tilde{f}_x
+\tilde{f}_y)^2-(S_x^{(1)})^2-(S_y^{(1)})^2\right],
\end{eqnarray}
where
$$
\tilde{f}_x:=-(-1)^{x^{(1)}+\cdots+x^{(d)}}f_x.
$$
We write $\tilde{E}_0^{(\Lambda)}(B,\tilde{f})$ for the energy of the ground state 
of the Hamiltonian $\tilde{H}_{\rm p}^{(\Lambda)}(B,\tilde{f})$. 
Our aim in this appendix is to prove the bound, 
\begin{equation}
\label{energytildefbound}
\tilde{E}_0^{(\Lambda)}(B,\tilde{f})\ge\tilde{E}_0^{(\Lambda)}(B,0),
\end{equation}
for any real-valued function $\tilde{f}$ on the lattice $\Lambda$. Clearly, from the expression of (\ref{Hamrecall}), 
this is equivalent to showing 
that the energy $\tilde{E}_0^{(\Lambda)}(B,\tilde{f})$ takes its minimum value when $\tilde{f}$ is a constant.  
We assume that $\tilde{E}_0^{(\Lambda)}(B,\cdots)$ takes its minimum value for a real-valued function $\overline{f}$ 
which has the least number of bonds $\{x,y\}$ with $\overline{f}_x\not=\overline{f}_y$
in a set of the configurations $f$ which minimize the energy, and we deduce a contradiction 
if the number of those bonds is not equal to zero. 

Let $\{x_0,y_0\}$ be a bond satisfying $\overline{f}_{x_0}\not=\overline{f}_{y_0}$ for the above function $\overline{f}$. 
We draw a plane through the midpoint of the bond $\{x_0,y_0\}$ and perpendicular to the bond. 
Further, we draw a second plane which is parallel to the first one but shifted by $L$, remembering that 
$2L$ is the sidelength of the lattice $\Lambda$, and that the periodic boundary conditions are imposed.
Clearly, these two planes, which will be denoted collectively by $\Pi$, divide the lattice $\Lambda$ into two parts, 
$\Lambda^{\rm L}$ and $\Lambda^{\rm R}$, which will be referred to as the left and right halves, respectively.  

In the following, we will use the usual real, orthonormal basis of $S^{(3)}$ eigenstates. 
We denote by $\Psi_\alpha^{\rm L}$ and $\Psi_\beta^{\rm R}$ the basis vectors which are associated with 
the left and right half Hilbert spaces, respectively. The basis for the full Hilbert space is given by 
$\Psi_\alpha^{\rm L}\otimes\Psi_\beta^{\rm R}$. 
We set $T_x^{(2)}=iS_x^{(2)}$ for $x\in\Lambda$. Then, the Hamiltonian $\tilde{H}_{\rm p}^{(\Lambda)}(B,\tilde{f})$ 
of (\ref{Hamrecall}) can be written as 
\begin{eqnarray}
\label{tildeHT}
\tilde{H}_{\rm p}^{(\Lambda)}(B,\tilde{f})
&:=&-\sum_{\{x,y\}\subset\Lambda:|x-y|=1}\left[T_x^{(2)}T_y^{(2)}+S_x^{(3)}S_y^{(3)}
+(S_x^{(1)}-\tilde{f}_x)(S_y^{(1)}-\tilde{f}_y)\right]\nonumber \\
& &+ \sum_{x\in\Lambda} \left[d \tilde{f}_x(\tilde{f}_x-2S_x^{(1)})-BS_x^{(3)}\right].
\end{eqnarray}
Therefore, this Hamiltonian has real matrix elements in this basis, and a ground state $\Psi$ of the Hamiltonian 
can be written as 
\begin{equation}
\label{PsiGS}
\Psi=\sum_{\alpha,\beta}\mathcal{C}_{\alpha,\beta}\Psi_\alpha^{\rm L}\otimes \Psi_\beta^{\rm R}
\end{equation}
in terms of real numbers $\mathcal{C}_{\alpha,\beta}$. 

Clearly, there are three types of bonds: Bonds with both endpoints in the left half $\Lambda^{\rm L}$ will 
be referred to as ``left" bonds. The ``right" bonds are defined in the same way. 
Bonds with one endpoint in the left half $\Lambda^{\rm L}$ and the other in the right half $\Lambda^{\rm R}$ 
will be referred to as ``crossing". 

For the Hamiltonian (\ref{tildeHT}), we write $H$ for short.  
We define by $H^{\rm L}$ the sum of all the terms in the Hamiltonian $H$ labeled by left bonds and sites in the left half 
$\Lambda^{\rm L}$. Similarly, $H^{\rm R}$ is defined. 
We denote the bonds crossing the planes $\Pi$ by $\{x_i,y_i\}$ with $x_i$ in the left half $\Lambda^{\rm L}$ and 
$y_i$ in the right half $\Lambda^{\rm R}$. Then, one has 
$$
H=H^{\rm L}+H^{\rm R}-\sum_i\left[T_{x_i}^{(2)}T_{y_i}^{(2)}+S_{x_i}^{(3)}S_{y_i}^{(3)}
+(S_{x_i}^{(1)}-\tilde{f}_{x_i})(S_{y_i}^{(1)}-\tilde{f}_{y_i})\right].
$$
We write 
$$
H_{\alpha,\gamma}^{\rm L}:=\langle \Psi_\alpha^{\rm L},H^{\rm L}\Psi_\gamma^{\rm L}\rangle
$$
and 
$$
X_{\alpha,\gamma}^{{\rm L},i}:=\langle \Psi_\alpha^{\rm L},(S_{x_i}^{(1)}-\tilde{f}_{x_i})\Psi_\gamma^{\rm L}\rangle,
\quad Y_{\alpha,\gamma}^{{\rm L},i}:=\langle\Psi_\alpha^{\rm L},T_{x_i}^{(2)}\Psi_\gamma^{\rm L}\rangle,
\quad Z_{\alpha,\gamma}^{{\rm L},i}:=\langle\Psi_\alpha^{\rm L},S_{x_i}^{(3)}\Psi_\gamma^{{\rm L},i}\rangle. 
$$
Similarly, we write $H_{\alpha,\gamma}^{\rm R}$, $X_{\alpha,\gamma}^{{\rm R},i}$,
$Y_{\alpha,\gamma}^{{\rm R},i}$ and $Z_{\alpha,\gamma}^{{\rm R},i}$, where $x_i$ is replaced by $y_i$. 
We denote by $X^{{\rm L},i}$ the matrix whose $(\alpha,\gamma)$ element is given by $X_{\alpha,\gamma}^{{\rm L},i}$. 
Since all the matrix elements are real, the transpose of $X^{{\rm L},i}$ is equal to its adjoint $(X^{{\rm L},i})^\ast$.  
This property holds for the other quantities $Y,Z,H$, and we will use the same notation. 

Since the coefficients $\mathcal{C}_{\alpha,\beta}$ of the ground state $\Psi$ of (\ref{PsiGS}) is real, 
the expectation value of the Hamiltonian with respect to $\Psi$ can be written as 
\begin{eqnarray*}
\tilde{E}_0^{(\Lambda)}(B,\tilde{f})&=&\langle \Psi,H\Psi\rangle\\
&=&\sum_{\alpha,\beta,\gamma}\mathcal{C}_{\alpha,\beta}\mathcal{C}_{\gamma,\beta}H_{\alpha,\gamma}^{\rm L}
+\sum_{\alpha,\beta,\gamma}\mathcal{C}_{\alpha,\beta}\mathcal{C}_{\alpha,\gamma}H_{\beta,\gamma}^{\rm R}\\
&-&\sum_i\sum_{\alpha,\beta,\gamma,\delta}\mathcal{C}_{\alpha,\beta}\mathcal{C}_{\gamma,\delta}
(X_{\alpha,\gamma}^{{\rm L},i}X_{\beta,\delta}^{{\rm R},i}+Y_{\alpha,\gamma}^{{\rm L},i}Y_{\beta,\delta}^{{\rm R},i}
+Z_{\alpha,\gamma}^{{\rm L},i}Z_{\beta,\delta}^{{\rm R},i})\\
&=&{\rm Tr}\; \mathcal{C}\mathcal{C}^\ast H^{\rm L}+{\rm Tr}\;\mathcal{C}^\ast\mathcal{C}H^{\rm R}\\
&-&\sum_i {\rm Tr}\Bigl[\mathcal{C}^\ast X^{{\rm L},i}\mathcal{C}(X^{{\rm R},i})^\ast
+\mathcal{C}^\ast Y^{{\rm L},i}\mathcal{C}(Y^{{\rm R},i})^\ast
+\mathcal{C}^\ast Z^{{\rm L},i}\mathcal{C}(Z^{{\rm R},i})^\ast\Bigr], 
\end{eqnarray*}
where $\mathcal{C}$ is the matrix whose $(\alpha,\beta)$ element is given by $\mathcal{C}_{\alpha,\beta}$. 
In order to estimate the right-hand side, the following lemma is useful \cite{KLS}: 

\begin{lemma}
Let $\hat{\mathcal{C}}, M,N$ be matrices. Then, the following bound is valid: 
\begin{equation}
\label{traceformula}
|{\rm Tr}\;\hat{\mathcal{C}}^\ast M\hat{\mathcal{C}}N^\ast|^2\le 
\left[{\rm Tr}\; \hat{\mathcal{C}}_{\rm L}M^\ast \hat{\mathcal{C}}_{\rm L}M\right]
\left[{\rm Tr}\; \hat{\mathcal{C}}_{\rm R}N \hat{\mathcal{C}}_{\rm R}N^\ast\right],
\end{equation}
where $\hat{\mathcal{C}}_{\rm L}:=(\hat{\mathcal{C}}\hat{\mathcal{C}}^\ast)^{1/2}$ 
and $\hat{\mathcal{C}}_{\rm R}:=(\hat{\mathcal{C}}^\ast\hat{\mathcal{C}})^{1/2}$. 
\end{lemma}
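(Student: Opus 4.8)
The plan is to reduce the claimed trace inequality to the Cauchy--Schwarz inequality for the Hilbert--Schmidt inner product $\langle A,B\rangle:={\rm Tr}\,(A^\ast B)$, after inserting a polar decomposition of $\hat{\mathcal{C}}$. First I would write $\hat{\mathcal{C}}=U\hat{\mathcal{C}}_{\rm R}$, where $\hat{\mathcal{C}}_{\rm R}=(\hat{\mathcal{C}}^\ast\hat{\mathcal{C}})^{1/2}\ge 0$ and $U$ is the partial isometry with initial space $\overline{{\rm ran}\,\hat{\mathcal{C}}_{\rm R}}$, so that $U^\ast U$ is the orthogonal projection onto that subspace and hence $(U^\ast U)\,\hat{\mathcal{C}}_{\rm R}=\hat{\mathcal{C}}_{\rm R}$. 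From $\hat{\mathcal{C}}\hat{\mathcal{C}}^\ast=U\hat{\mathcal{C}}_{\rm R}^2U^\ast$, the identity $(U\hat{\mathcal{C}}_{\rm R}U^\ast)^2=U\hat{\mathcal{C}}_{\rm R}(U^\ast U)\hat{\mathcal{C}}_{\rm R}U^\ast=U\hat{\mathcal{C}}_{\rm R}^2U^\ast$, and $U\hat{\mathcal{C}}_{\rm R}U^\ast\ge 0$, one identifies $\hat{\mathcal{C}}_{\rm L}=(\hat{\mathcal{C}}\hat{\mathcal{C}}^\ast)^{1/2}=U\hat{\mathcal{C}}_{\rm R}U^\ast$.

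Then, using $\hat{\mathcal{C}}^\ast=\hat{\mathcal{C}}_{\rm R}U^\ast$ and $\hat{\mathcal{C}}_{\rm R}=\hat{\mathcal{C}}_{\rm R}^{1/2}\hat{\mathcal{C}}_{\rm R}^{1/2}$, I would rewrite the left-hand side as
\[
{\rm Tr}\,(\hat{\mathcal{C}}^\ast M\hat{\mathcal{C}}N^\ast)
={\rm Tr}\,\bigl(\hat{\mathcal{C}}_{\rm R}\,(U^\ast M U)\,\hat{\mathcal{C}}_{\rm R}\,N^\ast\bigr)
={\rm Tr}\,(PQ),\qquad
P:=\hat{\mathcal{C}}_{\rm R}^{1/2}(U^\ast M U)\hat{\mathcal{C}}_{\rm R}^{1/2},\quad
Q:=\hat{\mathcal{C}}_{\rm R}^{1/2}N^\ast\hat{\mathcal{C}}_{\rm R}^{1/2}.
\]
Applying Cauchy--Schwarz in the form $|{\rm Tr}\,(PQ)|^2=|\langle P^\ast,Q\rangle|^2\le{\rm Tr}\,(PP^\ast)\,{\rm Tr}\,(Q^\ast Q)$ and evaluating the two factors by cyclicity of the trace together with $U\hat{\mathcal{C}}_{\rm R}U^\ast=\hat{\mathcal{C}}_{\rm L}$ yields
\[
{\rm Tr}\,(PP^\ast)={\rm Tr}\,\bigl(\hat{\mathcal{C}}_{\rm R}(U^\ast M U)\hat{\mathcal{C}}_{\rm R}(U^\ast M^\ast U)\bigr)={\rm Tr}\,(\hat{\mathcal{C}}_{\rm L}M\hat{\mathcal{C}}_{\rm L}M^\ast),\qquad
{\rm Tr}\,(Q^\ast Q)={\rm Tr}\,(\hat{\mathcal{C}}_{\rm R}N\hat{\mathcal{C}}_{\rm R}N^\ast).
\]
Finally, writing $R:=\hat{\mathcal{C}}_{\rm L}^{1/2}M\hat{\mathcal{C}}_{\rm L}^{1/2}$, one has ${\rm Tr}\,(\hat{\mathcal{C}}_{\rm L}M\hat{\mathcal{C}}_{\rm L}M^\ast)=\Vert R^\ast\Vert_{\rm HS}^2=\Vert R\Vert_{\rm HS}^2={\rm Tr}\,(\hat{\mathcal{C}}_{\rm L}M^\ast\hat{\mathcal{C}}_{\rm L}M)$, which turns the bound into exactly (\ref{traceformula}).

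The only genuinely delicate points are bookkeeping ones rather than conceptual ones: the matrix $\hat{\mathcal{C}}$ is in general rectangular (the left and right half-lattices need not have the same size) and need not have full rank, so one must use the partial-isometry version of the polar decomposition and the relation $(U^\ast U)\hat{\mathcal{C}}_{\rm R}=\hat{\mathcal{C}}_{\rm R}$ in place of a genuine unitary; and one must justify the interchange $M\leftrightarrow M^\ast$ under the trace, which is legitimate precisely because the Hilbert--Schmidt norm of an operator coincides with that of its adjoint. I expect this to be the part that needs the most care to state cleanly; everything else is the standard Cauchy--Schwarz manipulation. In the application of the lemma the operators $M$ and $N$ are in fact chosen among $S^{(1)}-\tilde f$, $iS^{(2)}$ and $S^{(3)}$ in the real $S^{(3)}$-eigenbasis, all of which are real symmetric, so there the distinction between $M$ and $M^\ast$ disappears altogether.
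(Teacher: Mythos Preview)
Your proof is correct and follows essentially the same route as the paper: polar decomposition $\hat{\mathcal{C}}=U\hat{\mathcal{C}}_{\rm R}$, the identification $\hat{\mathcal{C}}_{\rm L}=U\hat{\mathcal{C}}_{\rm R}U^\ast$, and the Cauchy--Schwarz inequality applied to the factorization ${\rm Tr}(\hat{\mathcal{C}}^\ast M\hat{\mathcal{C}}N^\ast)={\rm Tr}(PQ)$ with $P=\hat{\mathcal{C}}_{\rm R}^{1/2}U^\ast MU\hat{\mathcal{C}}_{\rm R}^{1/2}$ and $Q=\hat{\mathcal{C}}_{\rm R}^{1/2}N^\ast\hat{\mathcal{C}}_{\rm R}^{1/2}$. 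The only differences are cosmetic: the paper takes $U$ unitary (legitimate here since the two half-lattices have equal size, so $\hat{\mathcal{C}}$ is square) and bounds by ${\rm Tr}(J^\ast J)$ rather than ${\rm Tr}(PP^\ast)$, which lands directly on ${\rm Tr}(\hat{\mathcal{C}}_{\rm L}M^\ast\hat{\mathcal{C}}_{\rm L}M)$ without your extra $M\leftrightarrow M^\ast$ step.
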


\begin{proof}{Proof}
By using the polar decomposition, the matrix $\hat{\mathcal{C}}$ can be written as 
$$
\hat{\mathcal{C}}=U\hat{\mathcal{C}}_{\rm R}
$$
in terms of the unitary matrix $U$ and $\hat{\mathcal{C}}_{\rm R}$. 
Clearly, one has $\hat{\mathcal{C}}^\ast=\mathcal{C}_{\rm R}U^\ast$. 
By using this and the cyclicity of the trace, one has 
$$
{\rm Tr}\; \hat{\mathcal{C}}^\ast M\hat{\mathcal{C}}N^\ast
={\rm Tr}\; JK,
$$
where 
$$
J:=\hat{\mathcal{C}}_{\rm R}^{1/2}U^\ast M U\hat{\mathcal{C}}_{\rm R}^{1/2}
\quad \mbox{and}\quad 
K:=\hat{\mathcal{C}}_{\rm R}^{1/2}N^\ast\hat{\mathcal{C}}_{\rm R}^{1/2}. 
$$
The Schwarz inequality for traces yields 
\begin{eqnarray*}
|{\rm Tr}\;\hat{\mathcal{C}}^\ast M\hat{\mathcal{C}}N^\ast|^2&\le&
\Bigl[{\rm Tr}\; J^\ast J\Bigr]\Bigl[{\rm Tr}\; K^\ast K\Bigr]\\
&=&\Bigl[{\rm Tr}\; U\hat{\mathcal{C}}_{\rm R}U^\ast M^\ast U\hat{\mathcal{C}}_{\rm R}U^\ast M\Bigr]
\Bigl[{\rm Tr}\; \hat{\mathcal{C}}_{\rm R}N\hat{\mathcal{C}}_{\rm R}N^\ast\Bigr].
\end{eqnarray*}
For the matrix $U\hat{\mathcal{C}}_{\rm R}U^\ast$ in the right-hand side, 
one has $(U\hat{\mathcal{C}}_{\rm R}U^\ast)^2=U\hat{\mathcal{C}}_{\rm R}^2U^\ast
=\hat{\mathcal{C}}\hat{\mathcal{C}}^\ast$. 
This implies $U\hat{\mathcal{C}}_{\rm R}U^\ast=(\hat{\mathcal{C}}\hat{\mathcal{C}}^\ast)^{1/2}=\hat{\mathcal{C}}_{\rm L}$. 
Substituting this into the above right-hand side, the desired result (\ref{traceformula}) is obtained.   
\end{proof}

In order to apply the inequality (\ref{traceformula}) to the present case, 
we set $\hat{\mathcal{C}}=\mathcal{C}$, $M=X^{{\rm L},i}$ and $N=X^{{\rm R},i}$. As a consequence, one has 
\begin{eqnarray*}
\left|{\rm Tr}\;\mathcal{C}^\ast X^{{\rm L},i}\mathcal{C}(X^{{\rm R},i})^\ast \right|
&\le& \Bigl[{\rm Tr}\;\mathcal{C}_{\rm L}X^{{\rm L},i}\mathcal{C}_{\rm L}(X^{{\rm L},i})^\ast\Bigr]^{1/2}
\Bigl[{\rm Tr}\;\mathcal{C}_{\rm R}X^{{\rm R},i}\mathcal{C}_{\rm R}(X^{{\rm R},i})^\ast\Bigr]^{1/2}\\
&\le&\frac{1}{2}{\rm Tr}\;\mathcal{C}_{\rm L}X^{{\rm L},i}\mathcal{C}_{\rm L}(X^{{\rm L},i})^\ast
+\frac{1}{2}{\rm Tr}\;\mathcal{C}_{\rm R}X^{{\rm R},i}\mathcal{C}_{\rm R}(X^{{\rm R},i})^\ast,
\end{eqnarray*}
where $\mathcal{C}_{\rm L}:=(\mathcal{C}\mathcal{C}^\ast)^{1/2}$ 
and $\mathcal{C}_{\rm R}:=(\mathcal{C}^\ast\mathcal{C})^{1/2}$, 
and we have used $2ab\le a^2+b^2$ for $a,b\in \re$. 
Clearly, a similar inequality holds for matrices $Y$ and $Z$. 
Consequently, the energy expectation value for $\tilde{f}=\overline{f}$ is estimated from below as  
\begin{eqnarray}
\label{E0lowerbound}
\tilde{E}_0^{(\Lambda)}(B,\overline{f})&\ge&{\rm Tr}\; \mathcal{C}_{\rm L}^2 H^{\rm L}
+{\rm Tr}\; \mathcal{C}_{\rm R}^2H^{\rm R}\nonumber\\
&-&\frac{1}{2}\sum_i {\rm Tr}\;\Bigl[\mathcal{C}_{\rm L}X^{{\rm L},i}\mathcal{C}_{\rm L}(X^{{\rm L},i})^\ast
+\mathcal{C}_{\rm L}Y^{{\rm L},i}\mathcal{C}_{\rm L}(Y^{{\rm L},i})^\ast
+\mathcal{C}_{\rm L}Z^{{\rm L},i}\mathcal{C}_{\rm L}(Z^{{\rm L},i})^\ast\Bigr]\nonumber\\
&-&\frac{1}{2}\sum_i {\rm Tr}\;\Bigl[\mathcal{C}_{\rm R}X^{{\rm R},i}\mathcal{C}_{\rm R}(X^{{\rm R},i})^\ast
+\mathcal{C}_{\rm R}Y^{{\rm R},i}\mathcal{C}_{\rm R}(Y^{{\rm R},i})^\ast
+\mathcal{C}_{\rm R}Z^{{\rm R},i}\mathcal{C}_{\rm R}(Z^{{\rm R},i})^\ast\Bigr].\nonumber\\
\end{eqnarray}

Let $f_x^{\rm R}$ be the function such that the value of $f_x^{\rm R}$ 
is equal to $\overline{f}_x$ for the right site $x\in\Lambda^{\rm R}$ 
and that the value of $f_x^{\rm R}$ for the left site $x\in\Lambda^{\rm L}$ 
is equal to the reflection of $\overline{f}_x$ with respect to the planes $\Pi$.
Similarly, the function $f_x^{\rm L}$ is defined. 
Since $\overline{f}_x\ne \overline{f}_y$ for at least one crossing bond, at least one choice, $f^{\rm R}$ or $f^{\rm L}$,  
has the property that it has strictly fewer bonds with $\tilde{f}_x\ne \tilde{f}_y$ than 
does the original function $\overline{f}$.  

Define 
$$
\Psi^{\rm L}:=\sum_{\alpha,\beta}(\mathcal{C}_{\rm L})_{\alpha,\beta}\Psi_\alpha^{\rm L}\otimes\Psi_\beta^{\rm R}
$$
and 
$$
\Psi^{\rm R}:=\sum_{\alpha,\beta}(\mathcal{C}_{\rm R})_{\alpha,\beta}\Psi_\alpha^{\rm L}\otimes\Psi_\beta^{\rm R}
$$
in terms of the matrices $\mathcal{C}_{\rm L}$ and $\mathcal{C}_{\rm R}$. From the definitions, one can easily show 
$\Vert\Psi^{\rm L}\Vert=\Vert\Psi^{\rm R}\Vert=\Vert\Psi\Vert$. 
Then, the right-hand side of (\ref{E0lowerbound}) can be written in terms of the energy expectation values with respect to 
$\Psi^{\rm L}$ and $\Psi^{\rm R}$. Namely, one has 
\begin{eqnarray*}
\tilde{E}_0^{(\Lambda)}(B,\overline{f})&\ge& 
\frac{1}{2}\langle \Psi^{\rm L},\tilde{H}_{\rm p}^{(\Lambda)}(B,f^{\rm L})\Psi^{\rm L}\rangle
+\frac{1}{2}\langle \Psi^{\rm R},\tilde{H}_{\rm p}^{(\Lambda)}(B,f^{\rm R})\Psi^{\rm R}\rangle \\
&\ge& \frac{1}{2}\tilde{E}_0^{(\Lambda)}(B,f^{\rm L})+\frac{1}{2}\tilde{E}_0^{(\Lambda)}(B,f^{\rm R}). 
\end{eqnarray*}
Recall that the function $\overline{f}$ has been chosen so that the energy $\tilde{E}_0^{(\Lambda)}(B,\overline{f})$ 
is a minimum. Therefore, the above inequality implies that 
both of $\tilde{E}_0^{(\Lambda)}(B,f^{\rm L})$ and $\tilde{E}_0^{(\Lambda)}(B,f^{\rm R})$ must take 
the same minimum value. This contradicts the minimality of the number of bonds such that $\overline{f}_x\ne\overline{f}_y$.   
Thus, the inequality (\ref{energytildefbound}) is proved.

\Section{Proof of the Bound (\ref{chibound})}
\label{appen:chibound}

As mentioned at the beginning of Appendix~\ref{RP}, we do not assume the uniqueness of the ground state. 

In order to prove the bound (\ref{chibound}), we consider  
$$
H_{\rm p}^{(\Lambda)}(B,\lambda f)=H_{\rm p}^{(\Lambda)}(B)+\lambda H_1'+\frac{\lambda^2}{2}H_2',
$$
where $\lambda$ is a small real parameter, 
$$
H_1'=\sum_{\{x,y\}\subset\Lambda:|x-y|=1}\left[S_x^{(1)}+S_y^{(1)}\right](f_x+f_y)
$$
and 
$$
H_2':=\sum_{\{x,y\}\subset\Lambda:|x-y|=1}(f_x+f_y)^2.
$$
{From} the inequality (\ref{E0bound}), one has 
\begin{equation}
\label{E0bound2}
\frac{\lambda^2}{2}H_2'
+\frac{1}{2\pi iq^{(\Lambda)}(B)}\oint dz\; {\rm Tr}\;[H_{\rm p}^{(\Lambda)}(B)+\lambda H_1']
\frac{1}{z-H_{\rm p}^{(\Lambda)}(B)-\lambda H_1'}\ge E_0^{(\Lambda)}(B) 
\end{equation}
for a sufficiently small $|\lambda|$. 
The second term is calculated as \cite{Kato} 
\begin{eqnarray}
& &\frac{1}{2\pi iq^{(\Lambda)}(B)}\oint dz\; {\rm Tr}\;[H_{\rm p}^{(\Lambda)}(B)+\lambda H_1']
\frac{1}{z-H_{\rm p}^{(\Lambda)}(B)-\lambda H_1'}\nonumber \\
&=&\frac{1}{2\pi iq^{(\Lambda)}(B)}\oint dz\; {\rm Tr}\;[H_{\rm p}^{(\Lambda)}(B)+\lambda H_1']
\left[\frac{1}{z-H_{\rm p}^{(\Lambda)}(B)}\right.\nonumber\\
&+&\frac{1}{z-H_{\rm p}^{(\Lambda)}(B)}\lambda H_1'\frac{1}{z-H_{\rm p}^{(\Lambda)}(B)} 
+\left. \frac{1}{z-H_{\rm p}^{(\Lambda)}(B)}\lambda H_1'\frac{1}{z-H_{\rm p}^{(\Lambda)}(B)}\lambda H_1'
\frac{1}{z-H_{\rm p}^{(\Lambda)}(B)}\right]\nonumber\\
&+&{\cal O}(\lambda^3)\nonumber \\
&=& E_0^{(\Lambda)}(B)+\lambda E_{0,1}^{(\Lambda)}(B)+\lambda^2 E_{0,2}^{(\Lambda)}(B)+{\cal O}(\lambda^3), 
\end{eqnarray}
where 
\begin{eqnarray}
E_{0,1}^{(\Lambda)}(B)&:=&\frac{1}{2\pi iq^{(\Lambda)}(B)}\oint dz\; {\rm Tr}\;
\left[H_1'\frac{1}{z-H_{\rm p}^{(\Lambda)}(B)}\right.\nonumber\\
&+&\left.H_{\rm p}^{(\Lambda)}(B)\frac{1}{z-H_{\rm p}^{(\Lambda)}(B)}H_1'
\frac{1}{z-H_{\rm p}^{(\Lambda)}(B)}\right]\nonumber \\
&=&\frac{1}{2\pi iq^{(\Lambda)}(B)}\oint dz\; z\; {\rm Tr}\;\frac{1}{z-H_{\rm p}^{(\Lambda)}(B)}H_1'
\frac{1}{z-H_{\rm p}^{(\Lambda)}(B)}
\end{eqnarray}
and 
\begin{eqnarray}
E_{0,2}^{(\Lambda)}(B)&:=&\frac{1}{2\pi iq^{(\Lambda)}(B)}\oint dz\; {\rm Tr}\;
\left[H_{\rm p}^{(\Lambda)}(B)\frac{1}{z-H_{\rm p}^{(\Lambda)}(B)}H_1'\frac{1}{z-H_{\rm p}^{(\Lambda)}(B)}H_1'
\frac{1}{z-H_{\rm p}^{(\Lambda)}(B)}\right.\nonumber \\
&+&\left. H_1'\frac{1}{z-H_{\rm p}^{(\Lambda)}(B)}H_1'\frac{1}{z-H_{\rm p}^{(\Lambda)}(B)}\right]\nonumber \\
&=&\frac{1}{2\pi iq^{(\Lambda)}(B)}\oint dz\; z\; {\rm Tr}\;\frac{1}{z-H_{\rm p}^{(\Lambda)}(B)}H_1'\frac{1}{z-H_{\rm p}^{(\Lambda)}(B)}H_1'
\frac{1}{z-H_{\rm p}^{(\Lambda)}(B)}.\nonumber 
\end{eqnarray}
One can easily show that 
$$
E_{0,1}^{(\Lambda)}(B)=\omega_B^{(\Lambda)}(H_1')
$$
and  
$$
E_{0,2}^{(\Lambda)}(B)=\omega_B^{(\Lambda)}(H_1'[1-P_0^{(\Lambda)}(B)][E_0^{(\Lambda)}(B)-H_{\rm p}^{(\Lambda)}(B)]^{-1}H_1').
$$
Substituting these into the inequality (\ref{E0bound2}), one obtains 
\begin{eqnarray}
& &\lambda \omega_B^{(\Lambda)}(H_1')+ \frac{\lambda^2}{2}\sum_{\{x,y\}\subset\Lambda:|x-y|=1}(f_x+f_y)^2 \nonumber\\
& &-\lambda^2\omega_B^{(\Lambda)}(H_1'[1-P_0^{(\Lambda)}(B)][H_{\rm p}^{(\Lambda)}(B)-E_0^{(\Lambda)}(B)]^{-1}H_1')\ge 0.
\end{eqnarray}
This implies 
$$
\omega_B^{(\Lambda)}(H_1')=0
$$
and 
$$
\frac{1}{2}\sum_{\{x,y\}\subset\Lambda:|x-y|=1}(f_x+f_y)^2
-\omega_B^{(\Lambda)}(H_1'[1-P_0^{(\Lambda)}(B)][H_{\rm p}^{(\Lambda)}(B)-E_0^{(\Lambda)}(B)]^{-1}H_1')\ge 0.
$$
The latter is nothing but the desired bound (\ref{chibound}). 

\Section{Proof of Lemma~\ref{lemma:LROtranscorr}}
\label{Appendix:prooflem:LROtranscorr}

Following the method in Sec.~5 in \cite{DLS}, we will give the proof. 
The method in \cite{DLS} is slightly different from that in \cite{FSS} 
although the basic idea by using Bochner's theorem and Fourier transformation is the same.

To begin with, we introduce the Fourier transform of the spin operators as 
\begin{equation}
\hat{S}_p^{(i)}:=|\Lambda|^{-1/2}\sum_{x\in\Lambda}e^{-ipx}S_x^{(i)}, \quad i=1,2,3,
\end{equation}
with the wavevector $p$. By using the translational invariance of the present system, one has 
\begin{equation}
\label{corrtrans}
\langle S_x^{(2)}S_y^{(2)}\rangle_{B,\beta}^{(\Lambda)}
=\frac{1}{|\Lambda|}\sum_p e^{ip(x-y)}\langle \hat{S}_p^{(2)}\hat{S}_{-p}^{(2)}\rangle_{B,\beta}^{(\Lambda)}.
\end{equation}
In order to estimate this right-hand side, we introduce three quantities as  
\begin{equation}
\label{gp}
g_p^{(\Lambda)}(B,\beta):=\frac{1}{2}\left[\langle \hat{S}_p^{(2)}\hat{S}_{-p}^{(2)}\rangle_{B,\beta}^{(\Lambda)}+
\langle \hat{S}_{-p}^{(2)}\hat{S}_{p}^{(2)}\rangle_{B,\beta}^{(\Lambda)}\right],
\end{equation}
$$
b_p^{(\Lambda)}(B,\beta):=\frac{1}{Z_{B,\beta}^{(\Lambda)}}\int_0^1 ds\; 
{\rm Tr}\left[\hat{S}_{-p}^{(2)}\; e^{-s \beta H_{\rm p}^{(\Lambda)}(B)}\; \hat{S}_p^{(2)}\; 
e^{-(1-s)\beta H_{\rm p}^{(\Lambda)}(B)}\right]
$$
and 
\begin{equation}
\label{cp}
c_p^{(\Lambda)}(B,\beta):=\langle[\hat{S}_{-p}^{(2)},[H_{\rm p}^{(\Lambda)}(B),\hat{S}_p^{(2)}]]
\rangle_{B,\beta}^{(\Lambda)}.
\end{equation}
Although the Hamiltonian $H_{\rm p}^{(\Lambda)}(B)$ includes the term of the staggered magnetic field,  
the method of the reflection positivity \cite{DLS} is applicable to the present system. 
As a result, the function $b_p^{(\Lambda)}(B,\beta)$ satisfies the same bound \cite{DLS} 
as in the case of the zero magnetic field, i.e., one has  
\begin{equation}
\label{bpbound}
b_p^{(\Lambda)}(B,\beta)\le (2\beta \mathcal{E}_p')^{-1},
\end{equation}
where 
$$
\mathcal{E}_p':=d+\sum_{i=1}^d \cos p^{(i)}.
$$
In addition to this, the function $b_p^{(\Lambda)}(B,\beta)$ satisfies \cite{FB} 
\begin{equation}
\label{bpboundgpcp}
b_p^{(\Lambda)}(B,\beta)\ge \frac{4[g_p^{(\Lambda)}(B,\beta)]^2}{4g_p^{(\Lambda)}(B,\beta)+\beta c_p^{(\Lambda)}(B,\beta)}, 
\end{equation}
where we have used the inequalities (34) and (A10) in \cite{DLS}, and 
$$
t^{-1}(1-e^{-t})\ge (1+t)^{-1} \quad \mbox{for \ } t>0.
$$
Using the inequality (\ref{bpboundgpcp}), the function $g_p^{(\Lambda)}(B,\beta)$ is estimated as  
\begin{equation}
\label{gpbound}
g_p^{(\Lambda)}(B,\beta)\le 
\frac{1}{2}\left\{b_p^{(\Lambda)}(B,\beta)+\sqrt{[b_p^{(\Lambda)}(B,\beta)]^2
+\beta b_p^{(\Lambda)}(B,\beta) c_p^{(\Lambda)}(B,\beta)}\right\}. 
\end{equation}
The function $c_p^{(\Lambda)}(B,\beta)$ satisfies \cite{DLS}  
\begin{equation}
\label{cpbound}
c_p^{(\Lambda)}(B,\beta)\le 4 S^2\mathcal{E}_p+{\rm Const.}|B|,
\end{equation}
where $S$ is the magnitude of spin, and  
$$
\mathcal{E}_p:=d-\sum_{i=1}^d \cos p^{(i)}.
$$
Combining this, (\ref{bpbound}) and (\ref{gpbound}), one obtains 
$$
g_p^{(\Lambda)}(B,\beta)\le (2\beta \mathcal{E}_p')^{-1}+\sqrt{\frac{S^2 \mathcal{E}_p+{\rm Const.}|B|}{2\mathcal{E}_p'}}. 
$$
{From} the definition (\ref{gp}) of $g_p^{(\Lambda)}(B,\beta)$, we have 
\begin{equation}
\label{omegaCorrTrans}
2g_p^{(\Lambda)}(B,\beta)=\langle \hat{S}_p^{(2)}\hat{S}_{-p}^{(2)}\rangle_{B,\beta}^{(\Lambda)}+
\langle \hat{S}_{-p}^{(2)}\hat{S}_{p}^{(2)}\rangle_{B,\beta}^{(\Lambda)}\ge 
\langle \hat{S}_p^{(2)}\hat{S}_{-p}^{(2)}\rangle_{B,\beta}^{(\Lambda)}, 
\end{equation}
where we have used $(\hat{S}_{-p}^{(2)})^\ast=\hat{S}_{p}^{(2)}$. 
Combining these two inequalities, one has  
\begin{equation}
\label{SpCorrbound}
\langle \hat{S}_p^{(2)}\hat{S}_{-p}^{(2)}\rangle_{B,\beta}^{(\Lambda)}\le (\beta \mathcal{E}_p')^{-1}
+\sqrt{\frac{2S^2 \mathcal{E}_p+{\rm Const.}|B|}{\mathcal{E}_p'}}. 
\end{equation}
This right-hand side is integrable with respect to the wavevector $p$ in three or 
higher dimensions except for the singularity at $p=(\pi,\ldots,\pi)$. 

We write 
$$
F(x-y)=\rho_0(S_x^{(2)}S_y^{(2)}),
$$
where the state $\rho_0$ is given by (\ref{def:rho0}) with (\ref{rhoB}). 
Let $f_x$ be a complex-valued function on $\ze^d$ with a compact support. 
Then, one has 
$$
\sum_{x,y}\overline{f_x}\langle S_x^{(2)}S_y^{(2)}\rangle_{B,\beta}^{(\Lambda)}f_y
=\langle \sum_x \overline{f_x}S_x^{(2)}\sum_y f_yS_y^{(2)}\rangle_{B,\beta}^{(\Lambda)}\ge 0. 
$$
Therefore, in the double limit, $B\searrow 0$ and $\Lambda\nearrow\ze^d$, one can define 
the inner product, 
$$
(f,g)=\sum_{x,y}\overline{f_x}F(x-y)g_y, 
$$
for two functions, $f_x$ and $g_y$, with a compact support. By using Bochner's theorem, the function $F(x-y)$ 
having this property can be written  
\begin{equation}
\label{CorrS2S2inf}
\rho_0(S_x^{(2)}S_y^{(2)})=F(x-y)=\int e^{ip(x-y)}dG_p
\end{equation}
in terms of a measure $G_p$ on the momentum space $p$. (See, for example, Theorem~IX.9 in the book \cite{RS}.)
Clearly, one has  
\begin{equation}
\label{ffrhoCorr}
\sum_{x,y}\overline{f_x}\rho_0(S_x^{(2)}S_y^{(2)})f_y
=\int |\hat{f}(p)|^2dG_p
\end{equation}
for a function $f_x$ with a compact support, where 
$$
\hat{f}(p):=\sum_x e^{-ipx}f_x.
$$

On the other hand, from (\ref{corrtrans}) and (\ref{SpCorrbound}), one has 
\begin{eqnarray*}
\lim_{B\searrow 0}\lim_{\Lambda\nearrow\ze^d}\sum_{x,y}\overline{f_x}f_y
\langle S_x^{(2)}S_y^{(2)}\rangle_{B,\beta}^{(\Lambda)}
&=&\lim_{B\searrow 0}\lim_{\Lambda\nearrow\ze^d}
\frac{1}{|\Lambda|}\sum_p |\hat{f}(p)|^2\langle \hat{S}_p^{(2)}\hat{S}_{-p}^{(2)}\rangle_{B,\beta}^{(\Lambda)}\\
&\le& \frac{1}{(2\pi)^d}\int dp^{(1)}\cdots dp^{(d)} |\hat{f}(p)|^2\left[(\beta \mathcal{E}_p')^{-1}
+\sqrt{\frac{2S^2 \mathcal{E}_p}{\mathcal{E}_p'}}\right]
\end{eqnarray*}
whenever $\hat{f}(p)=0$ for $p=(\pi,\ldots,\pi)$. Here, the double limit is the same as the weak$^\ast$-limit 
for the state $\rho_0$. 
Since this left-hand side is equal to the left-hand side of (\ref{ffrhoCorr}) from the definition of the state $\rho_0$, 
one notices that \cite{FSS,DLS} the measure $G_p$ consists of a delta measure at $p=(\pi,\ldots,\pi)$ and 
a absolutely continuous part in $p$. 
Therefore, the application of Riemann-Lebesgue theorem to the right-hand side of 
the correlation function of (\ref{CorrS2S2inf}) yields (\ref{LROtranscorr}). 
Namely, the contribution of the absolutely continuous part of the measure $G_p$ is vanishing 
in the limit $|x-y|\rightarrow\infty$.  
The nonvanishing contribution may come from the delta measure at the singularity.

\bigskip\bigskip\bigskip

\noindent
{\bf Acknowledgements:} I would like to thank Akinori Tanaka and Hal Tasaki for helpful comments and discussions. 


\begin{thebibliography}{99}
\bibitem{Bijl} A. Bijl, {\it The Lowest Wave Function of the Symmetrical Many Particles System}, 
Physica {\bf 7} (1940) 869--886. 

\bibitem{Feynman} R. P. Feynman, {\it Atomic Theory of the Two-Fluid Model of Liquid Helium}, 
Phys. Rev. {\bf 94} (1954) 262--277. 

\bibitem{Stringari} S. Stringari, {\it Spin Excitations and Sum Rules in the Heisenberg Antiferromagnet}, 
Phys. Rev. B {\bf 49} (1994) 6710--6717.

\bibitem{Momoi} T. Momoi, {\it An Upper Bound for the Spin-Wave Spectrum of the Heisenberg Antiferromagnet}, 
J. Phys. Soc. Jpn. {\bf 63} (1994) 2507--2510. 

\bibitem{Momoi2} T. Momoi, {\it Quantum Fluctuations in Quantum Lattice Systems with Continuous Symmetry}, 
J. Stat. Phys. {\bf 85} (1996) 193--210. 

\bibitem{Nambu} Y. Nambu, {\it Axial Vector Current Conservations in Weak Interactions}, 
Phys. Rev. Lett. {\bf 4} (1960) 380--382. 

\bibitem{NJL} Y. Nambu, and G. Jona-Lasinio, {\it Dynamical Model of Elementary Particles Based on 
an Analogy with Superconductivity. I}, 
Phys. Rev. {\bf 122} (1961) 345--358. 

\bibitem{Goldstone} J. Goldstone, {\it Field Theories with $\langle\!\langle$Superconductor$\rangle\!\rangle$ Solutions}, 
Nuovo Cimento {\bf 19} (1961) 154--164. 

\bibitem{GSW} J. Goldstone, A. Salam, and S. Weinberg, {\it Broken Symmetries}, 
Phys. Rev. {\bf 127} (1962) 965--970. 

\bibitem{KomaTasaki2} T. Koma, and H. Tasaki, {\it Symmetry Breaking and Finite-Size Effects in 
Quantum Many-Body Systems}, J. Stat. Phys. {\bf 76} (1994) 745--803. 

\bibitem{Tasaki} H. Tasaki, {\it Long-Range Order, ``Tower" of States, and Symmetry Breaking in Lattice Quantum Systems}, 
J. Stat. Phys. {\bf 174} (2019), 735--761.   




\bibitem{DLS} F. J. Dyson, E. H. Lieb, and B. Simon, {\it Phase Transitions in Quantum Spin Systems with 
Isotropic and Nonisotropic Interactions}, J. Stat. Phys. {\bf 18} (1978) 335--383.  

\bibitem{KLS} T. Kennedy, E. H. Lieb, and B. S. Shastry, {\it Existence of N\'eel Order in Some 
Spin-1/2 Heisenberg Antiferromagnets}, J. Stat. Phys. {\bf 53} (1988) 1019--1030.

\bibitem{KLS2} T. Kennedy, E. H. Lieb, and B. S. Shastry, {\it The XY Model has long-range order for 
all spins and all dimensions greater than one}, 
Phys. Rev. Lett. {\bf 61} (1988) 2582--2584.  

\bibitem{KuboKishi} K. Kubo, and T. Kishi, {\it Existence of long-range order in the XXZ model}, 
Phys. Rev. Lett. {\bf 61} (1988) 2585--2587.

\bibitem{ReedSimon} M. Reed, and B. Simon, {\it Methods of Modern Mathematical Physics I: Functional Analysis}, 
Academic Press, New York, 1972. 

\bibitem{BR} O. Bratteli, and D. W. Robinson, {\it Operator Algebra and Quantum Statistical Mechanics. Volume~2}, 
Springer-Verlag, 1997.  

\bibitem{BR1} O. Bratteli, and D. W. Robinson, {\it Operator Algebra and Quantum Statistical Mechanics. Volume~1}, 
Springer-Verlag, 1979.

\bibitem{Wreszinski} W. F. Wreszinski, {\it Goldstone's theorem for quantum spin systems of finite range}, 
J. Math. Phys. {\bf 17} (1976) 109--111.

\bibitem{LFPW} L. Landau, J. Fernado Perez, and W. F. Wreszinski, {\it Energy gap, clustering, and 
the Goldstone theorem in statistical mechanics}, J. Stat. Phys. {\bf 26} (1981) 755--766. 

\bibitem{Wreszinski2} W. F. Wreszinski, {\it Charges and symmetries in quantum theories without locality}, 
Fortschr. Phys. {\bf 35} (1987) 379--413.

\bibitem{KK} T. Kennedy, and C. King, {\it Spontaneous Symmetry Breakdown in the Abelian Higgs Model}, 
Commun. Math. Phys. {\bf 104} (1986) 327--347. 

\bibitem{Martin} P. A. Martin, {\it A Remark on the Goldstone theorem in statistical mechanics}, 
Nuovo Cimento {\bf 68} (1982) 302--314. 

\bibitem{BKR} O. Bratteli, A. Kishimoto, and D. W. Robinson, {\it Ground States of Quantum Spin Systems}, 
Commun. Math. Phys. {\bf 64} (1978) 41--48. 

\bibitem{KHvdL} T. A. Kaplan, P. Horsch, and W. von der Linden, {\it Order Parameter in Quantum Antiferromagnets}, 
J. Phys. Soc. Jpn. {\bf 58} (1989) 3894--3897. 

\bibitem{KomaTasaki} T. Koma, and H. Tasaki, {\it Symmetry Breaking in Heisenberg Antiferromagnets}, 
Commun. Math. Phys. {\bf 158} (1993) 191--214. 

\bibitem{AL} I. Affleck, and E. H. Lieb, {\it A Proof of Part of Haldane's Conjecture on Spin Chains}, 
Lett. Math. Phys. {\bf 12} (1986) 57--69. 

\bibitem{Koma} T. Koma, {\it Spectral Gaps of Quantum Hall Systems with Interactions}, 
J. Stat. Phys. {\bf 99} (2000) 313--381. 

\bibitem{FSS} J. Fr\"ohlich, B. Simon, and T. Spencer, {\it Infrared bounds, phase transitions and continuous 
symmetry breaking},  
Commun. Math. Phys. {\bf 50} (1976) 79--95.

\bibitem{Hohenberg} P. C. Hohenberg, {\it Existence of Long-Range Order in One and Two Dimensions}, 
Phys. Rev. {\bf 158} (1967) 383--386. 

\bibitem{MW} N. D. Mermin, and H. Wagner, {\it Absence of ferromagnetism or Antiferromagnetism 
in One- or Two-Dimensional Isotropic Heisenberg Models}, Phys. Rev. Lett. {\bf 17} (1966) 1133--1136. 

\bibitem{JNFP} E. Jord$\tilde{{\rm a}}$o Neves, and J. Fernando Perez, {\it Long Range Order in 
the Ground State of Two-Dimensional Antiferromagnets}, 
Phys. Lett. {\bf 114}A (1986) 331--333. 

\bibitem{Shastry} B. S. Shastry, {\it Bounds for Correlation functions of the Heisenberg Antiferromagnet}, 
J. Phys. A: Math. Gen. {\bf 25} (1992) L249--L253. 

\bibitem{TTI} K. Tanaka, K. Takeda, and T. Idogaki, {\it Absence of spontaneous symmetry breaking in the 
ground state of one-dimensional spin-orbital model}, 
J. Magn. Mag. Matter. {\bf 272--276} (2004) 908--909.  

\bibitem{LM} E. Lieb, and D. Mattis, {\it Ordering Energy Levels of Interacting Spin Systems}, 
J. Math. Phys. {\bf 3} (1962) 749--751. 

\bibitem{Kato} T. Kato, {\it Perturbation Theory for Linear Operators}, 
2nd ed. Springer, Berlin, Heidelberg, New York, 1980.

\bibitem{FB} H. Falk, and L. W. Bruch, {\it Susceptibility and Fluctuation}, 
Phys. Rev. {\bf 180} (1969) 442--444. 

\bibitem{RS} M. Reed, and B. Simon, {\it Methods of Modern Mathematical Physics II: 
Fourier Analysis, Self-Adjointness}, Academic Press, New York, 1975. 


\end{thebibliography}
\end{document}